\renewcommand{\geq}{\geqslant}
\renewcommand{\leq}{\leqslant}
\newcommand{\OO}{\mathcal{O}}
\newcommand{\R}{\mathbb R}
\newcommand{\N}{\mathbb N}
\renewcommand{\S}{\mathcal{S}}
\newcommand{\cB}{\mathcal B}
\newcommand{\Fb}{\mathbf F}
\newcommand{\tb}{\mathbf t}
\newcommand{\nb}{\mathbf n}
\newcommand{\Lap}{\mathscr L}
\newcommand{\dd}{\mathop{}\!{\mathrm d}}
\newcommand{\ii}{\mathrm i}
\newcommand{\ee}{\mathrm e}
\def\XXint#1#2#3{{\setbox0=\hbox{$#1{#2#3}{\int}$ }
\vcenter{\hbox{$#2#3$ }}\kern-.6\wd0}}
\newtheorem{theorem}{Theorem}[section]
\newtheorem{lemma}[theorem]{Lemma}
\newtheorem{proposition}[theorem]{Proposition}
\newtheorem{corollary}[theorem]{Corollary}
\theoremstyle{remark}
\newtheorem{remark}[theorem]{Remark}
\newcommand{\corner}{\Gamma_{\beta}(L,\ell)}
\newcommand{\wed}{W_{\beta}(L)}
\newcommand{\E}{\mathcal{E}}
\newcommand{\ecorr}{E_{\mathrm{corr}}}
\newcommand{\ecor}{E_{\mathrm{corner}}}
\newcommand{\ecorl}{E^{\mathrm{corn}}_{L,\ell}}
\newcommand{\domcor}{\mathscr{D}^{\mathrm{corn}}_{L,\ell}}
\newcommand{\ecordl}{E^{\mathrm{D}}_{L,\ell}}
\newcommand{\ecord}{E^{\mathrm{D}}_{\ell}}
\newcommand{\edir}{E^{\mathrm{D}}_{\mathrm{corner}}}
\newcommand{\domcord}{\mathscr{D}^{\mathrm{D}}_{L,\ell}}
\newcommand{\esec}{E_{\mathrm{sector}}}
\newcommand{\fsec}{\mathcal{E}_{\mathrm{sector}}}
\newcommand{\mcor}{\psi^{\mathrm{corn}}_{\mu}}
\newcommand{\msec}{\psi^{\mathrm{sec}}_{\mu}}
\newcommand{\mdirl}{\psi^{\mathrm{D}}_{L,\ell}}
\newcommand{\ewedl}{E^{\mathrm{wedge}}_L}
\newcommand{\domwed}{\mathscr{D}^{\mathrm{wedge}}_{L}}
\newcommand{\mwed}{\psi^{\mathrm{wedge}}_{\mu}}
\newcommand{\Hc}{H_{\mathrm{c}1}}
\newcommand{\Hcc}{H_{\mathrm{c}2}}
\newcommand{\Hccc}{H_{\mathrm{c}3}}
\newcommand{\Hstar}{H_{\mathrm{corner}}}
\numberwithin{equation}{section}
\newcommand{\bdm}{\begin{displaymath}}
\newcommand{\edm}{\end{displaymath}}
\newcommand{\bay}{\begin{array}{c}}
\newcommand{\eay}{\end{array}}
\newcommand{\ben}{\begin{enumerate}}
\newcommand{\een}{\end{enumerate}}
\newcommand{\beq}{\begin{equation}}
\newcommand{\eeq}{\end{equation}}
\newcommand{\beqn}{\begin{eqnarray}}
\newcommand{\eeqn}{\end{eqnarray}}
\newcommand{\bml}[1]{\begin{multline} #1 \end{multline}}
\newcommand{\bmln}[1]{\begin{multline*} #1 \end{multline*}}
\newcommand{\lf}{\left}
\newcommand{\ri}{\right}
\newcommand{\xv}{\mathbf{x}}
\newcommand{\pv}{\mathbf{p}}
\newcommand{\nv}{\mathbf{n}}
\newcommand{\fv}{\mathbf{F}}
\newcommand{\eps}{\varepsilon}
\newcommand{\diff}{\mathrm{d}}
\newcommand{\disp}{\displaystyle}
\newcommand{\tx}{\textstyle}
\newcommand{\gle}{E^{\mathrm{GL}}}
\newcommand{\glfe}{\mathcal{E}_{\eps}^{\mathrm{GL}}}
\newcommand{\glm}{\psi^{\mathrm{GL}}}
\newcommand{\aav}{\mathbf{A}}
\newcommand{\theo}{\Theta_0}
\newcommand{\aavm}{\mathbf{A}^{\mathrm{GL}}}
\newcommand{\fonea}{\mathcal{E}^{\mathrm{1D}}_{\alpha}}
\newcommand{\foneal}{\mathcal{E}^{\mathrm{1D}}_{\alpha, \ell}}
\newcommand{\eonea}{{E}^{\mathrm{1D}}_{\alpha}}
\newcommand{\eoneal}{{E}^{\mathrm{1D}}_{\alpha, \ell}}
\newcommand{\domone}{\mathscr{D}^{\mathrm{1D}}}
\newcommand{\eones}{E^{\mathrm{1D}}_{\star}}
\newcommand{\eonel}{E^{\mathrm{1D}}_{\ell}}
\newcommand{\glee}{E^{\mathrm{GL}}_{\eps}}
\title[{Ginzburg-Landau Energy of Corners}]{On the Ginzburg-Landau Energy of Corners}
\author[M. Correggi]{Michele Correggi}
\address{Dipartimento di Matematica,  Politecnico di Milano,   P.zza Leonardo da Vinci,  32,  20133, Milan,  Italy.}
\email{michele.correggi@gmail.com}
\author[E.L. Giacomelli]{Emanuela L.  Giacomelli}
\address{Department of Mathematics, LMU Münich, Theresienstr.  39,  80333,  Munich,  Germany.}
\email{emanuela.giacomelli@math.lmu.de}
\author[A.  Kachmar]{Ayman Kachmar}
\address{The Chinese University of Hong Kong (Shenzhen),  Shenzhen,  China.}
\email{akachmar@cuhk.edu.cn}
\begin{document}
\begin{abstract}
It is a well known fact that the geometry of a superconducting sample influences the distribution of the surface superconductivity for strong applied magnetic fields. For instance, the presence of corners  induces geometric terms described through effective models in sector-like regions. We study the connection between two effective models for the offset of superconductivity and for surface superconductivity introduced in \cite{BNF} and \cite{CG2}, respectively. We prove that the transition between the two models is continuous with respect to the magnetic field strength, and, as a byproduct, we deduce the existence of a minimizer at the threshold for both effective problems. Furthermore, as a consequence, we disprove a conjecture stated in \cite{CG2} concerning the dependence of the corner energy on the angle close to the threshold.
\end{abstract}
\maketitle

\section{Introduction and Main Results}

The phenomenon of surface superconductivity as well as the total loss of superconductivity in type-II materials due to the presence of intense applied magnetic fields is very well understood in the framework of the Ginzburg-Landau (GL) theory.  More in general, such an effective model is capable of accurately modelling the response of a type-II superconductor to the application of an external magnetic field. Let us briefly recall the different phases that can be observed in an elongated wire with a constant cross section $ \Omega \subset \R^2 $ and an applied magnetic field orthogonal to it: 
\begin{itemize}
	\item for small enough  magnetic fields no effect is observed and in fact the magnetic field lines are expelled from the superconductor (Meissner effect), until a first threshold $ \Hc $ is crossed and the field penetrates the sample at isolated defects (vortices);
	\item	for magnetic fields above a second critical field $ \Hcc $,  superconductivity is destroyed in the bulk of the sample but survives at the boundary (surface superconductivity); 
	\item 	eventually, for even stronger fields above a third threshold $ \Hccc $, superconductivity is lost everywhere and the material behaves as a conventional conductor again.
\end{itemize}

Such a behavior is even richer when corner singularities are present at the boundary of the sample. In this case, indeed, the transition to the normal state for increasing magnetic fields occurs in two steps:
\begin{itemize}
	\item above $ \Hcc $, the superconducting behavior is at first confined at the sample boundary and its distribution remains uniform there;
	\item then, above a precise threshold $ \Hstar $ in between $ \Hcc $ and $ \Hccc $, superconductivity concentrates at the corner of smallest opening angle and disappears everywhere else.
\end{itemize}

Let us now describe in more details the outlined phenomena and their description within GL theory: the equilibrium state of the sample is characterized by an order parameter $ \psi $ and an induced magnetic potential $ \aav $ minimizing the GL free energy, which above $ \Hcc $ in suitable units reads
\begin{equation}
	\label{eq: glf}
	\glfe [\psi, \aav]= \displaystyle\int_{\Omega} \diff\textbf{r}\; \bigg\{ \bigg| \left ( \nabla + i \frac{\aav}{\varepsilon ^ 2}\right)\psi \bigg|^2 -\frac{\mu}{2\varepsilon ^2}(2|\psi|^2-|\psi|^4)	 \bigg\}+\frac{1}{\varepsilon ^4}\displaystyle\int_{\R^2} \diff\mathbf{r}\; |\mbox{curl}\aav - 1|^2,
\end{equation}
where $ 0 < \eps \ll 1 $ is a small parameter depending on the material itself and $ \mu \in (0,1) $ is proportional to the inverse of the intensity of the applied field\footnote{In the literature it is often used $ b : = 1/\mu $ as a parameter to measure the applied field.}. Note that, as the intensity of the applied field increases, $ \mu $ decreases and therefore the critical fields are encountered for $ \mu $ decreasing from $ 1 $. In such units, the critical thresholds corresponding to  $ \Hcc = \mu_2 \eps^{-2} $ and $ \Hstar = \mu_{\mathrm{corner}}\eps^{-2} $ are
\beq
	\mu_2 = 1, 	\qquad		 \mu_{\mathrm{corner}} = \theo,
\eeq
respectively.  The threshold $\Hccc = \mu_3\eps^{-2}$, which is responsible for the total loss of the superconducting properties, is then shifted compared to domains with smooth boundaries: for regular boundaries indeed  $ \mu_3 = \mu_3(\pi) = \theo $; in the presence of corners $\mu_3= \mu_3(\beta) < \theo $ instead. Here $ \mu_3(\beta)$ is a suitable energy  depending on the opening angle $ \beta \in (0,\pi) $ of the smallest corner.   The constant $ \theo \in (0,1) $ is a universal value given by
\beq
	\label{eq: theo}
\theo = \inf_{\alpha\in\R} \inf_{\lf\| f \ri\|_{L^2(\R^+)}=1} \int_{0}^{+\infty} \diff t \lf\{ |\partial_t f|^2 + (t+\alpha)^2 |f|^2 \ri\}.
\eeq
We point out here that, although the transition from surface to corner superconductivity is believed to occur for any acute angle $ \beta $, the existence of $ \mu_{\mathrm{corner}}$ and the associated shift of $ \mu_3 $ is proven only for $ \beta $ in certain intervals. We will discuss this point further below.

The main quantity under investigation is therefore the infimum of the energy in \eqref{eq: glf} and the features of any associated minimizing pair $ (\glm, \aavm) $, if any, and, in particular, the region where $ \glm $ is concentrated. As anticipated, for $ \mu \in (\mu_{\mathrm{corner}},1) $, superconductivity is expected to be almost uniformly distributed along the boundary $ \partial \Omega $ and the effects of corners appear only as a higher order correction. This is made apparent by the energy asymptotics proven in \cite[Thm. 2.1]{CG2}. More specifically, considering for simplicity a sample with a single\footnote{Note however that the result in \cite{CG2} applies to the case of finitely-many corners.} corner at the boundary with opening angle $ \beta \in (0,2\pi) $, one has
\beq
	\label{eq: gle asympt corner}
	\glee = \disp\frac{|\partial\Omega| \eones}{\eps} -  \ecorr \int_{0}^{|\partial \Omega|} \diff \sigma  \: \mathfrak{K}(\sigma) + E_{\mathrm{corner}, \beta} + o(1), 	\qquad		\mbox{for } \mu \in \lf(\theo, 1 \ri).
\eeq
In the above asymptotics, $ \eones $ is obtained by minimizing a one-dimensional effective energy describing the variation of superconductivity in the normal direction to the boundary, $ \ecorr $ is a coefficient of order 1 expressed in terms of the one-dimensional minimizing profile, $\sigma $ is a tangential coordinate along the boundary and $ \mathfrak{K}(\sigma) $ is the boundary curvature. However, the most important term in the above expansion is the corner effective energy $ E_{\mathrm{corner}, \beta} $ (next-to-last term in \eqref{eq: gle asympt corner}), which yields the correction due to the presence of the corner and is defined by an implicit minimization problem (see \eqref{eq-E-CG} below).

A similar energy expansion holds in the corner superconductivity regime, i.e., when the superconducting behavior survives only close to the corner\footnote{More generally, one can consider a sample with finitely many corners at the boundary. In this case, the value $\beta$ corresponds to the smallest opening angle among all the corners at the boundary \cite{BNF,HK}.} at the boundary with opening angle $\beta$. In order to quantify such a behavior, we have first to discuss more in detail the threshold at which the transition takes place: in the same setting discussed above, it was proven in \cite[Thms. 1.4 \& 1.6]{BNF} that the critical value is explicitly given by
\beq
	\label{eq: Hcorner}
	\mu_3 = \mu_{3}(\beta) =: \mu_\beta,
\eeq
where
\beq
	\label{eq: mu beta}
	\mu_\beta = \inf_{\lf\| \psi \ri\|_2 = 1} \int_{\mathcal{S}_\beta} \diff \xv \: \lf| \lf(\nabla + \ii \fv \ri) \psi \ri|^2,
\eeq
is the bottom of the spectrum of a Schr\"{o}dinger operator with a uniform magnetic field of unit intensity in an infinite sector $ \mathcal{S}_{\beta} $ of opening angle $ \beta $. We set here 
\beq
	\fv(\xv) : = \tfrac{1}{2} \xv^{\perp},		\qquad		\xv^{\perp} : = (-y,x),
\eeq
for $ \xv = (x,y) $ and
\beq
	\label{eq: sector}
	\mathcal{S}_\beta = \lf\{ \xv \in \R^2 \: | \: \pi - \beta < \vartheta(\xv) < \pi \ri\},
\eeq
where we used polar coordinates $ \varrho = |\xv| \in \R^+ $ and $ \vartheta = \arg (x+\ii y) \in [0, 2\pi) $.  In \eqref{eq: mu beta}, we make no assumption on $ \psi $ at the boundary of $ \mathcal{S}_\beta $, i.e., we consider the Neumann realization of the magnetic Schr\"{o}dinger operator. Such a result is however conditioned to the assumption (see \cite[Ass. 1.3]{BNF})
\beq
	\label{eq: condition}
	\mu_\beta  < \theo,
\eeq
whose general proof is still lacking, although numerical simulations strongly suggest that it should hold for any $\beta \in (0, \pi) $ (see  \cite{ABN} and \cite[Chpt. 8]{Ra}). Note that the above conjecture can be reformulated in short by stating that the lowest eigenvalue of a magnetic Schr\"{o}dinger operator in any sector with acute opening angle is strictly below the lowest eigenvalues in the half-plane, which equals $ \theo $ (see, e.g., \cite[Thm. 8.1.1]{FH1}). There are however some intervals of $ \beta $ in which \eqref{eq: condition} is shown to hold:
\begin{itemize}
	\item for $ 0 < \beta < \pi/2 + \epsilon_1 $, for some given  $ \epsilon_1 > 0 $ \cite{Bo,Ja,ELP};
	\item for $ \pi - \epsilon_2 < \beta < \pi $, for some given  $ \epsilon_2 > 0 $ \cite{BNFKR}.
\end{itemize}
Note that in presence of more than one corner with acute angles, one observes \cite{HK} the occurrence of several critical values $ \mu_{\mathrm{corner}, 1} > \mu_{\mathrm{corner}, 2} > \ldots $  marking the transition of superconductivity concentration from one corner to another {according to the opening angle (the smaller the angle is, the more superconductivity is attracted)}.

Concerning the GL energy in the corner regime, the main result is stated in \cite[Thm. 1.7]{BNF}: assuming that $ \partial \Omega $ contains only one corner with {angle $ \beta $  such that $\mu_\beta < \Theta_0$}, one has\footnote{{As mentioned  above, the theorem extends to the case of finitely many corners with opening angles $ \beta_j $ such that the corresponding spectral value $\mu_{\beta_j}$ is  below $\Theta_0$.}}
\beq
	\label{eq: gle asympt sector}
	\glee = E_{\mathrm{sector}, \beta} + o(1), \qquad		\mbox{for } \mu \in  \lf(\mu_{3, \beta}, \theo\ri),
\eeq
where $ E_{\mathrm{sector}, \beta} $ is obtained via the minimization of a suitable nonlinear functional in an infinite sector with angle $ \beta $. This is to some extent surprising, given that the closer $ \mu $ is to $ \mu_{3, \beta} $ the more relevant is the linear part of the GL functional, so that the threshold itself $ \mu_{3, \beta} $ is completely determined by the linear problem.

{In this paper we focus on the two effective models leading to the corner energies $ E_{\mathrm{corner}, \beta} $ and $ E_{\mathrm{sector}, \beta} $ (see their precise definitions below), which share some common features, although they characterise the superconducting behaviour in different regions and regimes.} By investigating the relation between the two effective models, we also expect to shed some light on both of them. As we are going to see, the variational problem leading to $ E_{\mathrm{corner}, \beta} $ is rather implicit and it is not even clear whether $ E_{\mathrm{corner}, \beta} \neq 0 $ for generic value of $ \beta $. Furthermore, a conjecture on the explicit dependence of $ E_{\mathrm{corner}, \beta} $ on $ \beta $ is stated in \cite[Conj. 1]{CG2} and confirmed to leading order for almost flat angles in \cite{CG3}. Therefore, any connection between the two variational problems may easily provide nontrivial information on open related questions.

Before giving a precise definition of the two effective models, let us specify the common setting: we fix the corner angle $ \beta \in (0, \pi) $ once for all, so that we can drop the dependence on $ \beta $ in the notation, and assume that \eqref{eq: condition} holds. Both effective models are then well-posed and yield two energies $ \ecor(\mu) $ and  $ \esec(\mu) $ depending on the applied magnetic field through $ \mu \in (\mu_3,1) $. Of course, the first one is defined for $ \mu \in (\theo,1) $, while the second problem is set for $ \mu \in (\mu_3,\theo) $. {A comparison is therefore only possible at the threshold $\theo $ and, to anticipate our main result, we will indeed prove that both quantities coincide there.}

\subsection{Effective models}

Let us give the precise definition of the two effective functionals. We start from the one leading to $ E_{\mathrm{sector}}(\mu) $, which is expected to capture the superconducting behavior in the corner region.
 For any
\beq
	\mu_{3} < \mu < \theo,
\eeq
{we define:}
\begin{eqnarray}
\label{eq:ref-en}
 	\fsec[\psi]  &=& \E_\mu[\psi;\S_{\beta}],		\\
 	\E_\mu[\psi;\S] & = & \int_{\mathcal{S}} \diff \xv \: \left[ \lf| \lf(\nabla + \ii \fv \ri) \psi \ri|^2 - \mu|\psi|^2 + \tfrac{1}{2} \mu |\psi|^4\right],	
\label{eq:ref-en 2}
\end{eqnarray}
{where  $ \S_{\beta} $ is the infinite sector introduce in \eqref{eq: sector} and depicted in \cref{fig: sector}. The energy functional $\fsec$ is defined for $\psi $ in the magnetic Sobolev space 
\begin{equation}\label{eq:sob-mag}
W^{1,2}_{\Fb}(\mathcal{S}_{\beta}) := \lf\{ \psi\in L^2(\mathcal{S}_{\beta}) \: \big| \: (\nabla-\ii\Fb)\psi \in L^2(\mathcal{S}_{\beta}) \ri\}.
\end{equation}
Note that in usual samples the GL order parameter is proven to decay in the distance from the corner on the scale $ \eps $ for any $\mu_{3} < \mu < \theo$ \cite[Thm. 1.6]{BNF} and therefore the key contribution to the energy comes from a blow-up of the {corner} region, which leads to the effective energy defined in \eqref{eq:ref-en}.}

The sector energy is given by the ground state energy of $ \fsec $, i.e.,
\begin{equation}\label{eq:ref-en*}
 	\esec(\mu) = \inf_{\psi \in W^{1,2}_{\Fb}(\mathcal{S}_{\beta})} \fsec[\psi],
\end{equation}
and we denote by $ \msec \in W^{1,2}_{\Fb}(\mathcal{S}_{\beta}) $ any corresponding minimizer.

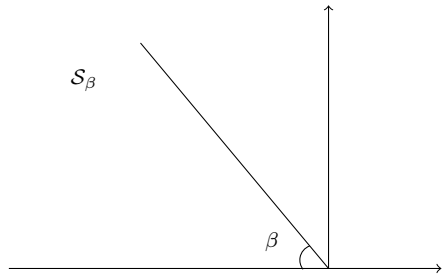
\begin{figure}[!ht]
		\begin{center}
		\begin{tikzpicture}[scale=0.5]
			\draw[->] (2,2) to (13.5,2);
			\draw[->](10.5,2) to (10.5,9);
			\draw (5.5,8) to (10.5,2);
			\draw (10,2.6) arc (110:218:0.4);
			\node at (4,7) {{\footnotesize $\mathcal{S}_\beta$}};
			\node at (9,2.7) {{\footnotesize $\beta$}};
		\end{tikzpicture}
		\caption{The infinite sector $ \mathcal{S}_\beta $ with opening angle $ \beta $ and the corresponding coordinate frame.}\label{fig: sector}
		\end{center}{}
	\end{figure}

In order to introduce the corner effective model, we have first to discuss {a one-dimensional model problem on the half-line (see e.g. \cite{CR1,CG1}) and, later, a corresponding one on a finite segment} (see \cite[Appendix A]{CG2} for further details). Such effective models are known to capture the superconducting features of the material close to boundary in the surface superconductivity regime, where the one-dimensional coordinate is the normal to the boundary itself. More precisely, in this regime the GL order parameter is such that $ |\glm(\xv)| \simeq f(\mathrm{dist}(\xv, \partial \Omega)) $, i.e., its modulus is to leading order constant along the tangential direction and decays in the distance from the boundary. The profile $ f $ is an approximate minimizer of the one-dimensional energies mentioned above. {In the case of the half-line,} after blow-up on the scale $ \eps $ of the boundary region and the extraction of a suitable local gauge phase, the energy per unit tangential length is approximated by the one-dimensional functional
\beq
	\label{eq: fonea}
	\fonea[f] = \int_0^{+\infty} \diff t \: \lf\{ {f^{\prime}}^2 + (t+\alpha)^2 f^2 - \mu f^2 +\tfrac{1}{2} \mu f^4 \ri\},
\eeq
where $ f \in \domone : = \{ f: \R^+ \to \R ~:~ H^1(\R^+) \cap L^2(\R_+, t^2 \dd t) \}  $ and \(\alpha\in\R\) is a parameter determined by the tangential current along $ \partial \Omega $ of $ \glm $. After joint minimization w.r.t. both $ f $ and $ \alpha $, one gets the energy
\begin{equation}\label{eq:gse-1D}
	\eones(\mu) = \inf_{\alpha\in\R} \inf_{f \in \domone} \fonea[f]
\end{equation}
and we denote by $ f_{\star} $ and $ \alpha_{\star} $ any minimizing pair. Since we are going to consider a finite sector, we introduce also the analogue of \eqref{eq: fonea} on the interval $ [0,\ell] $, $ \ell > 0 $, i.e.,
\beq\label{eq: foneal}
	\foneal[f] = \int_0^{\ell} \diff t \: \lf\{ {f^{\prime}}^2 + (t+\alpha)^2 f^2 - \mu f^2 +\tfrac{1}{2} \mu f^4 \ri\},
\eeq
with ground state energy 
\begin{equation}\label{eq:gse-1D0}
\eonel(\mu) = \inf_{\alpha\in\R} \inf_{f\in H^1((0,\ell); \R)} \foneal[f].
\end{equation}
Any corresponding minimizing pair is denoted by $ (f_0, \alpha_0) $.

We are now able to provide the definition of $ \ecor $ which is the outcome of the minimization of a suitable energy in a domain $ \corner $ supposed to reproduce a rescaling of the boundary region close to the corner. {The domain $\corner$} is depicted in \cref{fig: corner}, its explicit definition requires the use of tubular coordinates. There is a natural parameterization {$\pv:\R\to \partial \S_\beta$ of the boundary of the sector $\S_\beta$ with center at the vertex, such that}
$\pv(s)= (-s,0)$, if $s\leq 0$, and such that $\pv(s)$ lies on the line {$\{(x,-x\tan\beta)\}$,} if $s>0$.  Let us introduce a direct frame $(\tb,\nb)$ at $\pv(s)$,  defined everywhere on $\partial \S_\beta$ except at the vertex such that $\tb$ is tangent to $\partial \S_\beta$ and $\nb$ is normal to $\partial \S_\beta$ and points inward $\S_\beta$; for $\pv(s)=(s,0)$ and $s<0$,  we have $\tb=(-1,0)$ and $\nb=(0,1)$.
Then,  we introduce a parametrization $ \xv: \R\setminus\{0\} \times\R_+ \to \S_\beta $ given by
\beq
	\xv(s,t) := s\tb + t\nb.
\eeq 
Note that
\beq\label{eq:def-dist-t}
	t(\xv) : = \mathrm{dist} \lf(\xv, \partial \mathcal{S}_\beta \ri),
\eeq
is the normal distance to the boundary of $ \S_{\beta} $, while {$s(\xv)$ is} the tangential distance  $|\mathbf{p}(s(\xv))-\xv|=\mathrm{dist} \lf(\xv, \partial \mathcal{S}_{\beta} \ri)$, provided that $\xv$ is not on the bisectrix of $\S_\beta$. 

For $L>0$ and $0<\ell<L\tan(\beta/2)$,  the domain $\corner$ is then defined as 
\begin{equation}\label{eq: corner}
\corner = \lf\{ \xv(s,t) \: \big| \: |s|<L,   0 < t< \ell \ri\}.
\end{equation} 
The boundary of $ \corner $ consists  of three parts
\[ \begin{aligned}
\partial \Gamma^{\rm out} &=\overline{\corner}\cap \partial \S_{\beta},\\
 \partial \Gamma^{\rm bd} &=\overline{\corner}\cap \{s(\xv)=\pm L\},	\\  		\partial \Gamma^{\rm in} &= \overline{\corner}\cap \{t(\xv)=\ell\}.\end{aligned}
 \]
Under the following conditions on the parameters $ L, \ell $,
\beq
	\label{eq: L ell condition}
	1\ll \ell\ll L\ll \ell^a, \qquad \mbox{for some }	a> 1,
\eeq
and assuming that
\beq
	\theo < \mu < 1,
\eeq
we introduce the following ground state energy
\begin{eqnarray}\label{eq:gse-corner}
 	\ecorl(\mu) &=& \inf_{\psi\in \domcor}\mathcal \E_\mu[\psi; \corner]\nonumber \\
 	&=& \inf_{\psi\in \domcor}\mathcal \int_{\corner} \diff \xv \: \left[ \lf| \lf(\nabla + \ii \fv \ri) \psi \ri|^2 - \mu|\psi|^2 + \tfrac{1}{2} \mu |\psi|^4\right].
\end{eqnarray}
We denote by $ \mcor $ any associated minimizer. Here,
\beq
	\domcor = \lf\{\psi\in H^1(\R^2) \: \big| \: \psi=\psi_{0} \mbox{ on }\partial\Gamma^{\rm bd} \cup \partial\Gamma^{\rm in}  \ri\}
\eeq
and \(\psi_{0} \) can be expressed in tubular coordinates in terms of the minimizing pair $ ( f_0, \alpha_0) $ as 
\beq
	\label{eq: psistar}
	\psi_{0}(\xv(s,t)) = f_0(t) e^{\ii\alpha_0 s -\frac12 \ii st }.
\eeq

Heuristically, the corner energy is obtained by removing from the ground state energy in \eqref{eq:gse-corner} the contribution to the almost uniform distribution of superconductivity along the (outer) boundary of the corner region, which is in turn provided by the one-dimensional energy $ \eones $ times the length of the outer boundary $ 2 L $. More precisely, $ \ecor $ is given by the following limit
\beq
	\label{eq-E-CG}
	\ecor(\mu) = \lim_{\substack{\ell\to+\infty\\ \ell\ll L\ll \ell^a}} \lf( -2L \eonel(\mu) + \ecorl(\mu) \ri)
\end{equation}
which is proven to exist and being independent of \(a>1\) in \cite[Prop. 2.2]{CG2} (see also \cite[Rem. 3.8]{CG2}).

\begin{figure}[!ht]
		\begin{center}
		\begin{tikzpicture}[scale=0.75]
			\draw (0,0) -- (1,2);
			\draw (1,2) -- (1.5,3);
			\draw (1.5,3) -- (2,4) -- (2.5,3);
			\draw (2.5,3) -- (3,2);
			\draw (3,2) -- (4,0);
			\draw (0,0) -- (1,-0.5);
			\draw (4,0) -- (3,-0.5);
			\draw (1,-0.5) -- (2,1.7);
			\draw (3,-0.5) -- (2,1.7);
			\node at (2,4.5) {{\footnotesize $V$}};
			\node at (-0.5, 0) {{\footnotesize $A$}};
			\node at (4.5, 0) {{\footnotesize $B$}};
			\node at (1.5, -0.5) {{\footnotesize $C$}};
			\node at (2.5, -0.5) {{\footnotesize $E$}};
			\node at (2,1) {{\footnotesize $D$}};
		\end{tikzpicture}
		\hspace{0,5cm}
		\begin{tikzpicture}[scale=0.45]
			\draw[->] (0.5,2) to (13.5,2);
			\draw[->](10.5,0.5) to (10.5,9);
			\draw (2.4,2) to (2.4,3.6);
			\draw (2.4,3.6) to (6.9,3.6);
			\draw (4,7) to (6.9,3.6);
			\draw (4,7) to (5.5, 8);
			\draw (5.5,8) to (10.5,2);
			\node at (11.4,1.5) {{\footnotesize $x$}};
			\node at (9.9,8.7) {{\footnotesize $y$}};
		\end{tikzpicture}
		\caption{The corner region $ \corner $ and the associated coordinate system. The opening angle $ \widehat{AVB} $ is equal to $ \beta $ and the side lengths are $ |\overline{AV}| = |\overline{VB}| = L  $ and $  |\overline{AC}| = |\overline{EB}| = \ell $.}\label{fig: corner}
		\end{center}{}
	\end{figure}
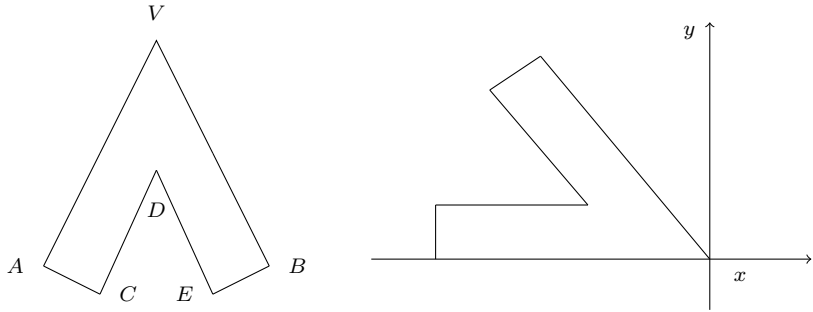
	
\subsection{Main result}

As anticipated we aim at making a comparison between the two effective energies, which of course can be done only at the threshold $ \mu = \theo $ for the transition from boundary to corner superconductivity. Both effective models are however not defined at $ \theo $ and therefore part of the problem is to extend their definitions exactly at the threshold via a limit procedure. Our main result is then stated below.

	\begin{theorem}[Effective energies]
	\label{theo: main}
	\mbox{}	\\
	For any angle $ \beta \in (0, \pi) $ such that \eqref{eq: condition} holds,  we have that
	\beq
		\label{eq: continuity}
		\lim_{\mu \to \theo^+} \ecor(\mu)= \lim_{\mu \to \theo^-} \esec(\mu)=\esec(\theo).
	\eeq
	\end{theorem}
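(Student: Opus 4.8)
The plan is to prove the two equalities in \eqref{eq: continuity} by treating the two one-sided limits separately and showing that each coincides with the threshold sector energy
\beq\label{eq-esec-thr-plan}
  \esec(\theo):=\inf_{\psi\in W^{1,2}_{\Fb}(\S_{\beta})}\E_{\theo}[\psi;\S_{\beta}],
\eeq
so I must also argue that this problem is well posed and attained. All the estimates rest on the elementary identity, valid for every $\psi$, every $\mu$ and every domain $\Lambda$,
\beq\label{eq-identity-plan}
  \E_{\mu}[\psi;\Lambda]=\E_{\theo}[\psi;\Lambda]+(\theo-\mu)\int_{\Lambda}\lf(|\psi|^2-\tfrac12|\psi|^4\ri),
\eeq
together with the a priori pointwise bound $|\psi|\leq1$ for minimizers, which follows from the maximum principle applied to the Euler--Lagrange equation $-(\nabla+\ii\fv)^2\psi=\mu(1-|\psi|^2)\psi$.

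\emph{Left continuity and existence at the threshold.} For $\mu<\theo$ the integrand in \eqref{eq-identity-plan} is nonnegative once $|\psi|\leq1$, since $|\psi|^2-\tfrac12|\psi|^4\geq\tfrac12|\psi|^2\geq0$. Inserting the genuine minimizer $\msec$ of $\fsec$ into \eqref{eq-identity-plan} with $\Lambda=\S_{\beta}$ therefore gives $\esec(\mu)\geq\E_{\theo}[\msec;\S_{\beta}]\geq\esec(\theo)$, whence $\liminf_{\mu\to\theo^-}\esec(\mu)\geq\esec(\theo)$; testing $\E_{\mu}$ against any fixed competitor $\phi$ with $|\phi|\leq1$ and letting $\mu\to\theo^-$ yields $\limsup_{\mu\to\theo^-}\esec(\mu)\leq\E_{\theo}[\phi;\S_{\beta}]$, and optimizing over $\phi$ produces the matching upper bound. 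To exhibit a minimizer at the threshold I would use the direct method: the magnetic operator $(\nabla+\ii\fv)^2$ on $\S_{\beta}$ has essential spectrum $[\theo,+\infty)$ and only finitely many eigenvalues below $\theo$ (the lowest being $\mu_{\beta}<\theo$), so an IMS localization splitting the sector into a neighborhood of the vertex and its complement shows that $\E_{\theo}$ is bounded below and that minimizing sequences cannot leak mass to infinity; weak lower semicontinuity then produces $\psi_{\ast}\in W^{1,2}_{\Fb}(\S_{\beta})$ attaining $\esec(\theo)$.

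\emph{Right continuity.} The key device is the renormalized finite-volume energy
\beq\label{eq-G-plan}
  G_{L,\ell}(\mu):=-2L\,\eonel(\mu)+\ecorl(\mu),
\eeq
defined for $\mu$ in a full neighborhood of $\theo$. By \cite[Prop. 2.2]{CG2} one has $G_{L,\ell}(\mu)\to\ecor(\mu)$ for $\mu>\theo$, while for $\mu<\theo$ a parallel analysis shows that the surface term degenerates (since $\eones(\theo)=0$, one has $2L\,\eonel(\mu)\to0$ in the regime \eqref{eq: L ell condition}) and the truncated corner problem converges to the sector problem, giving $G_{L,\ell}(\mu)\to\esec(\mu)$; at $\mu=\theo$ the same comparison yields $G_{L,\ell}(\theo)\to\esec(\theo)$. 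The decisive point is that $G_{L,\ell}$ is Lipschitz in $\mu$ near $\theo$ \emph{uniformly} in $L,\ell$. Indeed, by the Feynman--Hellmann (envelope) identity,
\beq\label{eq-dmuG-plan}
  \partial_{\mu}G_{L,\ell}(\mu)=2L\int_0^{\ell}\lf(f_0^2-\tfrac12 f_0^4\ri)\dd t-\int_{\corner}\lf(|\mcor|^2-\tfrac12|\mcor|^4\ri)\dd\xv,
\eeq
and since $\mcor$ relaxes to the one-dimensional surface profile $\psi_{0}$ away from the vertex, the bulk contributions along the two sides of length $L$ cancel between the two terms, leaving a vertex-localized remainder that is $O(1)$ uniformly in $L,\ell$. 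This uniform Lipschitz bound permits interchanging the limits $L,\ell\to\infty$ and $\mu\to\theo$, so that $\lim_{\mu\to\theo^+}\ecor(\mu)=\lim_{L,\ell\to\infty}G_{L,\ell}(\theo)=\esec(\theo)$.

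\emph{Main obstacle.} The delicate issue underlying both the cancellation in \eqref{eq-dmuG-plan} and the control of the truncation errors is that the localization length of the minimizers diverges at the threshold: for $\mu\neq\theo$ the minimizers decay exponentially away from the vertex at rate $\sim\sqrt{|\theo-\mu|}$, which collapses as $\mu\to\theo$. I would therefore need Agmon-type weighted estimates, uniform on the relevant energy scale, quantifying how fast $\mcor$ (and $\msec$) approaches the surface profile $\psi_{0}$ at a given distance from the vertex, uniformly for $\mu$ near $\theo$. Coupling these with a diagonal choice $L=L(\mu)$, $\ell=\ell(\mu)\to\infty$ --- slow enough that the decay estimates remain effective, fast enough that the subtracted term $2L\,\eonel(\mu)$ and the boundary-layer corrections along $\partial\Gamma^{\rm bd}\cup\partial\Gamma^{\rm in}$ stay $o(1)$ --- is what makes the two limits commute, and this is where I expect the bulk of the technical work to lie.
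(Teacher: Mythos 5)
Your treatment of the left limit is correct and is in fact more elementary than the paper's: the identity $\E_\mu[\psi;\S_\beta]=\E_{\theo}[\psi;\S_\beta]+(\theo-\mu)\int_{\S_\beta}(|\psi|^2-\tfrac12|\psi|^4)$, the bound $|\msec|\leq1$, and a standard truncation argument (showing that restricting the infimum defining $\esec(\theo)$ to competitors with $|\phi|\leq 1$ loses nothing) give $\lim_{\mu\to\theo^-}\esec(\mu)=\esec(\theo)$ directly, whereas the paper routes both one-sided limits through the Dirichlet corner energy $\edir(\theo)$ of \cref{lemma:mu=Th0}. The right limit, however, contains a genuine gap. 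First, your Feynman--Hellmann formula for $\partial_\mu G_{L,\ell}$ is not the actual derivative: the admissible set $\domcor$ itself moves with $\mu$, since the boundary datum $\psi_0$ is built from the $\mu$-dependent minimizing pair $(f_0,\alpha_0)$, so $\partial_\mu\ecorl$ carries an additional boundary term that the envelope argument discards. Second, and more seriously, even granting the formula, each of its two terms separately grows like $L(\mu-\theo)$, and the asserted $\OO(1)$ cancellation is equivalent to the quantitative tangential relaxation estimate $\int_{\corner}(|\mcor|^2-\tfrac12|\mcor|^4)=2L\int_0^\ell( f_0^2-\tfrac12 f_0^4)\,\diff t+\OO(1)$, uniformly for $\mu$ near $\theo$. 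The available Agmon bounds (\cref{prop:decay corner}) control only decay in the \emph{normal} direction and say nothing about relaxation to the 1D profile along the sides, so at its key step your argument assumes exactly what you list as the ``main obstacle''; it is circular rather than incomplete. The paper never differentiates in $\mu$: it replaces $\ecorl$ by the wedge energy $\ewedl$ (\cref{lemma: wedge vs corner}), proves approximate monotonicity of $L\mapsto\ewedl(\mu)-2L\eones(\mu)$, and performs explicit two-sided trial-state surgery against a zero-Dirichlet wedge problem at $\theo$, the errors being controlled by the 1D estimates $\|f_\star\|_2^2=\OO(\mu-\theo)$, $\|f_\star\|_4^4=\OO((\mu-\theo)^2)$ and the diagonal choice $L_\mu=(\mu-\theo)^{-1/2}$ (\cref{prop:mu=Th0}).

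The second gap is the existence of a minimizer at the threshold, which you need for well-posedness of $\esec(\theo)$ and for \cref{corollary}. The direct method as you sketch it fails at $\mu=\theo$: the term $-\theo\|\psi\|_2^2$ is weakly \emph{upper} semicontinuous, not lower, so passing to the limit along a weakly convergent minimizing sequence requires strong $L^2$ convergence; and since the linear spectral gap $\theo-\mu$ vanishes at the threshold, nothing prevents a minimizing sequence from spreading $L^2$ mass to infinity at asymptotically zero cost (the quartic penalty vanishes under spreading). IMS localization alone cannot exclude this marginal-binding scenario — it only yields bounds like $\|\psi_n\|_2^2\leq C/(\theo-\mu)$, which blow up exactly where you need them. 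The paper's construction (\cref{prop:ex-min}) is of a different nature: it obtains $\psi_*$ as a local limit of the Dirichlet corner minimizers $\mdirl$ along $\ell=L^{2/3}$, using interior $H^2$ elliptic estimates for strong local convergence, the weighted bound $\int_{\corner}|\xv|^2|\mdirl|^4\leq C$ of \cref{lem:ex-min} for tightness of the quartic term, and the critical-point identity $\E_{\theo}[\mdirl;\corner]=-\tfrac{\theo}{2}\|\mdirl\|_4^4$ to identify the energy of the limit with $\edir(\theo)=\esec(\theo)$ (\cref{prop:sb-theta0}). Without an ingredient of this kind, attainment at $\theo$ remains unproved (the mere finiteness of $\esec(\theo)$, by contrast, can be recovered by an IMS lower bound with fixed localization radius, absorbing the negative quadratic terms into the quartic term via Cauchy--Schwarz on the bounded inner region).
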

The coincide of the two limits has some direct but important consequence on the two variational problems at the threshold, in particular concerning the existence of minimizers.
	
	\begin{corollary}
	\label{corollary}
	\mbox{}	\\
	Under the assumptions of \cref{theo: main}, $\esec(\theo)$ is finite and negative, and the variational problem \eqref{eq:ref-en*} at $ \mu =\theo $ admits a minimizer.
	\end{corollary}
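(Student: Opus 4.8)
The strategy is to obtain finiteness and negativity directly from the variational problem at $\mu=\theo$, and then to produce a minimizer by the direct method, using the spectral gap in \eqref{eq: condition} to recover the compactness lost on the unbounded sector. Throughout, $\esec(\theo)=\inf_{\psi\in W^{1,2}_{\Fb}(\S_\beta)}\E_{\theo}[\psi;\S_\beta]$ denotes the variational value at the threshold, which \cref{theo: main} identifies with the finite number $\lim_{\mu\to\theo^+}\ecor(\mu)$; in particular $\esec(\theo)>-\infty$. For negativity, let $H_\beta:=(\nabla+\ii\Fb)^2$ be the Neumann realization on $\S_\beta$, with $\inf\sigma(H_\beta)=\mu_\beta$ and $\inf\sigma_{\mathrm{ess}}(H_\beta)=\theo$. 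Under \eqref{eq: condition} we have $\mu_\beta<\theo$, so $\mu_\beta$ is a discrete eigenvalue with a normalized eigenfunction $u_\beta$ that decays exponentially away from the vertex (Agmon estimates) and hence lies in $L^4(\S_\beta)$. Testing with $\lambda u_\beta$ gives
\begin{equation}
	\E_{\theo}[\lambda u_\beta;\S_\beta]=\lambda^2(\mu_\beta-\theo)+\tfrac{\theo}{2}\,\lambda^4\,\norm{u_\beta}_{L^4(\S_\beta)}^4<0
\end{equation}
for $\lambda>0$ small, whence $\esec(\theo)<0$.

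For the existence of a minimizer I would run the direct method. Let $\{\psi_n\}\subset W^{1,2}_{\Fb}(\S_\beta)$ be a minimizing sequence, $\E_{\theo}[\psi_n;\S_\beta]\to\esec(\theo)$. A coercivity bound, obtained by projecting onto the spectral subspace of $H_\beta$ below $\theo$ and using that this subspace consists of exponentially decaying states on which $L^2$ and $L^4$ norms are comparable, gives $\E_{\theo}[\psi;\S_\beta]\geq-C\norm{\psi}_{L^4(\S_\beta)}^2+\tfrac{\theo}{2}\norm{\psi}_{L^4(\S_\beta)}^4$, so $\norm{\psi_n}_{L^4(\S_\beta)}$ is bounded. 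To control the mass and localize the sequence, introduce an IMS partition $\chi_R^2+\eta_R^2=1$ with $\chi_R$ supported near the vertex; the localization error is $O(R^{-2})$ and the far piece $\eta_R\psi_n$ has energy $\E_{\theo}[\eta_R\psi_n;\S_\beta]\geq-o_R(1)$ because, away from the vertex, the sector is asymptotically a half-plane whose magnetic ground state energy equals $\theo$, so the quadratic part essentially cancels the linear term. Consequently, up to an error vanishing as $R\to\infty$, the energy is carried by $\chi_R\psi_n$, which is supported in a bounded region where $W^{1,2}_{\Fb}$ embeds compactly into $L^2$ and $L^4$ (magnetic Rellich--Kondrachov). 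Passing to a subsequence, $\chi_R\psi_n$ converges strongly in $L^4$ and weakly in the magnetic Sobolev norm to some $\psi_*$; weak lower semicontinuity of the magnetic Dirichlet form together with the strong $L^4$ convergence yields $\E_{\theo}[\psi_*;\S_\beta]\leq\esec(\theo)$. Since $\esec(\theo)<0$, the limit cannot vanish, so $\psi_*\neq0$ is the sought minimizer.

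The main obstacle is precisely the compactness step at the critical value $\mu=\theo$, where the bottom of the essential spectrum coincides with the energy scale and nothing prevents, a priori, the mass of a minimizing sequence from escaping to infinity. The binding estimate ruling this out---namely that any mass far from the vertex carries asymptotically non-negative energy---relies in an essential way on the strict inequality $\mu_\beta<\theo$ of \eqref{eq: condition}: it is this spectral gap, together with the Agmon decay of the low-lying states, that forces minimizing sequences to concentrate at the corner and restores the compactness needed to attain the infimum.
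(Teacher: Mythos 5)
Your argument splits into three parts, and they are not on equal footing. The negativity claim is fine and is in fact the paper's own argument (the proof of \cref{prop:min} tests $\fsec$ with $\lambda u_\beta$ for $\mu_\beta<\mu\leq\theo$, which includes the threshold), and deducing finiteness from \cref{theo: main} is legitimate given that the corollary is stated under its assumptions. The existence proof, however, has a genuine gap at exactly the point the whole difficulty sits. Running the direct method on a minimizing sequence $\{\psi_n\}$ for $\fsec$ at $\mu=\theo$, nothing you prove controls the $L^2$ norms: the shifted quadratic form $\int_{\S_\beta}|(\nabla+\ii\Fb)\psi|^2-\theo|\psi|^2$ is not coercive in $L^2$ (your own spectral-projection bound only gives a lower bound $(\mu_\beta-\theo)\|P\psi\|_2^2$, and for the orthogonal complement the bound is merely $\geq 0$), so a bounded energy plus a bounded $L^4$ norm bounds neither $\|\psi_n\|_{L^2(\S_\beta)}$ nor $\|\chi_R\psi_*\|_{L^2}$ uniformly in $R$. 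Indeed minimizing sequences need not be $L^2$-bounded: one may graft onto a good sequence a far-away Weyl-type quasi-mode at the bottom of the essential spectrum, spread over a long boundary arc so that its $L^4$ norm tends to zero while its $L^2$ mass is arbitrarily large; by your outer-domain estimate its energy contribution is nonnegative and can be made arbitrarily small, so the sequence stays minimizing. Consequently your chain of inequalities only yields $\E_{\theo}[\chi_R\psi_*;\S_\beta]\leq\esec(\theo)+\OO(R^{-1})$ for each fixed $R$; without a uniform-in-$R$ $L^2$ bound you can neither pass to the limit $R\to\infty$ in the energy nor even assert that $\psi_*\in W^{1,2}_{\Fb}(\S_\beta)$, i.e.\ that $\psi_*$ is admissible. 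The spectral gap $\mu_\beta<\theo$ prevents the $L^4$ mass from escaping, but it does nothing to prevent the $L^2$ mass from diverging along a threshold quasi-mode, and the borderline decay one can extract ($\int_{|\xv|\geq\rho}|\psi_*|^4\leq C/\rho$) is exactly compatible with $\psi_*\notin L^2(\S_\beta)$. (A secondary sketchy point: your coercivity step needs the spectral subspace of the Neumann operator strictly below $\theo$ to be finite dimensional with uniformly comparable $L^2$/$L^4$ norms; eigenvalues accumulating at $\theo$ would break this as stated.)

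This missing a priori control is precisely why the paper does not use minimizing sequences of the sector functional at all. Instead (\cref{prop:ex-min}) it takes exact minimizers $\mdirl$ of the Dirichlet problems $\E_{\theo}[\,\cdot\,;\corner]$ with $\ell=L^{2/3}$, which do satisfy an Euler--Lagrange equation; testing that equation against suitably weighted multipliers (\cref{lem:ex-min}: an exponential weight in $t$ and a linear weight in $s$) yields the uniform bounds $\int\bigl(|(\nabla+\ii\Fb)\mdirl|^2+|\mdirl|^2+|\mdirl|^4\bigr)\leq C$ and the moment bound $\int|\xv|^2|\mdirl|^4\leq C$, uniformly in $L$. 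These two bounds are exactly what your argument lacks: the first makes the local limit admissible in $W^{1,2}_{\Fb}(\S_\beta)$ (via monotone convergence), the second forbids loss of $L^4$ mass at infinity so that the limit's energy equals $\lim_L\ecordl(\theo)=\edir(\theo)$, which by \cref{prop:sb-theta0} equals $\esec(\theo)$. To repair your route you would need a substitute for the Euler--Lagrange equation along the minimizing sequence (e.g.\ Ekeland's variational principle plus the same multiplier estimates), or simply to replace your sequence by minimizers of truncated problems --- which is, in essence, the paper's proof.
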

	\begin{remark}[Non-triviality of $ \ecor $]
		\mbox{}	\\
		As stressed in \cite{CG2}, due to its implicit definition, it is not apparent that $ \ecor $ is non-trivial, i.e., a priori it might well be that $ \ecor = 0 $. For angles close to $ \pi $, it is proven that this is not the case in \cite{CG3}, but the question remains open otherwise. The continuity stated in \eqref{eq: continuity} guarantees however that $ \ecor $ is not zero {\it for any angle $ \beta $} and for $ \mu $ close enough to $ \theo $, since the r.h.s. of \eqref{eq: continuity} is  non-zero.
	\end{remark}
\begin{remark}[Continuity of $ \gle $]
		\mbox{}	\\
		Another important consequence of \eqref{eq: continuity} is that the GL energy $ \gle $ in a domain with a single corner {(whose opening angle satisfies \eqref{eq: condition})} at the boundary is a continuous function of $ \mu $ in a neighbourhood of $ \theo $ for fixed $ \eps $. Indeed, since (see \cite[Sect. 14.2.2]{FH1} and \cite[Proof of Lemma 2.3]{CR3})
		\beq\label{eq:(1.30)}
			\lim_{\mu \to \theo^+} \eones(\mu) = 0,	\qquad		\lim_{\mu \to \theo^+} \ecorr(\mu) = 0,
		\eeq
		for fixed $ \eps \in (0,1) $, the asymptotic expansion \eqref{eq: gle asympt corner} varies continuously as $ \mu $ crosses $ \theo $ from above becoming \eqref{eq: gle asympt sector} below the threshold.
	\end{remark}		
\begin{remark}[Conjecture on $ \ecor $]\mbox{ }\\
In \cite{CG2} it is conjectured that $ \ecor $ is in fact a linear function of the angle $ \beta $, namely that $\ecor(\mu)=-(\pi-\beta)E_{\rm corr}(\mu)$.  This conjecture was approximately confirmed for angles close to $ \pi $ in \cite{CG3}.  However,  our main result disproves the conjecture when $\mu$ is close to $\Theta_0$,  since $\esec(\Theta_0)<0$ according to \cref{corollary},    while  $E_{\rm corr}(\mu)$ vanishes for $\mu=\Theta_0$ and is positive for $\mu<\Theta_0$  (see  \eqref{eq:(1.30)}, \cite{CDR} and  \cite[Eq.~(2.17)]{Co}).    
\end{remark}	
	\begin{remark}[Magnetic steps]\label{rem:mag steps}\mbox{ }\\
	There has been an interest in studying the influence of discontinuous magnetic fields (magnetic steps) on the distribution of superconductivity \cite{A20,  AKP},  where the discontinuity  line of the magnetic field  causes similar effects as those due to corners \cite{A21,  AG, Gia,  M}.  We expect a  continuity result similar to \cref{theo: main} for the effective energies in the setting of discontinuous magnetic fields.
	\end{remark}	

	\subsection{Organization of the paper}	
The rest of the paper is divided into four sections.  {In \cref{sec:pre}, we collect some preliminary results, mostly related to effective one dimensional problems and the main features of the magnetic Laplacian in domains with corners, that will be used throughout the paper.  In \cref{sec:cor}, we study the corner effective energy and establish that  the limit $\lim_{\mu \to \theo^+} \ecor(\mu)$ exists and can be obtained by considering a modified variational problem in a solid wedge with zero boundary conditions. In \cref{sec:sec}, we study the sector effective energy and prove the existence of the limit $ \lim_{\mu \to \theo^-} \esec(\mu)$. Finally, in \cref{sec: proof of main thm}, we conclude the proof of \cref{theo: main}, comparing the two energies at the threshold.}

\section{Preliminaries}\label{sec:pre}

We collect here all the proofs of the technical results needed to get our main result \cref{theo: main}, whose proof is spelled in the next section. First, in Section \ref{subsec: 1d energy}, we recall some known facts and state some simple results concerning the one-dimensional effective problems $\fonea$ and $\foneal$ {introduced in \eqref{eq: fonea} and \eqref{eq: foneal}, respectively. After that,  we discuss some very well known properties of the magnetic Laplacian with uniform magnetic field in Section \ref{subsect: magnetic lapl}. Finally, in Section \ref{subsec: Dirichlet eff energy}, we introduce another variation problem in the corner region $\Gamma_\beta(L,\ell)$ which is going to be useful in the next section (see \cref{prop:mu=Th0}) to characterize the value of $E_{\mathrm{corner}}(\mu)$ at the threshold $\mu = \Theta_0$.}

Before starting our discussion, we observe that all the effective two-dimensional problems we are going to consider are related to the functional $ \E_{\mu}[\psi; \S] $ in a given domain $ \S $ as defined in \eqref{eq:ref-en 2}:
\[
	\E_\mu[\psi;\S] =\int_{\S} \diff \xv \: \left[ \lf| \lf(\nabla + \ii \fv \ri) \psi \ri|^2 - \mu|\psi|^2 + \tfrac{1}{2} \mu |\psi|^4\right].
\] 
It is very well known that,   irrespective of the type of boundary conditions on $ \S $ (provided that in the case of non-zero Dirichlet conditions the same bound holds on $ \partial \S $), one can prove in full generality that any minimizer $ \psi_{\sharp} $ of such a functional satisfies
\beq
	\label{eq: infty}
	\lf\| \psi_{\sharp} \ri\|_{L^{\infty}(\S)} \leq 1,
\eeq
by a direct application of the maximum principle. A similar bound holds also for the one-dimensional functional studied below.

\subsection{One-dimensional energies} \label{subsec: 1d energy}
{We recall that $ \eones(\mu) = \inf_{\alpha\in\R} \inf_{f \in \domone} \fonea[f]$, with 
\[
	\fonea[f] = \int_0^{+\infty} \diff t \: \lf\{ {f^{\prime}}^2 + (t+\alpha)^2 f^2 - \mu f^2 +\tfrac{1}{2} \mu f^4 \ri\},
\]
where $ f \in \domone : = \{ f: \R^+ \to \R \: | \: H^1(\R^+) \cap L^2(\R_+, t^2 \dd t) \}  $ and \(\alpha\in\R\).
For $ \mu \in (\theo, 1) $, it is known that $ \eones < 0 $ is realized on (at least one) minimizing pair $ (f_{\star}, \alpha_{\star}) $, which is non-trivial and such that the following optimality condition holds}
\beq
	\label{eq: optimality}
	\int_{0}^{+\infty} \diff t \: (t + \alpha_{\star}) f_{\star}^2 = 0.
\eeq
On the opposite, the minimization of $ \fonea $ becomes trivial for $ \mu \leq \theo $: the obvious upper bound $ \eones(\mu) \leq 0 $ can indeed be matched by the lower bound
\bdm
	\eones(\mu) \geq \inf_{f \in \domone} \lf[ \lf( \theo - \mu \ri) \lf\| f \ri\|_2^2 + \tfrac{1}{2} \mu \lf\| f  \ri\|_4^4 \ri] = 0,
\edm
for any $ \mu \leq \theo $, since by \eqref{eq: theo} the bottom of the spectrum of the shifted harmonic oscillator is bounded from below by $ \theo $ for any $ \alpha \in \R $. Hence,
\beq\label{eq: E* and f* at Theta0}
	\eones(\mu) = 0,		\qquad		f_{\star} \equiv 0, 	\qquad		\mbox{for any } \mu \leq \theo.
\eeq
The optimal phase is then undetermined and any finite $ \alpha $ yields the same energy. Moreover, when $ \mu \to \theo^+ $, it is known (see \cite[Lemma 14.2.11 and Prop. 14.2.12]{FH1} and \cite[Proof of Lemma 2.3]{CR3}) that
\beq
	\label{eq: asympt mu theo}
	\eones(\mu) = - \frac{\theo}{2} \frac{(\mu - \theo)^2}{\lf\| \psi_0 \ri\|_4^4} \lf(1 + o_{\mu - \theo}(1) \ri),		\qquad		\lf\| f_{\star} - \tfrac{\sqrt{\mu - \theo}}{\theo \lf\| \psi_0 \ri\|_4^2} \psi_0 \ri\|_{\infty} = o\lf(\sqrt{\mu - \theo}\ri),
\eeq
where $ \psi_0 $ stands for the normalized ground state of the shifted harmonic oscillator $ - \partial_t^2 + (t - \sqrt{\theo})^2 $ on $ L^2(\R^+) $, i.e., the state realizing \eqref{eq: theo}. Note that combining \eqref{eq: asympt mu theo} with the other properties of $ f_{\star} $, we get
\beq
	\label{eq: p estimates}
	\lf\| f_{\star} \ri\|_{2}^2 = \OO( \mu - \theo), \qquad	\lf\| f_{\star} \ri\|_{4}^4 = \OO( (\mu - \theo)^2).
\eeq

Analogous results hold for the minimization of the functional $ \foneal $ on the interval $ [0,\ell] ${, i.e., 
\[
\foneal[f]= \int_0^{\ell} \diff t \: \lf\{ {f^{\prime}}^2 + (t+\alpha)^2 f^2 - \mu f^2 +\tfrac{1}{2} \mu f^4 \ri\},\qquad f\in H^1((0,\ell); \R)
\]} 
up to exponentially small errors in $ \ell $, which we denote as $ \OO(\ell^{-\infty}) $, i.e., quantities bounded by any arbitrary negative power of $ \ell $. This can be easily deduced as in the discussion contained in \cite[Sect. 3.1]{CR2} (see also \cite[Appendix A]{CR2} and \cite[Appendix A]{CG2}). However, for the sake of completeness, we state and prove the result in details below.

\begin{lemma}
	\label{lemma: 1D}
	For any $ \mu \in \lf(\mu_3, 1\ri) $, as $ \ell \to + \infty $,
	\beq
		\label{eq: l star estimate}
		\eonel(\mu) = \eones(\mu) + \OO(\ell^{-\infty}),	\qquad
		 \lf\| f_0 - f_{\star} \ri\|_{\infty} = \OO(\ell^{-\infty}).
	\eeq
\end{lemma}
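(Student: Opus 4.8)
The plan is to exploit the exponential localisation of both one-dimensional profiles near $t=0$ and to split the two claims into a two-sided energy comparison and an ODE perturbation argument. Throughout I focus on the nontrivial regime $\mu\in(\theo,1)$, in which $(f_\star,\alpha_\star)$ is a nonzero minimizing pair; for $\mu_3<\mu\le\theo$ one has $\eones(\mu)=0$ and $f_\star\equiv0$ by \eqref{eq: E* and f* at Theta0}, while the de Gennes lower bound (the operator $-\partial_t^2+(t+\alpha)^2$ has bottom of the Neumann spectrum $\geq\theo$ on any subinterval of $\R^+$) gives $\eonel(\mu)\geq0$, so both quantities vanish and the statement is immediate. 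I also use that the positive half-line minimizer is unique, so that $\|f_0-f_\star\|_\infty$ is unambiguous, and the pointwise bound $\|f_0\|_\infty\leq1$, $\|f_\star\|_\infty\leq1$ recalled after \eqref{eq: infty}.

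The first step is to establish Agmon-type decay estimates that are uniform in $\ell$. I would begin by showing that $\alpha_0$ stays in a fixed compact set independent of $\ell$: using $\chi f_\star$ as a competitor in $\foneal$ (with $\chi$ a cutoff) bounds $\eonel$ from above, and combining this with the optimality condition $\int_0^\ell(t+\alpha_0)f_0^2=0$ — the analogue of \eqref{eq: optimality} — pins $\alpha_0$ down. Once $\alpha_0$ is controlled, the effective potential $(t+\alpha_0)^2-\mu$ exceeds a positive constant for $t\geq t_0$ with $t_0$ independent of $\ell$, and a standard Agmon multiplier argument using $\|f_0\|_\infty\leq1$ yields $|f_0(t)|+|f_0'(t)|\leq Ce^{-ct}$ for $t\geq t_0$, with $c,C,t_0$ independent of $\ell$; the same estimate holds for $f_\star$ on $\R^+$. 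In particular $f_\star'(\ell)=\exl$ and $f_0(\ell),f_0'(\ell)$ enjoy no such Neumann-boundary obstruction to smallness.

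For the first estimate in \eqref{eq: l star estimate} I carry out the energy comparison in both directions. For the upper bound, I restrict $(f_\star,\alpha_\star)$ to $[0,\ell]$: since $f_\star\in H^1((0,\ell))$ with no endpoint constraint, it is admissible, and $\foneal[f_\star]-\fonea[f_\star]=-\int_\ell^{+\infty}\{\cdots\}\dd t=\exl$ by the decay estimates, giving $\eonel(\mu)\leq\eones(\mu)+\exl$. For the lower bound, I take a minimizer $(f_0,\alpha_0)$, multiply $f_0$ by a smooth cutoff equal to $1$ on $[0,\ell-1]$ and vanishing at $\ell$, and extend by zero to obtain a competitor $(\chi f_0,\alpha_0)$ for $\eones$; the decay of $f_0$ near $t=\ell$ shows that the cutoff changes the energy by $\exl$, whence $\eones(\mu)\leq\eonel(\mu)+\exl$. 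Combining the two bounds gives the energy statement and, as a byproduct, $\|f_0-f_\star\|_{L^2}=\exl$.

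The delicate point, and the main obstacle, is upgrading this to the uniform profile estimate $\|f_0-f_\star\|_\infty=\exl$. First I would show $|\alpha_0-\alpha_\star|=\exl$ by comparing the two optimality conditions $\int_0^\ell(t+\alpha_0)f_0^2=0$ and \eqref{eq: optimality}, feeding in the $L^2$-closeness and decay just obtained. Writing $h=f_0-f_\star$ on $[0,\ell]$ and subtracting the two Euler–Lagrange equations gives the linear problem
\[
  -h''+\bigl[(t+\alpha_\star)^2-\mu+\mu\bigl(f_0^2+f_0f_\star+f_\star^2\bigr)\bigr]h=-\bigl[(t+\alpha_0)^2-(t+\alpha_\star)^2\bigr]f_0,
\]
with $h'(0)=0$ and $h'(\ell)=-f_\star'(\ell)=\exl$; the right-hand side is $\exl$ in weighted norms because $\alpha_0-\alpha_\star=\exl$ and $(2t+\alpha_0+\alpha_\star)f_0$ decays. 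The crux is that the operator on the left is coercive: it dominates the second variation of $\fonea$ at $f_\star$, whose associated Schrödinger operator $-\partial_t^2+(t+\alpha_\star)^2-\mu+\mu f_\star^2$ is nonnegative with one-dimensional kernel spanned by the ground state $f_\star>0$, so that adding the strictly positive term $2\mu f_\star^2$ makes the linearised operator strictly positive definite (this non-degeneracy of the half-line minimizer is the essential analytic input). By the exponential decay of $f_\star$ this coercivity persists, uniformly in $\ell$, for the finite-interval operator, allowing me to invert it and conclude $\|h\|_{H^1((0,\ell))}=\exl$; a one-dimensional Sobolev embedding then gives $\|h\|_\infty=\exl$. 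The hardest part is therefore precisely the interplay between the uniform-in-$\ell$ decay near the far Neumann endpoint and the quantitative non-degeneracy that forbids the exponentially small perturbation from producing an order-one change in the minimizer.
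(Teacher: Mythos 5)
Your treatment of the threshold case is built on a false spectral claim, and this is a genuine gap precisely at the value of $\mu$ the paper cares about most. It is not true that the Neumann realization of $-\partial_t^2+(t+\alpha)^2$ on a subinterval $(0,\ell)$ has ground state energy bounded below by the half-line value: Neumann truncation \emph{strictly lowers} the energy. Indeed, if $u>0$ is the half-line Neumann ground state with energy $\lambda_\alpha$, then
\bdm
	\frac{\displaystyle\int_0^\ell \lf( {u^{\prime}}^2 + (t+\alpha)^2 u^2 \ri) \dd t}{\displaystyle\int_0^\ell u^2 \dd t} \;=\; \lambda_\alpha \,-\, \frac{\displaystyle\int_\ell^{+\infty} \lf( {u^{\prime}}^2 + \lf( (t+\alpha)^2 - \lambda_\alpha \ri) u^2 \ri) \dd t}{\displaystyle\int_0^\ell u^2 \dd t} \;<\; \lambda_\alpha ,
\edm
as soon as $(\ell+\alpha)^2>\lambda_\alpha$. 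At the optimal $\alpha$ this places the linear interval energy strictly (if only exponentially) below $\theo$, and hence $\eonel(\theo)<0$ with a non-trivial minimizer $f_0$. So your assertion that for $\mu\leq\theo$ "both quantities vanish and the statement is immediate" is false at $\mu=\theo$: the conclusion of the lemma still holds there, but it requires an argument, which is exactly why the paper devotes a separate step to it (prove the linear interval energy is $\theo+\exl$, deduce $\lf\| f_0 \ri\|_4=\exl$, drop the quartic term, write $f_0=\lambda\psi_{0,\ell}+\exl$, and force $\lambda=\exl$ by re-inserting into the energy). Even for fixed $\mu<\theo$, where the conclusion is easy, your justification invokes the same incorrect monotonicity; the correct input is the $\exl$-closeness of the interval linear energy to $\theo$, which itself needs proof.

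In the regime $\mu>\theo$ your two-sided energy comparison is sound, but the upgrade to $\lf\| f_0-f_\star \ri\|_\infty=\exl$ has two structural problems. First, a circularity: you claim $\lf\| f_0-f_\star \ri\|_{L^2}=\exl$ "as a byproduct" of energy closeness, yet passing from almost-minimality to closeness of minimizers requires exactly the quantitative stability (coercivity) you only introduce afterwards, and that stability argument in turn consumes the $L^2$- and $\alpha$-closeness as inputs. Second, the coercivity claim is flawed as stated: the operator acting on $h$ has potential $\mu(f_0^2+f_0f_\star+f_\star^2)$, and this dominates the Hessian potential $3\mu f_\star^2$ only where $f_0\geq f_\star$, which is not known; what is true pointwise is only domination of the Euler--Lagrange operator $-\partial_t^2+(t+\alpha_\star)^2-\mu+\mu f_\star^2$, which annihilates $f_\star$ and therefore has no spectral gap. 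On top of this, the non-degeneracy of the half-line minimizer, which you correctly isolate as the essential analytic input, is asserted rather than proved. The paper's proof avoids all three issues by a different mechanism: it compares minimizers at \emph{fixed} $\alpha$ through the multiplicative substitution $f_{\alpha,\ell}=f_\alpha v$, which produces the exact splitting $\eoneal=\eonea+\int f_\alpha^2\{ {v^{\prime}}^2+\tfrac{\mu}{2}f_\alpha^2(1-v^2)^2\}$ with a manifestly non-negative, coercive remainder (needing only the Riccati bound on $f_\alpha^{\prime}/f_\alpha$, not any Hessian gap), and it controls $\alpha_0-\alpha_\star$ via the quadratic expansion \eqref{eq: eoneal diff} built on the optimality condition \eqref{eq: optimality}, in the correct logical order. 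To salvage your additive linearization you would need to prove the Hessian gap, replace the pointwise domination by an argument exploiting $f_0(f_0+f_\star)\approx 2f_\star^2$, and restructure so that the $\alpha$- and $L^2$-closeness are established before they are used.
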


\begin{proof}
	For given $ \alpha \in \R $, {let us denote by $ f_{\alpha}\in \domone $ and $ f_{\alpha, \ell}\in H^1((0,\ell); \R) $} the unique minimizers of $ \fonea $ and $ \foneal $, respectively. By exponential bounds analogous to the ones stated in \cite[Lemma 9]{CR2}, one can easily show that both $ f_{\alpha} $ and $ f_{\alpha,\ell} $ are smooth and decay exponentially, i.e.\footnote{{Here and in the following, $ C $ stands for a positive finite constant, whose value may vary from line to line and which is independent of the parameters of the model.}},
	\beq
		\label{eq: exp bounds}
		f_{\alpha}(t) \leq C e^{-\frac{1}{2}(t + \alpha)^2},	 	\qquad		f_{\alpha, \ell}(t) \leq C e^{-\frac{1}{2}(t + \alpha)^2},
	\eeq
	In fact, if $ \mu \leq \theo $, by the same argument mentioned above, $ f_{\alpha} \equiv 0 $ for any $ \alpha \in \R $, but the exponential bound hold nevertheless. Using $ f_{\alpha}  $ as a trial state for $ \foneal $ and exploiting this information, we immediately get the bound $ \eoneal \leq \eonea + \OO(\ell^{-\infty}) $. For the lower bound, it is then sufficient to construct an $ H^1$-regularization of $ f_{\alpha,\ell} $, which is positive and decreasing for $ t \geq \ell $, and evaluate $ \fonea $ on it. The exponential decay of $ f_{\alpha,\ell} $ then yields $ \eonea \leq \eoneal + \OO(\ell^{-\infty}) $. In conclusion, for any $ \alpha \in \R $,
	\beq
		\label{eq: enalpha diff}
		\eonea- \eoneal = \OO(\ell^{-\infty}).
	\eeq
	
	In order to deduce the result on the ground state energies $\eones$ and $\eonel$, we observe that, for $ \mu \leq \theo $, $  \eonea  = 0 $ for any $ \alpha \in \R $ and therefore the result easily follows. By analogous bounds to \eqref{eq: exp bounds} for the eigenstates of the linear problem, we also get
	\bdm
		\inf_{\alpha\in\R} \inf_{\lf\| f \ri\|_{L^2(0,\ell)}=1} \int_{0}^{\ell} \diff t \lf\{ |\partial_t f|^2 + (t+\alpha)^2 |f|^2 \ri\} = \theo + \OO(\ell^{-\infty}),
	\edm
	which immediately implies that, for $ \mu < \theo $, $ \lf\| f_0 \ri\|_2 = \OO(\ell^{-\infty}) $ and $ \lf\| f_0 \ri\|_4 = \OO(\ell^{-\infty}) $. This in turn can be used in the variational equation for $ f_0 $ to deduce the $ L^{\infty} $ bound. From
	\bdm
		- f_0^{\prime\prime} + (t + \alpha_0)^2 f_0 = \mu \lf(1 - f_0^2 \ri) f_0,
	\edm
	one gets that $ \lf\| f_0 \ri\|_{H^1(0,\ell)} = \OO(\ell^{-\infty}) $ and thus the estimate.
	
	 Exactly at the threshold $ \mu = \theo $, the proof is slightly more involved.  {First of all, we recall that $f_\star = 0$ and $E^{\mathrm{1D}}_\star(\Theta_0) = 0$, for $\mu = \Theta_0$ (see \eqref{eq: E* and f* at Theta0}). We therefore have to show that, for $\mu = \Theta_0$, $\|f_0\|_\infty = \mathcal{O}(\ell^{-\infty})$ and $E^{\mathrm{1D}}_\ell = \mathcal{O}(\ell^{-\infty})$. Using then the definitions of $\Theta_0$ and of $\mathcal{E}^{1D}_{\ell}$, one obtains that $ \lf\| f_0 \ri\|_4 = \OO(\ell^{-\infty})  $ and therefore the nonlinear term can be dropped from the functional $\mathcal{E}^{1D}_{\ell}$, up to corrections of order $ \OO(\ell^{-\infty}) $.} This in turn implies that $ f_0 = \lambda \psi_{0,\ell} + \OO(\ell^{-\infty}) $ in $ L^2$-sense, where $ \psi_{0,\ell} $ stands for the normalized minimizer of the linear problem and $ \lambda \in \R^+ $. However, plugging this ansatz in the energy, one gets that $ \frac{1}{2} \mu \lambda^4 \lf\| \psi_{0,\ell} \ri\|_4^4 = \OO(\ell^{-\infty}) $. Hence, since $ \lf\| \psi_{0,\ell} \ri\|_4 \geq c > 0 $, this yields $ \lambda = \OO(\ell^{-\infty}) $ and the proof is completed as above.

	On the opposite, for $ \mu > \theo $, we need to estimate the difference $ \alpha_0 - \alpha_{\star} $. This can be done  as in \cite[Prop. 1]{CR2}. First, we claim that, for any $ \alpha \in \R $,
\beq
	\label{eq: f estimate}
	\lf\| f_{\alpha,\ell} - f_{\alpha} \ri\|_{H^1(0,\ell)} =  \OO(\ell^{-\infty}),
\eeq
which in turn implies an analogous bound for the $ L^{\infty}$ norm of the difference. Next we use this as follows 
\bml{
	\label{eq: eoneal diff}
	\eoneal - \eones = {E}^{\mathrm{1D}}_{\alpha,\ell} - {E}^{\mathrm{1D}}_{\alpha_\star}= \int_0^{+\infty} \diff t \: \lf( 2 t ( \alpha - \alpha_{\star}) + \alpha^2 - \alpha^2_{\star} \ri) f^2_{\star} + \OO(\ell^{-\infty}) \\
	=  \lf( \alpha - \alpha_{\star} \ri)^2 \lf\| f_{\star} \ri\|_2^2 + \OO(\ell^{-\infty}),
}
where we used the optimality \eqref{eq: optimality}. Hence, since for any $ \mu > \theo $ there exists $ c > 0 $ independent of $ \ell $ such that $ \lf\| f_{\star} \ri\|_2^2 \geq c $, 
\bdm
	\eonel - \eones = \OO(\ell^{-\infty})
\edm
and 
\begin{equation}\label{eq: alpha0 - alpha*}
	 \alpha_{\star} - \alpha_0 = \OO(\ell^{-\infty}),
\end{equation}
 which yields the estimate of the difference between the minimizers via 
\bdm
	\lf\| f_{0} - f_{\star} \ri\|_{\infty} \leq \lf\| f_{\alpha_0, \ell} - f_{\alpha_{\star},\ell} \ri\|_{\infty}
	+ \lf\| f_{\alpha_{\star},\ell} - f_{\star} \ri\|_{\infty} = \OO(\ell^{-\infty}),
\edm
since, for any $ \alpha \in \R $, $  \lf\| f_{\alpha,\ell} - f_{0} \ri\|_{\infty} \leq C \lf| \alpha - \alpha_0 \ri| $. Indeed, as in \eqref{eq: eoneal diff}, one shows that, by the optimality of $ \alpha_0 $, 
\bdm
	 \lf| {E}^{\mathrm{1D}}_{\alpha,\ell} - \eonel \ri| \leq C \lf( \lf\| f_{\alpha,\ell} - f_{0} \ri\|_{H^1(0,\ell)} + \lf(\alpha - \alpha_0 \ri)^2 \ri),
\edm
which implies the estimate, as in the derivation of \eqref{eq: f estimate}.

It just remains to complete the proof of \eqref{eq: f estimate}. To this purpose, we set $ f_{\alpha, \ell} = : f_{\alpha} v $, for some $ v \in H^1(0,\ell) $ (recall that both $ f_{\alpha,\ell} $ and $ f_{\alpha} $ are real as so is $ v $). After some trivial calculation and by an integration by parts, we get the splitting 
\bdm
	\eoneal = \eonea + \int_{0}^{\ell} \diff t \: f_{\alpha}^2 \lf\{ {v^{\prime}}^2 + \tfrac{1}{2} \mu f_{\alpha}^2 \lf( 1 - v^2 \ri)^2 \ri\}
\edm
where we used the identity $ \eonea = - \frac{1}{2} \mu \lf\| f_{\alpha} \ri\|_4^4 $ and that, thanks to the Neumann conditions at $ 0 $ and $ \ell $, the boundary terms of the integration by parts vanish. Hence, by \eqref{eq: enalpha diff}, 
\bdm
	\lf\| f_{\alpha}^2 (1 - v^2) \ri\|_2 = \lf\| f_{\alpha,\ell}^2 - f_\alpha^2 \ri\|_2 = \OO(\ell^{-\infty}). 
\edm
Moreover, we get
\beq
	\label{eq: est 1}
	\OO(\ell^{-\infty}) = \int_{0}^{\ell} \diff t \: f_{\alpha}^2 {v^{\prime}}^2 = \int_{0}^{\ell} \diff t \: \lf( f_{\alpha,\ell}^{\prime} - v f_{\alpha}^{\prime} \ri)^2 \geq \lf( \lf\| f_{\alpha,\ell}^{\prime} -   f_{\alpha}^{\prime} \ri\|_2 - \lf\| f_{\alpha}^{\prime} (1 - v) \ri\|_2 \ri)^2.
\eeq
To bound the last term on the r.h.s., we act as follows:
\beq
	\label{eq: est 2}
	\lf\| f_{\alpha}^{\prime} (1 - v) \ri\|_2 \leq \lf\| f_{\alpha}^2 (1 - v^2) \ri\|_2 \lf\| \frac{f_{\alpha}^{\prime}}{f_{\alpha}} \ri\|_4^2 = \OO(\ell^{-\infty}),
\eeq
since 
\bdm
	\lf\| \frac{f_{\alpha}^{\prime}}{f_{\alpha}} \ri\|_{L^4(\R^+)} \leq C.
\edm
Indeed, setting $ g_{\alpha} : = f_{\alpha}^{\prime}/f_{\alpha} $, the variational equation for $ f_{\alpha} $ yields
\bdm
	- g_{\alpha}^{\prime} + g_{\alpha}^2 + (t + \alpha)^2 - \mu (1 - f_{\alpha}^2) = 0,
\edm
and multiplying by $ g_{\alpha}^2 $ and integrating, we get (recall that $ g_{\alpha}(0) = 0 $)
\bdm
	\lf\| g_{\alpha} \ri\|_{L^4(\R^+)}^4 + \lf\| (t + \alpha) g_{\alpha} \ri\|_{L^2(\R^+)}^2 \leq \mu \lf\| g_{\alpha} \ri\|_{L^2(\R^+)}^2,
\edm
which by interpolation implies the boundedness of $ \lf\| g_{\alpha} \ri\|_{L^4(\R^+)}^4 $. In conclusion, plugging \eqref{eq: est 2} in \eqref{eq: est 1}, we get 
\bdm
	\lf\| f_{\alpha,\ell}^{\prime} -   f_{\alpha}^{\prime} \ri\|_2 = \OO(\ell^{-\infty})
\edm
and therefore \eqref{eq: f estimate}. 
\end{proof}

\subsection{Magnetic Laplacian in a sector}\label{subsect: magnetic lapl}

An important role in our analysis is played by the magnetic Laplacian. We thus denote by \begin{equation}\label{eq: def Lbeta}
\Lap_{\beta} =-(\nabla-\ii\Fb)^2
\end{equation}
 the self-adjoint realization in $L^2(\S_{\beta})$ of the magnetic Laplacian with uniform magnetic field and Neumann boundary conditions \( \nv \cdot(\nabla-\ii\Fb)u=0\) on  \(\partial \S_{\beta}\) (see \eqref{eq: sector}), where \(\nv\) stands for the unit inward normal to \(\partial\S_{\beta} \). We recall that the bottom of the spectrum of $ \Lap_\beta $ is given by \eqref{eq: mu beta} and denoted by $ \mu_{\beta} $.   By Persson's lemma, the bottom of the essential spectrum of  $ \Lap_\beta $ is independent of the angle $ \beta $ (see \cite[Lemma 2.2]{Bo})  
\begin{equation}\label{eq:Theta0}
\inf\sigma_{\rm ess}(\Lap_\beta)=\inf\sigma(\Lap_\pi) = : \Theta_0.
\end{equation}
Let us recall two inequalities related to the spectrum of the  magnetic Laplacian in the half-plane and the plane, respectively (see, e.g., \cite[Prop. 6.1]{HM}; the version including the vertex is stated in \cite[Thm 1.2]{Bo}). The idea is that if the support of the function may reach the boundary of the sector, excluding the vertex, then, after blow-up, the effective spectral problem is given by the magnetic Laplacian on the half-plane. On the contrary, if the support of the function is contained in the interior of $ \S_{\beta} $, no boundary is present in the effective model which becomes the magnetic Laplacian in the whole of $ \R^2 $. From the spectral point of view such effective models have different ground state energies given by $ \theo $ in the half-plane and $ 1 $ in the plane, respectively.

Given $R>0$,   let  $\cB_R$ denote the open disc centered at  the vertex  $\mathbf{0}$ of $\S_\beta$,  and  of radius $R$.  We deal with two types of domains:
\begin{itemize}
\item {{\it inner}} domains {contain the vertex of the sector} and are of the form $\S_\beta\cap \cB_R$;
\item {{\it outer}} domains are on the opposite far from the corner and are of the form $\S_\beta\setminus\overline{\cB}_R$.
\end{itemize}
\begin{lemma}\label{lem:spectrum}
Let $\Omega \subset \S_{\beta} $ be an open set and let \( \psi \in W^{1,2}_{\Fb}(\S_\beta)\). 
\begin{itemize}
\item If $\Omega$ is {an innner domain} and $\psi=0$ on $\S_\beta\cap\partial \Omega$,  then
\beq
	\int_{\Omega} \diff \xv \: \lf| \lf(\nabla+\ii\Fb \ri) \psi \ri|^2 \geq \mu_\beta \int_{\Omega} \diff \xv \: \lf| \psi \ri|^2;
\eeq
\item If $\Omega$ is {an outer domain} and $\psi=0$ on $\S_\beta\cap\partial \Omega$,  then 
\beq
	\int_{\Omega} \diff \xv \: \lf| \lf(\nabla+\ii\Fb \ri) \psi \ri|^2 \geq \theo \int_{\Omega} \diff \xv \: \lf| \psi \ri|^2;
\eeq
\item 
If  $\Omega$ is {an outer domain} and ${\rm supp\,} \psi \subset \Omega  $,  then 
\beq
	\int_{\Omega} \diff \xv \: \lf| \lf(\nabla+\ii\Fb \ri) \psi \ri|^2 \geq \int_{\Omega} \diff \xv \: \lf| \psi \ri|^2.
\eeq
\end{itemize}
\end{lemma}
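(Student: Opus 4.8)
The plan is to deduce all three inequalities from three \emph{reference} spectral lower bounds for the magnetic Laplacian with unit field $\curl\Fb=1$: the bound by $\mu_\beta$ on the full Neumann sector $\S_\beta$, which is nothing but the variational characterisation \eqref{eq: mu beta} of $\inf\sigma(\Lap_\beta)$; the bound by $\theo$ on the Neumann half-plane, which is the very definition of $\theo$ in \eqref{eq: theo}--\eqref{eq:Theta0}; and the bound by $1$ on the whole plane $\R^2$, i.e.\ the lowest Landau level. In each case the mechanism is identical: the hypotheses imposed on $\psi$ are exactly what is needed to extend $\psi$ by zero from $\Omega$ to a larger domain while remaining in the appropriate magnetic Sobolev space, after which the corresponding global lower bound applies and, since the extension vanishes outside $\Omega$, reproduces the stated inequality on $\Omega$.

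The first and third items are then immediate. For an \emph{inner} domain, the part of $\partial\Omega$ lying in the open sector carries the Dirichlet condition $\psi=0$, so the zero-extension $\widetilde\psi$ belongs to $W^{1,2}_{\Fb}(\S_\beta)$, still satisfies the Neumann condition on $\partial\S_\beta$, and vanishes outside $\Omega$; applying the bound $\int_{\S_\beta}\lvert(\nabla+\ii\Fb)\widetilde\psi\rvert^2\geq\mu_\beta\int_{\S_\beta}\lvert\widetilde\psi\rvert^2$ gives the claim. For the third item, $\mathrm{supp}\,\psi\subset\Omega$ with $\Omega$ an \emph{outer} domain forces the support to stay away both from the vertex and from $\partial\S_\beta$, so the zero-extension lies in $W^{1,2}_{\Fb}(\R^2)$ (no boundary is seen) and the whole-plane estimate $\int_{\R^2}\lvert(\nabla+\ii\Fb)\widetilde\psi\rvert^2\geq\int_{\R^2}\lvert\widetilde\psi\rvert^2$ yields the inequality with constant $1$.

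The second item is the genuinely delicate one and I expect it to be the \textbf{main obstacle}. Here the zero-extension across the inner arc again produces $\widetilde\psi\in W^{1,2}_{\Fb}(\S_\beta)$ supported away from the vertex, but now the support may reach $\partial\S_\beta$ along the two edges, where only the Neumann condition is available, so the full-sector constant $\mu_\beta$ is not the right one. The relevant bound is the half-plane value $\theo$: near each edge the domain looks, after a blow-up sending the vertex to infinity, like a Neumann half-plane, while away from the edges it looks like the full plane (with the larger constant $1\geq\theo$). The route I would follow, and which is carried out in \cite{HM,Bo}, is to split $\widetilde\psi$ by an IMS partition of unity adapted to the two edges and to the interior, to bound each localised piece by the appropriate reference constant, and to absorb the localisation errors $\int\lvert\nabla\chi_j\rvert^2\lvert\widetilde\psi\rvert^2$ using that in the outer region the two edges are separated by a distance proportional to the distance from the vertex. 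Making this decoupling quantitative---so that each Neumann edge contributes only the half-plane constant $\theo$ and the errors are negligible on the blow-up scale---is exactly the point that requires care, whereas the remaining two inequalities are essentially free once the zero-extension is justified.
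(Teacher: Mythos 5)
You should note at the outset that the paper contains no proof of \cref{lem:spectrum}: the lemma is recalled from the literature, with pointers to \cite[Prop.~6.1]{HM} and \cite[Thm~1.2]{Bo}, so your sketch has to be judged on its own merits. Your arguments for the first and third items are correct and are the standard ones: for an inner domain, the vanishing of the trace of $\psi$ on $\S_\beta\cap\partial\Omega$ is exactly what is needed for $\psi\mathbf{1}_\Omega$ to belong to $W^{1,2}_{\Fb}(\S_\beta)$, after which the inequality is the variational characterization \eqref{eq: mu beta} of $\mu_\beta$; for the third item, $\mathrm{supp}\,\psi\subset\Omega$ keeps the support away from all of $\partial\S_\beta$ as well as from the arc, so the extension by zero lies in $W^{1,2}_{\Fb}(\R^2)$ and the constant $1$ is the first Landau level of the unit-field Laplacian on the plane.

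The second item is where your proposal has a genuine gap, and it is not merely a matter of ``care''. First, the IMS scheme you describe cannot produce the constant $\theo$: to decouple the two Neumann edges of $\S_\beta\setminus\overline{\cB}_R$ the cutoffs must vary on the scale of the local width of the sector, which at radius $\varrho$ is of order $\varrho\sin\frac{\beta}{2}$, so the localization error $\sum_j\int|\nabla\chi_j|^2|\psi|^2$ is of order $R^{-2}\|\psi\|_{L^2(\Omega)}^2$, and the method yields at best the constant $\theo-CR^{-2}$; since the claimed inequality has no slack, there is no positive term into which this error can be absorbed. Your appeal to a ``blow-up scale'' is empty here: the lemma is a fixed-scale statement (the field strength is $1$ and $R$ is fixed), with no small parameter in sight. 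Second, and more importantly, the loss is unavoidable, because the clean inequality is actually \emph{false} for outer domains of small radius whenever \eqref{eq: condition} holds. Indeed, take the normalized ground state $u_\beta$ of $\Lap_\beta$, which exists, is bounded and decays exponentially since $\mu_\beta<\theo=\inf\sigma_{\rm ess}(\Lap_\beta)$ by \eqref{eq:Theta0}, and multiply it by a radial logarithmic cutoff $\eta_R$ vanishing on $\cB_R$ and equal to $1$ outside $\cB_{\sqrt R}$. The function $\eta_R u_\beta$ is admissible for the second item, and, because the condenser capacity of $\cB_R$ inside $\cB_{\sqrt R}$ is $\OO(1/\log(1/R))$, the IMS formula gives it a Rayleigh quotient $\mu_\beta+\OO(1/\log(1/R))$, strictly below $\theo$ once $R$ is small. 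Consequently the half-plane constant can only be attained for $R$ sufficiently large, or in a version carrying an explicit $R$-dependent error; that quantitative form is what localization arguments of the type you sketch (and the estimates in \cite{Bo,HM}) deliver, and it is the only form your strategy can reach. As written, your plan therefore cannot be completed into a proof of the second item for an arbitrary outer domain, and any application of that item at unit scale (for instance in the proof of \cref{lemma:mu=Th0}) implicitly requires the quantitative version, with a large threshold radius and correspondingly adjusted constants.
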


\subsection{{Another corner effective energy at the threshold $\mu = \Theta_0$}}\label{subsec: Dirichlet eff energy}\label{sec: dirichlet corner var prob}

In addition to the variational problems introduced in \eqref{eq:gse-corner} and \eqref{eq-E-CG},  which are studied in details in \cite[Sect. 3]{CG2},  we introduce another one on $ \corner $.  We set  for $ \mu \in [\theo, 1) $
\begin{equation}\label{eq:gse-corner-D}
 	\ecordl(\mu) = \inf_{\psi\in \domcord} \E_\mu[\psi; \corner]
\end{equation}
where
\beq\label{eq: def domain Dirichlet corner}
	\domcord = \{\psi\in H^1(\R^2) \: \big| \: \psi=0 \mbox{ on }\partial\Gamma^{\rm bd} \cup \partial\Gamma^{\rm in}  \},
\eeq
with minimizer $ \mdirl $.
Such an energy is expected to approximate $ \ecor (\mu)$ when $ \mu \to \theo^+ $ (see \cref{prop:mu=Th0}): indeed, on one hand, $ \eonel(\theo) = o_{\ell}(1)$, so that the first term in \eqref{eq-E-CG} can be dropped, and, on the other one,   
 the boundary conditions in $ \domcor $ are given in terms of a function $ \psi_{0} $ such that (see \cref{lemma: 1D}),
\bdm
	\psi_{0} \xrightarrow[\mu \to \theo]{\mathrm{pointwise}} 0.
\edm
The main reason behind the introduction of such a variant of $ \ecorl $ is that the former, thanks to the Dirichlet conditions at the boundaries $ \partial\Gamma^{\rm bd} \cup \partial\Gamma^{\rm in} $, can be linked more easily to a variational problem in the full sector $ \S_{\beta} $.

For some given $a > 1 $, we introduce the set of parameters
\beq
	\mathscr A=\lf\{(L,\ell)\in\R_+ \times \R_+ \: \big| \: 1\leq \ell<L\tan\tfrac{\beta}{2} \leq C \ell^{a} \ri\}.
\eeq
Note that, by construction, if one of the two parameters in $ \mathscr{A} $ tends to $ + \infty $, the other is forced to do the same.  Then, we define the following quantity
\begin{equation}\label{eq:def-en-th0}
\edir(\theo)= \inf_{(L,\ell)\in\mathscr A}\ecordl(\theo),
\end{equation}
which we will prove to be finite and non-positive.  In fact,  the analogue of \cite[Prop. 3.4]{CG2} for $ \ecordl(\theo) $ is the next statement.  Note indeed that the former result is formulated for $ \mu > \theo $ and it is not immediately obvious whether it extends at the threshold.

\begin{lemma} \label{lemma:mu=Th0}
Let \eqref{eq: condition} hold.   Then,
\beq\label{eq:lim-mu=theo}
	 \lim_{\substack{\ell\to+\infty\\ \ell\ll L\ll \ell^a}} \ecordl(\theo) {= \edir(\theo)}
\eeq
and  $-\infty<\edir(\theo)\leq0$.  
\end{lemma}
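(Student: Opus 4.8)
The plan is to split the statement into three parts: the uniform lower bound $\edir(\theo)>-\infty$, the existence of the limit \eqref{eq:lim-mu=theo}, and the sign $\edir(\theo)\leq0$. The last is immediate from the admissible trial state $\psi\equiv0$, which gives $\ecordl(\theo)\leq\E_{\theo}[0;\corner]=0$ for every $(L,\ell)\in\mathscr A$, hence $\edir(\theo)\leq0$. The two substantive points are the lower bound, which rests on the spectral inequalities of \cref{lem:spectrum} together with the special structure of the functional at $\mu=\theo$, and the identification of the limit with the infimum, which follows from a monotonicity argument peculiar to the Dirichlet (zero) boundary conditions.

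For the lower bound I would fix $R>0$ and perform an IMS localization with smooth $\chi_1,\chi_2$ satisfying $\chi_1^2+\chi_2^2\equiv1$, $\chi_1\equiv1$ on $\cB_R$ and $\chi_1\equiv0$ outside $\cB_{2R}$, so that $|\nabla\chi_1|^2+|\nabla\chi_2|^2\leq CR^{-2}$ is supported in the annulus $\cB_{2R}\setminus\cB_R$. Any $\psi\in\domcord$, extended by zero to $\S_\beta$, belongs to $W^{1,2}_{\Fb}(\S_\beta)$ because it vanishes on $\partial\Gamma^{\rm bd}\cup\partial\Gamma^{\rm in}$; moreover $\chi_1\psi$ vanishes on $\S_\beta\cap\partial\cB_{2R}$ and $\chi_2\psi$ on $\S_\beta\cap\partial\cB_R$, so the first (inner) and second (outer) items of \cref{lem:spectrum} apply to $\chi_1\psi$ and $\chi_2\psi$ respectively. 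The IMS formula then gives
\[
\int_{\corner}\lf|\lf(\nabla+\ii\fv\ri)\psi\ri|^2\geq\mu_\beta\int\chi_1^2|\psi|^2+\theo\int\chi_2^2|\psi|^2-\frac{C}{R^2}\int_{\cB_{2R}\setminus\cB_R}|\psi|^2.
\]
Subtracting $\theo\int|\psi|^2=\theo\int(\chi_1^2+\chi_2^2)|\psi|^2$ and inserting the quartic term, the outer quadratic contribution cancels and the essential feature appears: at $\mu=\theo$ the energy density away from the vertex is $\tfrac12\theo|\psi|^4\geq0$, so the only terms carrying a negative sign are the localized $(\mu_\beta-\theo)\int\chi_1^2|\psi|^2$ and the localization error. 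Both are bounded uniformly in $(L,\ell)$ through $\|\psi\|_\infty\leq1$ (see \eqref{eq: infty}) and $|\S_\beta\cap\cB_{2R}|,\,|\cB_{2R}\setminus\cB_R|\leq CR^2$, yielding $\E_{\theo}[\psi;\corner]\geq-C(R)$ with $C(R)$ independent of $(L,\ell)$. Taking the infimum over $\mathscr A$ gives $\edir(\theo)>-\infty$.

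For the limit I would establish monotonicity of $E^{\mathrm D}_{L,\ell}(\theo)$ under enlargement of the corner region, which is available precisely because the boundary data vanish and $\mu=\theo$. Given $\delta>0$, pick $(L_0,\ell_0)\in\mathscr A$ with $E^{\mathrm{D}}_{L_0,\ell_0}(\theo)\leq\edir(\theo)+\delta$ and an associated near-minimizer. For every $(L,\ell)$ in the regime $\ell\ll L\ll\ell^a$ with $L\geq L_0$ and $\ell\geq\ell_0$ one has $\corn(L_0,\ell_0)\subset\corner$, and extending the near-minimizer by zero produces an admissible competitor in $\domcord$ whose energy is unchanged, because the enlarged region contributes a vanishing integrand. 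Hence $\ecordl(\theo)\leq\edir(\theo)+\delta$ for all sufficiently large $\ell$, so $\limsup_{\ell\to\infty}\ecordl(\theo)\leq\edir(\theo)$, while $\liminf_{\ell\to\infty}\ecordl(\theo)\geq\edir(\theo)$ is immediate from the definition of the infimum. This proves \eqref{eq:lim-mu=theo}.

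The main obstacle is the lower bound. Applying $\int|(\nabla+\ii\fv)\psi|^2\geq\mu_\beta\int|\psi|^2$ globally would create a negative term $(\mu_\beta-\theo)\|\psi\|_2^2$ proportional to $|\corner|$, which diverges as $\ell,L\to\infty$; the decisive idea is to confine the weaker constant $\mu_\beta$ to the bounded inner region around the vertex and to use that, away from it, the threshold value $\theo$ renders the full kinetic-minus-potential density nonnegative, so that the potentially divergent part of the energy is manifestly controlled. Getting the localization geometry right and verifying that $\chi_1\psi$ and $\chi_2\psi$ vanish on the appropriate interior traces is the delicate part; once this is in place, the monotonicity yielding the limit is essentially immediate.
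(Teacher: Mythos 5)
Your proof is correct and takes essentially the same route as the paper: the sign via a trivial competitor, the uniform lower bound via an IMS partition of unity concentrated near the vertex combined with \cref{lem:spectrum} (the paper fixes radii $1$ and $2$ where you keep a general $R$), and the identification of the constrained limit with $\edir(\theo)$ via the monotonicity of $\ecordl(\theo)$ under enlargement of the domain, which is exactly the paper's use of the nesting $\mathscr{D}^{\mathrm{D}}_{L,\ell_1}\subset\mathscr{D}^{\mathrm{D}}_{L,\ell_2}$ afforded by the zero Dirichlet data. The only slip is invoking $\|\psi\|_\infty\leq 1$ for an arbitrary $\psi\in\domcord$, whereas \eqref{eq: infty} holds only for minimizers; as in the paper, simply run the localization argument on a minimizer $\mdirl$ (which exists since $\corner$ is bounded), and the conclusion $\ecordl(\theo)\geq -C$ is unchanged.
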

\begin{proof}
Thanks to the Dirichlet conditions at the boundary, $ \mathscr{D}^{\mathrm{D}}_{L,\ell_1} \subset \mathscr{D}^{\mathrm{D}}_{L,\ell_2} $ for any $ \ell_1 \leq \ell_2<L\tan\tfrac{\beta}{2} $. Hence, the function \(\ell\mapsto \ecordl(\mu) \) is monotone non-increasing on $(0,L\tan\tfrac{\beta}{2})$ for fixed $ L$  and $\mu\in[\theo,+\infty)$.  Similarly, \(L \mapsto \ecordl (\mu)\) is non-increasing on $(\ell/\tan\tfrac{\beta}{2},+\infty)$.  Consequently,  for a fixed $\ell\geq 1$,  there exists
\[
	\ecord(\mu) : = \lim_{L \to+\infty} \ecordl(\mu)  = \inf_{L {\in \R_+ \: | \: (\ell,L) \in \mathscr{A}}} \ecordl(\mu) .
\]
The function \( \ell \mapsto  \ecord(\mu) \) is then monotone non-increasing and therefore the following quantity is well-posed:
\[ 
	\edir(\mu)  : =  \lim_{\ell \to+\infty} \ecord(\mu)  = \inf_{\ell \in[1,+\infty)} \ecord(\mu)  = \lim_{\ell \to+\infty}  \lim_{L \to+\infty} \ecordl(\mu) . 
\]
For $\mu=\theo$,  the existence of the limit in \eqref{eq:lim-mu=theo} follows and coincides with the above quantity (and with \eqref{eq:def-en-th0}),    since the limit is taken on a subsequence.

By the variational equation for the minimizer,  we get that for any $ (\ell, L) \in \mathscr{A} $, \( \ecordl(\Theta_0) \leq 0\),   which implies $\edir(\theo)\leq 0$ by \eqref{eq:def-en-th0}.
Next, to show that $\ecor(\theo)$ is finite,   we prove that there exists a positive constant \(C  \), such that
\beq\label{eq:claim-e-finite}
\ecordl(\theo)\geq -C,		\qquad		\forall L> 2\mbox{ and }2<\ell<L\tan\tfrac{\beta}{2}.
\eeq

 Let $ \mdirl \in \domcord $ be a minimizer realizing $ \ecordl(\theo) $. We recall the bound \eqref{eq: infty}, i.e., \(|\mdirl|\leq 1\). Consider a partition of unity such that
\[\chi_1^2+\chi_2^2=1\mbox{ on }\overline{\Gamma_\beta(L,\ell)},\qquad \begin{cases}
\chi_1=1\mbox{ on }\overline{\Gamma_\beta(L,\ell)\cap\cB_1}\\
\chi_2=1\mbox{ on }\overline{\Gamma_\beta(L,\ell)\setminus\cB_2}
\end{cases}\]
{and $ \|\nabla\chi_1\|_\infty + \|\nabla\chi_2\|_\infty \leq C $.}
Then,  we can write
\begin{multline}
	\label{eq: partition ecordl}
	\ecordl(\theo) 
	\geq \E_{\theo} \big[\chi_1\mdirl; \corner\big] + \E_{\theo} \big[\chi_2\mdirl; \corner\big] -\sum_{i=1}^2\int_{\Gamma_\beta(L,\ell)} \diff \xv |\nabla\chi_i|^2 |\mdirl|^2	\\
	\geq \E_{\theo} \big[\chi_1\mdirl; \corner\big] + \E_{\theo} \big[\chi_2\mdirl; \corner\big] - C
\end{multline}
since \(|\mdirl|\leq 1\) and \(\nabla\chi_1,\nabla\chi_2\) are supported in \(\cB_2\setminus\cB_1\).  Furthermore, since \(\chi_1\) is supported in \(\overline{\Gamma_\beta(L,\ell)\cap \cB}_2\), by \cref{lem:spectrum}, we get that
\[
	\E_{\theo}\big[\chi_1\mdirl; \corner\big] \geq (\mu_\beta-\Theta_0)\int_{\Gamma_\beta(L,\ell)\cap\cB_2 } \diff \xv \: \big|\mdirl \big|^2 \geq (\mu_\beta-\Theta_0)|\cB_2  |,
\]
where we used \eqref{eq: condition} and \(|\mdirl|\leq 1\).  Finally, again by \cref{lem:spectrum},
\[
	\E_{\theo}\big[\chi_2\mdirl; \corner\big] \geq 0.
\]
Plugging the above inequalities in \eqref{eq: partition ecordl}, we get $ \ecordl(\theo)  \geq -C +4(\mu_\beta-\Theta_0)\pi $, and then the claim in \eqref{eq:claim-e-finite}.
\end{proof}

\section{Corner effective model}\label{sec:cor}

We investigate here the extension of the corner effective models to the threshold $ \mu = \theo $. An important role will be played by the models with Dirichlet conditions introduced in \eqref{eq:gse-corner-D}.

\subsection{{Corner energy at the threshold $\mu = \Theta_0$}}

The importance of the variational problem introduced in \eqref{eq:gse-corner-D} relies on the following result.

\begin{proposition}[Value of $ \ecor $ at $ \theo $] \label{prop:mu=Th0}
\mbox{}	\\
Let \eqref{eq: condition} hold.  Then, 
\beq
	\label{eq: limit ecord}
	\lim_{\mu \to \theo^+} \ecor(\mu) =  \edir(\theo).
\eeq
\end{proposition}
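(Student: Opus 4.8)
The plan is to prove \eqref{eq: limit ecord} by establishing the two-sided comparison
\[
	\ecor(\mu) = \edir(\theo) + o_{\mu-\theo}(1)
\]
and taking $\mu \to \theo^+$. The starting point is the definition \eqref{eq-E-CG}, namely $\ecor(\mu) = \lim_{\ell,L} \lf(-2L \eonel(\mu) + \ecorl(\mu)\ri)$. By \cref{lemma: 1D} and the threshold asymptotics \eqref{eq: asympt mu theo}, we have $\eonel(\mu) = \eones(\mu) + \OO(\ell^{-\infty}) = \OO((\mu-\theo)^2)$, so the boundary term $-2L\eonel(\mu)$ is negligible as $\mu \to \theo^+$ for $L$ in the admissible range \eqref{eq: L ell condition}. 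Thus the whole content is to compare $\ecorl(\mu)$, whose admissible functions satisfy the \emph{nonzero} boundary condition $\psi = \psi_0$ on $\partial\Gamma^{\rm bd} \cup \partial\Gamma^{\rm in}$, with $\ecordl(\theo)$, whose admissible functions satisfy the \emph{zero} Dirichlet condition there.

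First I would handle the upper bound $\ecor(\mu) \leq \edir(\theo) + o(1)$. Given a near-minimizer $\mdirl$ for $\ecordl(\theo)$ (which by \cref{lemma:mu=Th0} is bounded and yields energy close to the finite value $\edir(\theo)$), I would build a trial state for $\ecorl(\mu)$ by adding a correction $\chi\psi_0$ supported near $\partial\Gamma^{\rm bd} \cup \partial\Gamma^{\rm in}$ so as to match the required boundary data. Since $\psi_0$ is built from $(f_0,\alpha_0)$ and $\|f_0\|_\infty = \|f_\star\|_\infty + \OO(\ell^{-\infty}) = \OO(\sqrt{\mu-\theo})$ by \eqref{eq: asympt mu theo}, this correction is small in the relevant norms, and its energetic cost is $o(1)$ as $\mu\to\theo^+$; the cross terms are controlled using $|\mdirl|\le 1$ and the exponential decay \eqref{eq: exp bounds} of $f_0$. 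For the lower bound $\ecor(\mu) \geq \edir(\theo) + o(1)$ I would take a minimizer $\mcor$ of $\ecorl(\mu)$, subtract off a boundary-layer piece $\chi\psi_0$ to produce an admissible competitor for the Dirichlet problem $\ecordl(\theo)$, and track the difference of energies. The change from $\mu$ to $\theo$ in the potential contributes $(\theo-\mu)\|\mcor\|_2^2 + \tfrac12(\mu-\theo)\|\mcor\|_4^4$, which must be shown to be $o(1)$ using an $L^2$ bound on $\mcor$ over $\corner$.

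The key technical device throughout is the spectral \cref{lem:spectrum}, which must be invoked exactly as in the proof of \cref{lemma:mu=Th0}: after localizing via a partition of unity $\chi_1^2+\chi_2^2=1$ separating the inner region near the vertex from the outer region, the inner part is controlled by $\mu_\beta$ and the outer part by $\theo$, so that at $\mu=\theo$ the outer contribution is nonnegative and the inner one costs at most $(\mu_\beta-\theo)|\cB_2|$. This is what makes the Dirichlet energy finite and lets me extract a uniform $L^2$ mass bound on minimizers, closing the lower bound. The main obstacle I anticipate is the \emph{interchange of limits}: $\ecor(\mu)$ is itself defined only after taking $\ell,L\to\infty$ (with $\ell \ll L \ll \ell^a$), while $\edir(\theo)$ is a separate double limit in $(L,\ell)\in\mathscr A$; I must show these limits can be taken in a compatible order and that the $o(1)$ errors from the boundary-layer construction are uniform in $(L,\ell)$ along the admissible regime, so that the $o(\mu-\theo)$ comparison survives passage to the corner limit. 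Controlling the $L^\infty$ smallness of $\psi_0$ uniformly while simultaneously sending $\ell,L\to\infty$ is precisely where the conditions \eqref{eq: L ell condition} and the exponential estimates \eqref{eq: exp bounds} must be used carefully, and I expect this uniformity to be the crux of the argument.
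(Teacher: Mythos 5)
Your trial-state constructions (gluing a boundary layer built from $\psi_0$ onto a Dirichlet minimizer and vice versa, with \cref{lem:spectrum} controlling inner/outer contributions) are close in spirit to what the paper does, but the proposal has a genuine gap at exactly the point you call ``the crux'', and the way you propose to resolve it cannot work. First, the claim that $-2L\eonel(\mu)$ is ``negligible as $\mu\to\theo^+$'' is false: in \eqref{eq-E-CG} the limit $\ell,L\to\infty$ is taken \emph{first}, at fixed $\mu>\theo$, and there $-2L\eonel(\mu)\simeq CL(\mu-\theo)^2\to+\infty$; only the combination $-2L\eonel(\mu)+\ecorl(\mu)$ converges. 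Second, and for the same reason, the error terms your comparison produces grow linearly in $L$ at fixed $\mu$: the potential shift $(\theo-\mu)\lf\|\mcor\ri\|_2^2+\tfrac12(\mu-\theo)\lf\|\mcor\ri\|_4^4$ is \emph{not} $o(1)$, because $\mcor$ carries a boundary layer $\approx\psi_0$ all along $\partial\Gamma^{\rm out}$, so that $\lf\|\mcor\ri\|_2^2\gtrsim L\lf\|f_\star\ri\|_2^2\sim L(\mu-\theo)$ (consistently with the Agmon bound \eqref{eq: decay corner}, which only gives $\lf\|\mcor\ri\|_2^2\leq CL$). Hence no estimate of the form $\lf|\ecorl(\mu)-2L\eonel(\mu)-\ecordl(\theo)\ri|=o_{\mu-\theo}(1)$ \emph{uniformly} in $(L,\ell)\in\mathscr A$ can come out of such constructions: the uniformity you hope for is not a delicate technical point to be checked, it is unattainable along this route, since the quantities being compared only agree after an exact cancellation between the diverging boundary-layer energy and $-2L\eonel(\mu)$.

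The paper's resolution of precisely this obstacle is an ingredient absent from your plan: the wedge model $\wed$ (the corner region with the inner boundary removed) and \cref{lemma: wedge vs corner}, which shows (i) $\ewedl(\mu)=\ecorl(\mu)+\OO(\ell^{-\infty})$ and, crucially, (ii) the \emph{monotonicity} \eqref{eq: monotonicity} of $L\mapsto \ewedl(\mu)-2\eones(\mu)L$ up to $\OO(L^{-\infty})$, proved by a gluing trial-state argument exploiting that $g_{L_1}=g_{L_2}=f_\star$ near the boundary. Monotonicity converts the $L\to\infty$ limit in \eqref{eq: alternative ecor} into a one-sided inequality at \emph{any finite} length: $\ecor(\mu)\leq E^{\mathrm{wedge}}_{L_\mu}(\mu)-2\eones(\mu)L_\mu+o_{L_\mu}(1)$. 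Only after this reduction do boundary-layer comparisons of your type make sense, because one may choose $L_\mu$ depending on $\mu$, e.g.\ $L_\mu=(\mu-\theo)^{-1/2}$, so that $L_\mu\to\infty$ while all error terms of the form $(\mu-\theo)(L_\mu+\delta+\delta^{-1})+(\mu-\theo)^2L_\mu$ vanish as $\mu\to\theo^+$; this yields $\limsup_{\mu\to\theo^+}\ecor(\mu)\leq\edir(\theo)$, and the reverse gluing (wedge minimizer at $\mu$ cut off to a Dirichlet competitor at $\theo$), with a diagonal choice $(\mu-\theo)L_\mu\to0$ and the sign $-2L\eones(\mu)\geq0$, gives the matching bound. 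Without this monotonicity, or an equivalent quantitative replacement of the limit in \eqref{eq-E-CG} by a $\mu$-dependent finite $L$, your argument cannot close.
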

The proof of \cref{prop:mu=Th0} occupies the rest of this section.   The first property we state  is about the  decay in the normal direction {of any minimizer of the} variational problem introduced in \eqref{eq:gse-corner},  which is going to play a key role in the rest of the discussion (see \cite[Lemma B.3]{CG2} for the proof).   Note that the decay is in this case w.r.t. the distance from the outer boundary {(i.e., the distance $t(\xv)$ introduced in \eqref{eq:def-dist-t})} rather {than from the corner}.

Recall that $ \mcor $ stands for any minimizer of the energy functional $ \E_\mu[ \, \cdot \,; \corner] $ (see \eqref{eq:gse-corner}) on the domain $ \domcor $.

\begin{proposition}[Agmon decay of $ \mcor $]
\label{prop:decay corner}
\mbox{}	\\
Let $ 1 \ll \ell \leq L \ll \ell^{a} $ {as $\ell \rightarrow \infty$}. Then, {$ \exists \eps_0 > 0 $, such that,} for any minimizer $ \mcor $  and for any $ \mu \in [\theo, 1) $,
\beq
	\label{eq: decay corner}
	\int_{\corner}  \diff \xv \: \lf\{ \lf| \lf( \nabla + \ii \fv \ri) \mcor \ri|^2 + \lf| \mcor(\xv) \ri|^2 \ri\} \ee^{\varepsilon_0\sqrt{1 - \mu}\, \mathrm{dist}(\xv, \partial \Gamma^{\mathrm{out}})} \leq {C}L.
\eeq
\end{proposition}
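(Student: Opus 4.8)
The plan is to establish the weighted estimate \eqref{eq: decay corner} via the standard Agmon method, adapted to the fact that here the relevant decay is in the normal direction $t(\xv) = \mathrm{dist}(\xv, \partial \Gamma^{\mathrm{out}})$ rather than in the distance from the corner. Let me write $ \Phi(\xv) = \varepsilon_0 \sqrt{1-\mu}\, t(\xv) $ for the weight and $ \psi = \mcor $. First I would recall the variational (Euler--Lagrange) equation satisfied by the minimizer on $\corner$, namely $ -(\nabla + \ii\fv)^2 \psi = \mu(1 - |\psi|^2)\psi $, together with the boundary conditions: Neumann on the outer boundary $\partial\Gamma^{\mathrm{out}}$ and the prescribed Dirichlet-type data $\psi = \psi_0$ on $\partial\Gamma^{\mathrm{bd}} \cup \partial\Gamma^{\mathrm{in}}$.

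The core of the argument is the Agmon identity: testing the equation against $ \ee^{2\Phi} \psi $ and integrating by parts, one obtains the IMS-type localization formula
\[
	\int_{\corner} \diff\xv \: \lf\{ \lf| (\nabla + \ii\fv)(\ee^{\Phi}\psi) \ri|^2 - |\nabla\Phi|^2 \ee^{2\Phi}|\psi|^2 \ri\} = \int_{\corner} \diff\xv \: \mu(1 - |\psi|^2)\ee^{2\Phi}|\psi|^2 + \text{(bd)},
\]
where (bd) collects the boundary contributions. The Neumann condition on $\partial\Gamma^{\mathrm{out}}$ kills the boundary term there, and since $\Phi = 0$ on $\partial\Gamma^{\mathrm{out}}$ (where $t = 0$) one has good control at the part of the boundary where the weight is largest is the inner boundary. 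On $\partial\Gamma^{\mathrm{bd}} \cup \partial\Gamma^{\mathrm{in}}$, the boundary terms involve $\psi_0$ and its derivatives; these are controlled using the exponential decay of $\psi_0$ inherited from \eqref{eq: psistar} and the exponential bounds \eqref{eq: exp bounds} on $f_0$, so that the product with $\ee^{2\Phi}$ (which is at most $\ee^{2\varepsilon_0\sqrt{1-\mu}\,\ell}$ on $\partial\Gamma^{\mathrm{in}}$) remains bounded provided $\varepsilon_0$ is chosen small enough relative to the decay rate of $f_0$.

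Next I would absorb the error terms. Using $|\nabla t| = 1$ a.e., one has $|\nabla\Phi|^2 = \varepsilon_0^2(1-\mu)\ee^{2\Phi}|\psi|^2$ pointwise, which is small compared to the spectral gap produced by \cref{lem:spectrum}. Indeed, applying the third (or second) bullet of \cref{lem:spectrum} to the localized function $\ee^{\Phi}\psi$ — after a partition of unity separating the region $\{|\psi| \geq \delta\}$ from its complement, exactly as in \cref{lemma:mu=Th0} — gives a lower bound of the form $(1 - \mu - C\varepsilon_0^2(1-\mu))\int \ee^{2\Phi}|\psi|^2$ on the region far from the corner and where $|\psi|$ is small. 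Choosing $\varepsilon_0$ small so that the effective coefficient stays bounded below by a positive multiple of $(1-\mu)$, the weighted $L^2$-norm $\int \ee^{2\Phi}|\psi|^2$ is bounded, and feeding this back into the Agmon identity bounds the weighted gradient term as well. The factor $L$ on the right-hand side of \eqref{eq: decay corner} enters because the region near $\partial\Gamma^{\mathrm{out}}$, where the localization does not yet provide decay, has tangential extent $2L$, so the unweighted contribution there is $\OO(L)$ by the a priori bound $|\mcor| \leq 1$ from \eqref{eq: infty}.

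The main obstacle I anticipate is the control of the boundary terms on $\partial\Gamma^{\mathrm{in}}$, where the weight $\ee^{2\Phi}$ attains its maximal value $\ee^{2\varepsilon_0\sqrt{1-\mu}\,\ell}$: one must verify that the exponential growth of the weight is strictly dominated by the Gaussian decay $\ee^{-(t+\alpha_0)^2/2}$ of $\psi_0$ from \eqref{eq: exp bounds}, which forces the smallness condition on $\varepsilon_0$ and requires the constraint $\ell \ll L \ll \ell^a$ to ensure the geometry is compatible. A secondary delicate point is the uniformity of all constants in $\mu \in [\theo, 1)$, so that $\varepsilon_0$ can be chosen independently of $\mu$; here the factor $\sqrt{1-\mu}$ in the weight is precisely what makes the competition between the weight and the spectral gap $1 - \mu$ scale-invariant, allowing a single threshold $\varepsilon_0$ to work across the whole range.
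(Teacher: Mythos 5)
Your skeleton (Agmon identity, localization, bulk spectral gap $1-\mu$, $\varepsilon_0$ chosen small but uniformly in $\mu$ thanks to the $\sqrt{1-\mu}$ scaling, and an $\OO(L)$ contribution from the strip near $\partial\Gamma^{\rm out}$ via $|\mcor|\leq 1$) is the standard route; note that the paper itself does not prove \cref{prop:decay corner} but defers to \cite[Lemma B.3]{CG2}, so this is the right family of arguments to reconstruct. The genuine gap is in your treatment of the boundary terms, which is precisely where this statement differs from a textbook Agmon estimate. Testing the Euler--Lagrange equation with $\ee^{2\Phi}\overline{\mcor}$ and integrating by parts produces, on $\partial\Gamma^{\rm bd}\cup\partial\Gamma^{\rm in}$, the term $\re\int \ee^{2\Phi}\,\overline{\psi_{0}}\;\nu\cdot(\nabla+\ii\fv)\mcor\,\dd\sigma$: it contains the normal derivative of the \emph{minimizer}, not ``$\psi_0$ and its derivatives'', and this quantity is not controlled a priori by the decay \eqref{eq: exp bounds} of $f_0$ --- asserting that the boundary terms are small because $\psi_0$ is small begs the question. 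To close the argument you need an extra ingredient: either an elliptic-regularity/trace bound on $\nu\cdot(\nabla+\ii\fv)\mcor$ along these boundary pieces (a bound polynomial in $L,\ell$ suffices, since the Gaussian decay $f_0(\ell)\leq C\ee^{-(\ell+\alpha_0)^2/2}$ then beats the weight $\ee^{2\varepsilon_0\sqrt{1-\mu}\,\ell}$ and any polynomial, using $L\ll\ell^a$), or, more cleanly, the subtraction trick: work with $u=\mcor-\widetilde{\psi}_0$, where $\widetilde{\psi}_0$ is the natural extension \eqref{eq: psistar} of the boundary data to $\corner$, so that $u$ has zero Dirichlet data and no boundary terms arise at all, at the price of source terms controlled by \eqref{eq: exp bounds}.

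The same omission undermines your localization step: the localized function $\chi_2\ee^{\Phi}\mcor$ (with $\chi_2$ supported away from $\partial\Gamma^{\rm out}$) does \emph{not} vanish on $\partial\Gamma^{\rm bd}\cup\partial\Gamma^{\rm in}$ --- it equals $\chi_2\ee^{\Phi}\psi_0\neq 0$ there --- so it cannot be extended by zero, and neither \cref{lem:spectrum} nor the Landau-level bound applies to it verbatim; this too is repaired only after subtracting the boundary data or treating its smallness as a quantitative perturbation. Two further corrections: a partition of unity ``separating $\{|\mcor|\geq\delta\}$ from its complement'' is neither available (cutoffs subordinate to level sets of $|\mcor|$ carry no gradient bounds) nor needed --- the correct partition is purely spatial, separating $\{t(\xv)\leq 1\}$ from $\{t(\xv)\geq 1\}$, i.e.\ near/far from $\partial\Gamma^{\rm out}$ rather than from the vertex as in \cref{lemma:mu=Th0}, and for the nonlinear term one simply uses $0\leq 1-|\mcor|^2\leq 1$. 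Finally, $|\nabla\Phi|^2=\varepsilon_0^2(1-\mu)$ pointwise (your displayed claim carries a spurious factor $\ee^{2\Phi}|\psi|^2$); that this error scales exactly like the gap $1-\mu$ is indeed, as you say at the end, what allows a single $\varepsilon_0$ to work for all $\mu\in[\theo,1)$.
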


\subsection{An effective model on the wedge}
In order to prove {\eqref{eq: limit ecord}}, we need to introduce yet another minimization problem in a variant of the corner region $ \corner $, namely a wedge $ \wed $ where the interior boundary is removed. More precisely, we set
\beq
	\wed : = \lf\{ \xv \in \mathcal{S}_{\beta} \: \big| \:  0 < |s(\xv)| < L \ri\}.
\eeq
Note that {in the definition of $\wed$,} we are committing a little abuse {of notation, since tubular coordinates are not well-posed on the bisectrix of $ \S_{\beta} $. However, with the sole exception of the bisectrix, the above definition makes perfect sense.}
For $  L \gg 1 $, let $ g \in C^{\infty}(\R^+) $ be such that 
\beq\label{eq: def gt}
	g(t) = 
	\begin{cases}
		f_\star(t) 	&	\mbox{for } t \in \lf[0, \tfrac{1}{3} L \tan \tfrac{\beta}{2}\ri],	\\
		0		&	\mbox{for } t \geq \tfrac{2}{3} L \tan \tfrac{\beta}{2},
	\end{cases}
\eeq
and it is monotonically decreasing for $ t \in  [\tfrac{1}{3} L \tan \tfrac{\beta}{2}, \tfrac{2}{3} L \tan \tfrac{\beta}{2} ] $. Note that $ g $ is different from $ f_\star $ only where the latter is exponentially small in $ L $. Then, we consider the following energy
\beq
	\label{eq: ewedl}
	\ewedl(\mu) = \inf_{\psi \in  \domwed} \E_{\mu}[\psi; \wed],
\eeq
where
\beq
	\label{eq: domwed}
	\domwed = \lf\{ \psi \in H^1(\wed) \: \big| \: \psi=\psi_{g} \mbox{ on }\partial W^{\rm bd} \ri\},
\eeq
$ \partial W^{\rm bd} : = \overline{\wed} \cap \lf\{ s = \pm L \ri\} $ and
\beq
	\label{eq: psig}
	\psi_{g}(\xv(s,t)) = g(t) e^{\ii\alpha_\star s -\frac12 \ii st }.
\eeq
Note again the little abuse of notation, since $ t $ has a different meaning depending on the value of $ s $. {However, we do not see the discontinuity of the phase at the bisectrix thanks to the fact that $g(t) = 0$ for $t\geq (2/3) \tan (\beta/2)$.} Furthermore, thanks to its monotone behavior for large $ t $, $ \psi_g $ satisfies the same decay estimates \eqref{eq: exp bounds} as {$ f_0 $ (see \cref{lemma: 1D})} .

We denote by $\psi^{\mathrm{wedge}}_{\mu}$ any minimizer associated to \eqref{eq: ewedl}. We also stress that this new variational problem depends on a single parameter $ L > 0 $. Thanks to the exponential decay in the distance from the outer boundary of both $ \mcor $ and $ f_0 $, it is however easy to prove the following result.

\begin{lemma}
\label{lemma: wedge vs corner}
\mbox{}	\\
Let $\ell \ll L \ll \ell^a$ as $\ell\rightarrow \infty$. For any $\Theta_0 < \mu < 1$, 
\beq
	\label{eq: wedge and corner en}
    \ewedl(\mu) - \ecorl(\mu) = \OO(\ell^{-\infty}).
\eeq
Furthermore, for any $ 1 \ll L_1 \leq L_2 $, 
\beq
	\label{eq: monotonicity}
	E^{\mathrm{wedge}}_{L_1} (\mu)- 2 \eones(\mu) L_1 \geq E^{\mathrm{wedge}}_{L_2}(\mu) - 2 \eones(\mu) L_2 + \OO(L_1^{-\infty}),
\eeq
and
\beq
	\label{eq: alternative ecor}
	\ecor(\mu) = \lim_{L \to + \infty} \lf( \ewedl(\mu) - 2 \eones(\mu) L \ri).
\eeq
\end{lemma}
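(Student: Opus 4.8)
The plan is to prove the three assertions in order, since each feeds into the next. The first claim \eqref{eq: wedge and corner en} compares the corner energy $\ecorl(\mu)$ on the full corner region $\corner$ with the wedge energy $\ewedl(\mu)$ on $\wed$, where the interior boundary at $t = \ell$ has been removed. First I would note that the only difference between the two variational problems lies in the boundary conditions at $\partial \Gamma^{\rm in}$ (present for $\ecorl$, absent for $\ewedl$) and in the region $t > \ell$ that is added when passing from $\corner$ to $\wed$. The strategy for both inequalities is to use one minimizer as a trial state for the other problem. For the bound $\ewedl(\mu) \leq \ecorl(\mu) + \OO(\ell^{-\infty})$, I would extend the corner minimizer $\mcor$ to $\wed$ by gluing it to $\psi_g$ (which decays exponentially in $t$) past $t = \ell$; the mismatch at $t = \ell$ is controlled because both $\mcor$ (by the Agmon decay in \cref{prop:decay corner}) and $\psi_g$ (by the exponential bounds \eqref{eq: exp bounds}) are exponentially small there, and since $\ell \gg 1$ these contributions are $\OO(\ell^{-\infty})$. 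For the reverse bound, I would restrict the wedge minimizer $\mwed$ to $\corner$, again paying only an exponentially small price to correct the boundary values on $\partial \Gamma^{\rm in}$ so as to land in $\domcor$.

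For the monotonicity estimate \eqref{eq: monotonicity}, the natural idea is a subadditivity-type argument with respect to $L$. Given the wedge of half-length $L_2 \geq L_1$, I would use the minimizer $\psi^{\mathrm{wedge}}_{L_2}$ restricted to the smaller wedge $W_\beta(L_1)$ together with the boundary profile $\psi_g$ on the two lateral strips $\{ L_1 < |s| < L_2 \}$. Because $\psi^{\mathrm{wedge}}_{L_2}$ agrees with $\psi_g$ on $\partial W^{\rm bd}$ at $s = \pm L_2$, and because away from the vertex the wedge decouples into two essentially independent straight strips whose energy per unit tangential length is exactly $\eones(\mu)$ up to exponentially small corrections (via \cref{lemma: 1D} and the product structure of $\psi_g$), each lateral strip of length $L_2 - L_1$ contributes $2\eones(\mu)(L_2 - L_1)$ up to $\OO(L_1^{-\infty})$. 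Rearranging yields \eqref{eq: monotonicity}. Here the factor of $2$ reflects the two sides of the wedge meeting at the corner.

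Finally, \eqref{eq: alternative ecor} follows by combining the previous two facts with the existing characterization \eqref{eq-E-CG} of $\ecor(\mu)$. From \eqref{eq: wedge and corner en} we have $\ewedl(\mu) - 2\eones(\mu) L = \ecorl(\mu) - 2\eones(\mu) L + \OO(\ell^{-\infty})$, and since $\eones(\mu) = \eonel(\mu) + \OO(\ell^{-\infty})$ by \cref{lemma: 1D}, the right-hand side is $\ecorl(\mu) - 2\eonel(\mu) L + \OO(\ell^{-\infty})$, whose limit is precisely $\ecor(\mu)$. The monotonicity \eqref{eq: monotonicity} then shows that $L \mapsto \ewedl(\mu) - 2\eones(\mu) L$ is (essentially) non-increasing and bounded below — the lower bound coming from the fact that the limit equals the finite quantity $\ecor(\mu)$ — so the limit as $L \to +\infty$ exists and coincides with $\ecor(\mu)$.

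The step I expect to be the main obstacle is the monotonicity inequality \eqref{eq: monotonicity}. The bound \eqref{eq: wedge and corner en} is a fairly routine gluing argument that exploits exponential decay in the normal variable, and \eqref{eq: alternative ecor} is essentially bookkeeping once the first two are in place. By contrast, \eqref{eq: monotonicity} requires carefully cutting the wedge along $s = \pm L_1$ and showing that the two detached lateral pieces really do contribute the one-dimensional energy $2\eones(\mu)(L_2 - L_1)$ up to genuinely $\OO(L_1^{-\infty})$ errors; this demands controlling the transition region near $s = \pm L_1$ and verifying that the tangential derivative of the phase in $\psi_g$ produces no spurious leading-order contribution. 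The subtlety that tubular coordinates degenerate on the bisectrix must also be handled, but this is mitigated by the fact that $g$, and hence $\psi_g$, vanishes for $t \geq \tfrac{2}{3} L \tan\tfrac{\beta}{2}$, so the two sides never actually interact near the bisectrix.
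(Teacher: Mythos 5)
Your handling of \eqref{eq: wedge and corner en} and \eqref{eq: alternative ecor} matches the paper's route: minimizer exchange based on the Agmon decay of $\mcor$ (\cref{prop:decay corner}), the analogous decay of $\mwed$ (which you use implicitly and the paper states explicitly), and \cref{lemma: 1D}, followed by the bookkeeping that identifies the single-parameter limit with \eqref{eq-E-CG}. The genuine gap is in \eqref{eq: monotonicity}, exactly the step you flagged as the main obstacle, and it is a directional error rather than a missing estimate. The claim is an \emph{upper} bound on the energy of the larger wedge, namely $E^{\mathrm{wedge}}_{L_2}(\mu)\leq E^{\mathrm{wedge}}_{L_1}(\mu)+2\eones(\mu)(L_2-L_1)+\OO(L_1^{-\infty})$, and an upper bound can only come from exhibiting a competitor in the variational class $\mathscr{D}^{\mathrm{wedge}}_{L_2}$ of the $L_2$-problem. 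Your construction starts instead from $\psi^{\mathrm{wedge}}_{L_2}$, the very minimizer whose energy is to be bounded, restricted to $W_\beta(L_1)$. Whatever you do with that restriction, you obtain information in the opposite direction: writing $E^{\mathrm{wedge}}_{L_2}=\E_\mu[\psi^{\mathrm{wedge}}_{L_2};W_\beta(L_1)]+\E_\mu[\psi^{\mathrm{wedge}}_{L_2};W_\beta(L_2)\setminus W_\beta(L_1)]$, using the restriction as a (corrected) competitor for $E^{\mathrm{wedge}}_{L_1}$ and bounding the strip term from below by $2\eones(\mu)(L_2-L_1)$ would yield $E^{\mathrm{wedge}}_{L_1}-2\eones(\mu)L_1\leq E^{\mathrm{wedge}}_{L_2}-2\eones(\mu)L_2$ up to errors, i.e.\ the reverse of \eqref{eq: monotonicity}. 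Moreover, both ingredients of that reversed argument are themselves out of reach here: the restriction of $\psi^{\mathrm{wedge}}_{L_2}$ does not lie in $\mathscr{D}^{\mathrm{wedge}}_{L_1}$, and nothing available shows it is close to $\psi_g$ at $s=\pm L_1$ (the Agmon estimate controls decay in $t$, not tangential uniformity), while a lower bound for the strip energy of an unknown minimizer cannot be obtained by testing trial states.

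The repair is to run the gluing the other way, as the paper does: take the minimizer $\psi^{\mathrm{wedge}}_{\mu,L_1}$ of the \emph{smaller} wedge on $W_\beta(L_1)$ and extend it across the lateral strips $L_1<|s|\leq L_2$ by an explicit profile, thus producing an admissible competitor for $E^{\mathrm{wedge}}_{L_2}$. Here a detail you glossed over becomes essential: the boundary datum depends on $L$ through the cut-off $g=g_L$ in \eqref{eq: def gt}, so the trace of $\psi^{\mathrm{wedge}}_{\mu,L_1}$ at $s=\pm L_1$ involves $g_{L_1}$ while the datum required at $s=\pm L_2$ involves $g_{L_2}$; the paper therefore interpolates, taking $\phi_{L_1,L_2}=\bigl(\tfrac{L_2-s}{L_2-L_1}g_{L_1}(t)+\tfrac{s-L_1}{L_2-L_1}g_{L_2}(t)\bigr)\ee^{\ii\alpha_\star s-\frac{1}{2}\ii st}$ in the strips. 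Since $g_{L_1}=g_{L_2}=f_\star$ for $t\leq\tfrac{1}{3}L_1\tan\tfrac{\beta}{2}$ and both are exponentially small beyond that range, the strips contribute $2\eones(\mu)(L_2-L_1)+\OO(L_1^{-\infty})$. This is precisely the computation you envisaged, including the cancellation of the tangential phase derivative against the magnetic term that produces the $(t+\alpha_\star)^2$ weight, but it is performed on an explicit trial state rather than on the unknown minimizer, which is what makes it an upper bound and hence usable.
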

\begin{proof}
	We first point out that $ \mwed $ satisfies an exponential decay w.r.t. the distance from the outer boundary $ \partial W^{\rm bd}$,  which is completely analogous to the one stated in \cref{prop:decay corner}, i.e., for $ \eps_0 \in (0,1) $, $ \mu < 1 $ and $ L \gg 1 $,
	\beq
		\label{eq: decay wedge}
		\int_{\wed}  \diff \xv \: \lf\{ \lf| \lf( \nabla + \ii \fv \ri) \mwed \ri|^2 + \lf| \mwed(\xv) \ri|^2 \ri\}  \ee^{\eps_0\sqrt{1 - \mu}\, \mathrm{dist}(\xv, \partial \Gamma^{\mathrm{out}})} \leq CL,
	\eeq
where $C$ is independent of $\mu$ if we take {$\mu\in(0,1-\eps]$} with a fixed $\eps\in(\theo,1)$. 	
	Furthermore, by \eqref{eq: l star estimate}, the above decay estimate and \eqref{eq: decay corner}, we can easily replace $ \psi_0 $ with $ \psi_g $ in the boundary condition on $ \partial \Gamma^{\mathrm{bd}} $. We omit the proof for the sake of brevity. Altogether these facts immediately imply \eqref{eq: wedge and corner en}.

	Concerning \eqref{eq: monotonicity}, we evaluate the energy in $ W_{\beta}(L_2) $ on the following trial state:
	\bdm
		\psi_{\mathrm{trial}}(\xv) = 
		\begin{cases}
			\psi^{\mathrm{wedge}}_{\mu, L_1}	&	\mbox{for } |s| \leq L_1		\\
			{\phi_{L_1,L_2}}								&	\mbox{for } |s| \in (L_1,L_2],
		\end{cases}
	\edm
	{where we keep explicitly track of the dependence on $ L $ by setting $ \psi^{\mathrm{wedge}}_{\mu,L} := \psi^{\mathrm{wedge}}_{\mu} $. Note that the function $ g $ in \eqref{eq: def gt} depends itself on $ L $ and therefore we set $ g_L : = g $.}
	The function {$ \phi_{L_1,L_2} $ is a smooth function obtained by gluing together two functions of the form \eqref{eq: psig}:}
\beq
	{\phi_{L_1,L_2} = \lf( \tfrac{L_2 - s}{L_2 - L_1} g_{L_1}(t) + \tfrac{s - L_1}{L_2 - L_1} g_{L_2}(t) \ri)  e^{\ii\alpha_\star s -\frac12 \ii st }.}
\eeq
By plugging $ \psi_{\mathrm{trial}} $ in $ \E_{\mu}[\psi; W_{\beta}(L_2)] $, we immediately get
	\beq
		E^{\mathrm{wedge}}_{L_2}\leq \mathcal{E}_\mu[\psi_{\mathrm{trial}}, W_\beta(L_2)] =  E^{\mathrm{wedge}}_{L_1} + 2 \E_{\mu} [ \psi_{\mathrm{trial}};  W_{\beta}(L_2) \setminus W_{\beta}(L_1)],
	\eeq
	where we used the definition of $\psi_{\mathrm{trial}}$. Hence{, recalling that $ g_{L_1}(t) = g_{L_2}(t) = f_{\star}(t) $ for any $ t \in [0,\frac{1}{3} L_1  \tan \frac{\beta}{2}] $, we get}
	\bmln{
		\E_{\mu} [ \psi_{\mathrm{trial}};  W_{\beta}(L_2) \setminus W_{\beta}(L_1)] = 2 \int_{L_1}^{L_2} \diff s \int_0^{{\frac{1}{3} L_1 \tan \frac{\beta}{2}}} \diff t \: \lf\{ {f_\star^{\prime}}^2 + (t + \alpha_\star)^2 f_\star^2 - \mu f_\star^2 + \tfrac{1}{2} \mu f_\star^4 \ri\} \\
		+\OO(L_1^{-\infty})
		= 2 \eones (L_2 - L_1) + \OO(L_1^{-\infty}),
	}
	{so that} \eqref{eq: monotonicity} is proven. Therefore the quantity in the right hand side of \eqref{eq: alternative ecor} is well defined and it exists. Recalling that 
	\[
		\ecor(\mu) = \lim_{\substack{\ell\to+\infty\\ \ell\ll L\ll \ell^a}} \lf( -2L \eonel(\mu) + E^{\mathrm{corn}}_{L,\ell}(\mu) \ri),
	\]
	the proof of \eqref{eq: alternative ecor} is then a consequence of \eqref{eq: wedge and corner en} and of \cref{lemma: 1D}.
\end{proof}

We are now in position to prove \cref{prop:mu=Th0}.

\begin{proof}[Proof of \cref{prop:mu=Th0}]
    To prove the result, we start by showing that the upper bound below holds
\beq
	\label{eq: up at theo}
   \ecor(\mu) \leq \lim_{\substack{\ell\to+\infty\\ \ell\ll L\ll \ell^a}} \ecordl(\theo) + C(\Theta_0 - \mu).
\eeq
 From the definition of $ \ecor $ and \eqref{eq: alternative ecor} in \cref{lemma: wedge vs corner}, we get
\bdm
	\ecor(\mu) = \lim_{L \to + \infty} \lf( \ewedl(\mu) - 2 \eones (\mu)L\ri),
\edm
but, by the monotonicity of the quantity on r.h.s. stated again in \cref{lemma: wedge vs corner}, we also have that for any finite $ L_{\mu} \gg 1 $
\beq
	 \lim_{L \to + \infty} \lf( \ewedl(\mu) - 2 \eones(\mu) L \ri) \leq E^{\mathrm{wedge}}_{L_{\mu}}(\mu) - 2 \eones(\mu) L_{\mu} + o_{L_\mu}(1),
\eeq
where $ L_{\mu} $ is going to be chosen depending of $ \mu $ in such a way that 
\bdm
	L_{\mu} \xrightarrow[\mu \to \theo^+]{} +\infty.
\edm

Next, let us consider a variant of the minimization \eqref{eq: ewedl} exactly at the threshold $ \mu = \theo $ and with the boundary conditions on $ \partial W_{\beta}(L) $ replaced with zero conditions:
\beq
	\label{eq: ewedd}
	E^{\mathrm{D}, \mathrm{wedge}}_{L} (\theo) = \inf_{\psi \in \mathscr{D}_L^{\mathrm{D}}} \E_{\theo}[\psi; \wed],
\eeq
where
\beq
	\label{eq: domwedd}
	\mathscr{D}_L^{\mathrm{D}, \mathrm{wedge}} = \lf\{ \psi \in H^1(\wed) \: \big| \: \psi=0 \mbox{ on }\partial W^{\rm bd} \ri\}.
\eeq
We denote $ \psi^{\mathrm{D}}_L $ any corresponding minimizer, which can be easily shown to satisfy the same exponential decay as in \eqref{eq: decay wedge}. {Furthermore, using such a decay, one can easily show that
\beq
	\label{eq: wedges at threshold}
	E^{\mathrm{D}, \mathrm{wedge}}_{L}(\theo) - E^{\mathrm{D}}_{L, \ell}(\theo) = \OO(\ell^{-\infty}),
\eeq
for any $ (L,\ell) \in \mathscr{A}  $.}

{We now use  $ \psi^{\mathrm{D}}_L $ to construct a trial state to bound from above $ E^{\mathrm{wedge}}_{L_{\mu}} $: let $ 0 < \delta < L_{\mu} $ and
\beq
	\psi_{\mathrm{trial}}  =
	\begin{cases}
		\psi^{\mathrm{D}}_{L_{\mu} - \delta}	&	\mbox{for } |s| \leq L_{\mu} - \delta	\\
		 \phi 													&	\mbox{for } L_{\mu} - \delta \leq |s| \leq L_{\mu},
	\end{cases}
\eeq
where 
\bdm
	\lf. \phi \ri|_{s = L_{\mu} - \delta} = 0,		\qquad	\lf. \phi \ri|_{s = L_{\mu}} = \psi_g,
\edm
and $ \phi $ is smooth everywhere. In particular, we can require that
\beq
	\label{eq: delta est}
	\lf\| \nabla \psi_{\mathrm{trial}} \ri\|_{L^2(W_{\beta}(L_{\mu}) \setminus W_{\beta}(L_{\mu}-\delta))}^2 \leq C (\delta^{-1} + \delta) \lf\| f_\star\ri\|_{L^2(\R^+)}^2.
\eeq
Then, $ \psi_{\mathrm{trial}} \in \mathscr{D}^{\mathrm{wedge}}_{L_{\mu}} $ and 
\bml{
	E^{\mathrm{wedge}}_{L_{\mu}}(\mu) \leq \E_{\theo}\big[\psi^{\mathrm{D}}_{L_{\mu} - \delta}; W_{\beta}(L_{\mu}-\delta)\big] + \tfrac{1}{2} (\mu - \theo) \big\| \psi^{\mathrm{D}}_{L_{\mu} - \delta} \big\|_{L^4(W_{\beta}(L_{\mu} -  \delta))}^4 \\
	+  \E_{\mu}\big[\psi_{\mathrm{trial}}; W_{\beta}(L_{\mu}) \setminus W_{\beta}(L_{\mu}- \delta) \big]  \leq E^{\mathrm{D}, \mathrm{wedge}}_{L_{\mu} - \delta} (\theo) + C (\mu - \theo) L_{\mu}  \\
	+ C  \lf( \delta\lf\| f_\star \ri\|_{L^2(\R^+)}^2 +\delta \lf\| f_\star \ri\|_{L^4(\R^+)}^4  + \delta^{-1} \lf\| f_\star \ri\|_{L^2(\R^+)}^2 \ri)  + \OO(L_{\mu}^{-\infty}),
	}
	where we used \eqref{eq: delta est}. Plugging \eqref{eq: p estimates} in the bound above and using \eqref{eq: asympt mu theo}, we get
\bml{
	E^{\mathrm{wedge}}_{L_{\mu}}(\mu) - 2 \eones(\mu) L_{\mu} \leq E^{\mathrm{D}, \mathrm{wedge}}_{L_{\mu} - \delta} (\theo) + C \lf[ (\mu - \theo) \lf(L_{\mu} + \delta  + \delta^{-1} \ri)  +  (\mu - \theo)^2 L_{\mu} \ri]	\\
	+ o_{L_\mu}(1).
}}
By choosing, e.g., 
\beq
	L_\mu = (\mu - \theo)^{-\frac{1}{2}},	\qquad		\delta = L_\mu = (\mu - \theo)^{\frac{1}{3}}.
\eeq
{and taking the superior limit $ \mu \to \theo^+ $ of the r.h.s., we get} 
\bdm
	{\limsup_{\mu \to  \theo^+} \ecor(\mu)  \leq \edir(\theo),}
\edm
{thanks to \eqref{eq: wedges at threshold}.}

To prove the corresponding lower bound, we can proceed similarly: let $ L > 0 $ and consider the following trial state in $ \mathscr{D}_{L+\delta}^{\mathrm{D}} $, for some $ \delta > 0 $:
\[
	{\psi_{\mathrm{trial}}:= 
	\begin{cases}
		\chi(s) \psi_{g}(\xv(-L,t))	&	\mbox{for } s \in [-L-\delta,-L]	\\
		\psi^{\mathrm{wedge}}_{\mu}	&	\mbox{for } |s| \leq L,	\\
		\chi(s) \psi_{g}(\xv(L,t))	&	\mbox{for } s \in [L,L+\delta],
	\end{cases}}
\]
{for some smooth and monotonically decreasing $ \chi(s) = \chi(|s|): [L, L+\delta] \mapsto [0,1] $ such that $ \chi(L) = 1, \chi(L+\delta) = 0 $ and}
\[
	\lf\|\nabla\psi_{\mathrm{trial}} \ri\|_{L^2(W_\beta(L + \delta) \setminus W_\beta(L))} \leq C(\delta^{-1} + \delta)\|f_\star\|^2_{L^2(\mathbb{R}^+)}.
\]
{Inserting such a trial state into $ \mathcal{E}_{\mu}[\, \cdot \,, W_{\beta}(L+\delta)] $, we obtain
\bml{
	E^D_{L + \delta}(\Theta_0) \leq \mathcal{E}_{\mu}[\psi^{\mathrm{wedge}}_{\mu}; W_{\beta}(L)] + \tfrac{1}{2}(\mu-\Theta_0)\big\| \psi^{\mathrm{wedge}}_{L}\big\|^2_{L^2(W_\beta(L))} \\
	+ \mathcal{E}_{\Theta_0} [\psi_{\mathrm{trial}}; W_{\beta}(L + \delta)\setminus W_\beta(L)].
}
Futhermore, 
\begin{multline*}
	\mathcal{E}_{\Theta_0}[\psi_{\mathrm{trial}}; W_\beta(L+\delta)\setminus W_\beta(L)] \leq \mathcal{E}_{\mu}[\psi_{\mathrm{trial}}; W_\beta(L+\delta)\setminus W_\beta(L)] + C(\mu-\Theta_0)(L+\delta) \\
	 \leq C \lf[  \delta \lf\| f_\star \ri\|_{L^2(\R^+)}^2 + \delta \lf\| f_\star \ri\|_{L^4(\R^+)}^4 + \delta^{-1} \lf\| f_\star \ri\|_{L^2(\R^+)}^2 + (\mu-\Theta_0)(L+\delta) \ri] + \OO(L^{-\infty})
\end{multline*}
and
\[
	 \mathcal{E}_{\mu}[\psi^{\mathrm{wedge}}_{\mu}; W_{\beta}(L)] = E^{\mathrm{wedge}}_{L} (\mu) \leq   -2L_\mu E^{\mathrm{1D}}_\star + E^{\mathrm{corn}}_{L_,\ell}(\mu) + \mathcal{O}(\ell^{-\infty}),
\]
for any $ (L,\ell) \in \mathscr{A} $, where we used that $-2L E^{\mathrm{1D}}_\star \geq 0$ and \cref{lemma: wedge vs corner}. We can now take as before
\bdm
	L = L_{\mu} \xrightarrow[\mu \to \theo^+]{} + \infty,		\qquad	\mbox{such that } (\mu-\Theta_0)L_{\mu} \xrightarrow[\mu \to \theo^+]{} 0
\edm
and $ \delta = C $, and exploit \eqref{eq: p estimates}, to get 
\bdm
	\limsup_{\mu \to \theo^+} E_{L_{\mu}}^{\mathrm{D}, \mathrm{wedge}}(\Theta_0)   \leq  \ecor(\theo),
\edm
which implies the result via \eqref{eq: wedges at threshold}.}
\end{proof}

\section{Sector effective model}\label{sec:sec}
The purpose of this section is to relate the sector effective energy with the energy at the threshold as stated in Theorem~\ref{theo: main}.

\subsection{Basic properties}

The key role of the condition \eqref{eq: condition} is to ensure that the variational problem \eqref{eq:ref-en*} admits a non-trivial minimizer.  

We collect several useful properties of such a minimization problem in the next proposition. Although some of the properties were already known (see, e.g., in \cite[Prop. 6.1]{BNF}), we spell the proof in details for the sake of the reader's convenience. 

\begin{proposition}[Minimization of $ \fsec $]
\label{prop:min} 
\mbox{}	\\
Let \eqref{eq: condition} hold. Then, 
\begin{enumerate}
\item the function \(\mu\mapsto \esec(\mu) \) is decreasing and concave on the interval \((\mu_{{\beta}},\Theta_0)\);
\item  for any $ \mu_\beta < \mu \leq \theo $, $ \esec(\mu) < 0 $, and, for $ \mu < \theo $,  there exists a non-trivial minimizer $ \msec \in W^{1,2}_{\Fb}(\R^2)\), such that
\beq
	\label{eq: msec bound 1}
	\int_{\S_\beta} \diff \xv \: \lf\{ \lf| \lf(\nabla + \ii\Fb \ri) \msec \ri|^2+ \lf|\msec \ri|^2 + \lf| \msec \ri|^4 \ri\} \leq \tfrac{C}{\Theta_0-\mu},
\eeq
for some $ C $ independent of $ \mu $.
\end{enumerate}
\end{proposition}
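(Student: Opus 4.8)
The plan is to establish item (2) first, since the a priori bound it contains is what drives both the existence of a minimizer and the qualitative properties of $\mu\mapsto\esec(\mu)$ in item (1). I would begin with the strict negativity of $\esec$. Because \eqref{eq: condition} holds, we have $\mu_\beta<\theo=\inf\sigma_{\rm ess}(\Lap_\beta)$ (see \eqref{eq:Theta0}), so $\mu_\beta$ is a discrete eigenvalue of $\Lap_\beta$ with an $L^2$-normalized eigenfunction $u_\beta$; by standard Agmon decay $u_\beta\in W^{1,2}_{\Fb}(\S_\beta)\cap L^4(\S_\beta)$. Testing the functional on $t\,u_\beta$ gives
\[
\fsec[t\,u_\beta]=t^2(\mu_\beta-\mu)+\tfrac12\mu\,t^4\|u_\beta\|_4^4,
\]
which is strictly negative for small $t>0$ as soon as $\mu>\mu_\beta$, yielding $\esec(\mu)<0$ for every $\mu_\beta<\mu\le\theo$.

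Next I would derive the a priori bound \eqref{eq: msec bound 1}, the core quantitative estimate. Fix a partition of unity $\chi_1^2+\chi_2^2=1$ with $\chi_1$ supported in $\cB_2$ and $\chi_2$ vanishing on $\cB_1$, as in the proof of \cref{lemma:mu=Th0}. The IMS localization formula together with the outer bound of \cref{lem:spectrum} applied to $\chi_2\psi$ (supported in the outer domain $\S_\beta\setminus\cB_1$) gives, for any $\psi\in W^{1,2}_{\Fb}(\S_\beta)$,
\[
\int_{\S_\beta}|(\nabla+\ii\fv)\psi|^2\,\diff\xv\ge \theo\,\|\chi_2\psi\|_2^2-C\|\psi\|_{L^2(\cB_2)}^2,
\]
where $C$ absorbs the cutoff gradients, supported in the bounded annulus $\cB_2\setminus\cB_1$. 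Inserting this into $\fsec[\psi]$ and using $\|\psi\|_2^2=\|\chi_1\psi\|_2^2+\|\chi_2\psi\|_2^2$ produces the good term $(\theo-\mu)\|\chi_2\psi\|_2^2$, while all remaining negative contributions are supported in the bounded set $\cB_2$ and can be absorbed into $\tfrac12\mu\|\psi\|_4^4+C$ by Hölder ($\|\cdot\|_{L^2(\cB_2)}^2\le|\cB_2|^{1/2}\|\cdot\|_4^2$) and Young inequalities, leaving
\[
\fsec[\psi]\ge(\theo-\mu)\|\chi_2\psi\|_2^2+c\,\|\psi\|_4^4-C,\qquad c>0.
\]
Applying this to a minimizer $\msec$ with $\fsec[\msec]=\esec(\mu)<0$ gives $\|\chi_2\msec\|_2^2\le C/(\theo-\mu)$ and $\|\msec\|_4^4\le C$; controlling the inner mass by $\|\msec\|_4$ on $\cB_2$ yields $\|\msec\|_2^2\le C/(\theo-\mu)$, and finally $\int|(\nabla+\ii\fv)\msec|^2=\esec(\mu)+\mu\|\msec\|_2^2-\tfrac12\mu\|\msec\|_4^4\le C/(\theo-\mu)$, which is \eqref{eq: msec bound 1}.

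For existence I would run the direct method: a minimizing sequence $\psi_n$ is bounded in $W^{1,2}_{\Fb}(\S_\beta)\cap L^4(\S_\beta)$ by the bound above, so up to a subsequence $\psi_n\rightharpoonup\psi$ weakly, with strong $L^2_{\rm loc}$ and $L^4_{\rm loc}$ convergence by Rellich (note $W^{1,2}_{\Fb}=H^1$ on bounded sets since $\fv$ is bounded there). \emph{The only obstruction, and the main difficulty, is loss of mass at infinity}, which I would exclude via the spectral gap $\mu<\theo$. With a radial cutoff $\chi_R$ ($=1$ on $\cB_R$, $=0$ off $\cB_{2R}$) and $\eta_R=\sqrt{1-\chi_R^2}$, IMS localization, the subadditivity $\|\chi_R\psi_n\|_4^4+\|\eta_R\psi_n\|_4^4\le\|\psi_n\|_4^4$, and the outer bound of \cref{lem:spectrum} applied to $\eta_R\psi_n$ give
\[
\fsec[\psi_n]\ge\E_{\mu}[\chi_R\psi_n;\S_{\beta}\cap\cB_{2R}]+(\theo-\mu)\|\eta_R\psi_n\|_2^2-C_R,
\]
with $C_R\le C R^{-2}(\theo-\mu)^{-1}\to0$ by the $L^2$ bound just proved. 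Dropping the non-negative far term, invoking weak lower semicontinuity of the magnetic Dirichlet energy on $\cB_{2R}$ together with strong local convergence of the lower-order terms, and then letting $R\to\infty$, yields $\esec(\mu)\ge\fsec[\psi]$; since the reverse inequality is trivial, $\psi$ is a minimizer, non-trivial because $\esec(\mu)<0=\fsec[0]$, and it satisfies \eqref{eq: msec bound 1}.

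Finally, item (1). Writing $\fsec[\psi]=A[\psi]+\mu B[\psi]$ with $A[\psi]=\int_{\S_\beta}|(\nabla+\ii\fv)\psi|^2$ and $B[\psi]=\int_{\S_\beta}(-|\psi|^2+\tfrac12|\psi|^4)$, the energy is affine in $\mu$ for each fixed $\psi$, so $\esec=\inf_\psi(A+\mu B)$ is concave as an infimum of affine functions. For strict monotonicity I would compare against the minimizer $\msec$ produced above: for $\mu_\beta<\mu_1<\mu_2<\theo$,
\[
\esec(\mu_2)\le A[\msec^{(\mu_1)}]+\mu_2 B[\msec^{(\mu_1)}]=\esec(\mu_1)+(\mu_2-\mu_1)B[\msec^{(\mu_1)}],
\]
and $B[\msec^{(\mu_1)}]\le-\tfrac12\|\msec^{(\mu_1)}\|_2^2<0$ by the maximum-principle bound $\|\msec\|_\infty\le1$ from \eqref{eq: infty} together with non-triviality. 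Hence $\esec$ is strictly decreasing on $(\mu_\beta,\theo)$, completing the proof.
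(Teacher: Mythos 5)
Your proof is correct, and it reaches the statement by a route that differs from the paper's in two substantive places, even though the skeleton (negativity via the eigenfunction $u_\beta$, IMS localization combined with \cref{lem:spectrum}, concavity as an infimum of functions affine in $\mu$) is the same. First, for the bound \eqref{eq: msec bound 1} the paper localizes at the $\mu$-dependent scale $R\sim(\Theta_0-\mu)^{-1/2}$, controls the inner piece by completing the square, and lets the localization error $\OO(R^{-2})$ eat into the spectral gap; you instead fix the cutoff radius once and for all and absorb both the localization error and the inner mass into the quartic term by H\"older and Young on the bounded set $\cB_2\cap\S_\beta$. This buys you constants that are manifestly independent of $\mu$ (only $\mu\geq\mu_\beta>0$ is used), no optimization over $R$, and a coercivity inequality $\fsec[\psi]\geq(\Theta_0-\mu)\|\chi_2\psi\|_2^2+c\|\psi\|_4^4-C$ that yields the rate $(\Theta_0-\mu)^{-1}$ at once; note that the paper's own choice of $R$, taken literally, appears to give only $\|\chi_2\psi_n\|_2^2\leq C(\Theta_0-\mu)^{-2}$ for the outer mass, so your variant is in fact the cleaner path to the estimate as stated. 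Second, for strict monotonicity the paper rescales to $\mathcal{F}_b=\mu^{-1}\E_\mu[\,\cdot\,;\S_\beta]$ with $b=\mu^{-1}$, uses the Feynman--Hellmann principle to see that the ground state energy is non-decreasing in $b$, and combines this with $\esec(\mu_1)<0$; that argument needs no minimizer at all. You instead insert the minimizer at $\mu_1$ as a trial state at $\mu_2$ and exploit $\|\msec\|_\infty\leq1$ from \eqref{eq: infty} to get $\int_{\S_\beta}\bigl(-|\msec|^2+\tfrac12|\msec|^4\bigr)\leq-\tfrac12\|\msec\|_2^2<0$. This is more elementary, but it makes item (2) a prerequisite for item (1), so your reordering of the two items is essential (and consistent as you present it). Finally, your direct-method argument for existence, which excludes escape of mass to infinity through the outer bound of \cref{lem:spectrum} together with the vanishing localization error $C_R\lesssim R^{-2}(\Theta_0-\mu)^{-1}$, spells out precisely what the paper compresses into the phrase ``standard (compactness) argument'', and is a worthwhile addition.
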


\begin{remark}[Boundedness from below at $ \theo $]
\mbox{}	\\
{As it follows from its proof, the statement of \cref{prop:min} does not directly imply that \( \esec(\theo) \) is finite at \(\mu=\Theta_0\), since the functional \( \fsec \) is not necessarily bounded from below there.} That $\esec(\theo)$ is finite will in fact be proved later on in \cref{prop:sb-theta0}.
\end{remark}

\begin{proof} If \eqref{eq: condition} holds, $ \mu_{\beta} $ is an isolated eigenvalue of {$ \Lap_{\beta} = -(\nabla - i\mathbf{F})^2  $ with Neumann boundary condition on $\mathcal{S}_\beta$ (see \eqref{eq: def Lbeta} for a precise definition)} and therefore there exists a corresponding normalized eigenstate $ u_{\beta} \in W^{1,2}_{\Fb}(\R^2) $. Using $ \lambda u_{\beta} $, $ \lambda > 0 $, as a trial state for {$ \fsec[\cdot] = \mathcal{E}_\mu[\cdot; \mathcal{S}_\beta]$}, we get for any $ \mu_\beta < \mu \leq \theo $
\beq\label{eq: Esector negative}
	\esec(\mu) \leq \lambda^2 \lf( \mu_{\beta} - \mu + \tfrac{1}{2} \lambda^2 \lf\| u_{\beta} \ri\|_4^4 \ri) < 0,
\eeq
provided $ \lambda $ is taken small enough. This immediately implies that any minimizer, if it exists, must be non-trivial since the ground state energy is non-zero.

Next, we investigate the properties of the function \(\mu\mapsto \esec(\mu)\). Consider the functional {$ \mathcal{F}_{b}[\psi] : = \mu^{-1}\mathcal{E}_{\mu}[\psi, \mathcal{S}_\beta]$}, with $ b = \mu^{-1} $. Its explicit expression is
\bdm
	 \mathcal{F}_{b}[\psi] = \int_{\mathcal{S}_{\beta}} \diff \xv \: \left[ b \lf| \lf(\nabla + \ii \fv \ri) \psi \ri|^2 -  |\psi|^2 + \tfrac{1}{2}  |\psi|^4\right],
\edm
and we denote by $ e(b) $ its ground state energy. By the Feynman-Hellman principle, it can be easily seen that $ b \mapsto e(b) $ is a non-decreasing function. In fact, one can prove that it is strictly increasing, whenever there exists a non-trivial minimizer, i.e., for $ b \in (\theo^{-1}, \mu_\beta^{-1}) $. Hence, for any $  \mu_1 < \mu_2 \leq \Theta_0 $,
\[
	\frac{\esec(\mu_1)}{\mu_1} \geq \frac{\esec(\mu_2)}{\mu_2}.
	\]
Multiplying by \(\mu_2>0\) and noticing that \(\frac{\mu_2}{\mu_1} > 1\) and that \( \esec(\mu_1) < 0\), we get
\[ \esec(\mu_1) > \frac{\mu_2}{\mu_1} \esec(\mu_1) \geq \esec(\mu_2).\]
Moreover, for any $ \mu_1, \mu_2 \in (\mu_\beta, \theo) $ and any $ t \in [0,1] $, we have
{\[
	\mathcal{E}_{(1-t)\mu_1 + t \mu_2}[\psi, \mathcal{S}_\beta] = (1-t) \mathcal{E}_{\mu_1}[\psi, \mathcal{S}_\beta] + t \mathcal{E}_{\mu_2}[\psi, \mathcal{S}_\beta] \geq (1-t)  {E}_{\mathrm{sector}}(\mu_1)  + t  {E}_{\mathrm{sector}}(\mu_2),
\]}
and minimizing over \( \psi  \), we get the concavity of
 \(\mu\mapsto \esec(\mu) \).   

Let us now prove \eqref{eq: msec bound 1}, assuming again that \(\mu_\beta<\mu<\Theta_0\). Consider a partition of unity \(\chi_1^2+\chi_2^2=1\) on \(\R^2\) such that \({\rm supp}\,\chi_1\subset \mathcal{B}_{R}(\mathbf{0}) \), $ \chi_2(\mathbf{0}) = 0 $ and \(|\nabla\chi_1|+|\nabla\chi_2|\leq C R^{-1}\), for some given $ R \in \R^+ $ and some finite $ C> 0 $ independent of $ R $.
For any minimizing sequence $ \lf\{ \psi_n \ri\}_{n \in \N} \subset W^{1,2}_{\Fb}(\S) $ of $ \fsec $, we have the decomposition
{\bml{\label{eq: IMS sector}
\mathcal{E}_\mu[\psi_n; \mathcal{S}_\beta] \geq \mathcal{E}_\mu[\chi_1 \psi_n; \mathcal{S}_\beta] + \mathcal{E}_\mu[\chi_2 \psi_n; \mathcal{S}_\beta] - \sum_{i=1}^2\int_{\S_\beta} \diff \xv \: |\nabla\chi_i|^2|\psi_n|^2 	\\
\geq \mathcal{E}_\mu[\chi_1 \psi_n; \mathcal{S}_\beta] + \mathcal{E}_\mu[\chi_2 \psi_n; \mathcal{S}_\beta] -  \tfrac{2c^2}{R^2}\sum_{i=1}^2\int_{\S_\beta} \diff \xv \: |\chi_i\psi_n|^2 
=: \widetilde{\mathcal{E}}_{\mu} [\chi_1 \psi_n; \mathcal{S_\beta}] +  \widetilde{\mathcal{E}}_{\mu}[\chi_2 \psi_n; \mathcal{S}_\beta],
}
where $ \widetilde{\mathcal{E}}_{\mu}[\psi; \mathcal{S}_\beta] : = \mathcal{E}_\mu[\psi; \mathcal{S}_\beta] - \frac{2c^2}{R^2} \lf\| \psi \ri\|_2^2 $. By completing the square, we may rewrite
\[
	\widetilde{\mathcal{E}}_{\mu}[\chi_1\psi; \mathcal{S}_\beta] = \int_{\mathcal{S}_\beta} \diff \xv \: \left[ \lf| \lf(\nabla + \ii \fv \ri) \chi_1\psi \ri|^2 + \tfrac{1}{2} \mu \lf( |\chi_1\psi|^2 - 1 - \tfrac{2c^2}{\mu R^2} \ri)^2 \ri] - \tfrac{1}{2} \mu \lf| \mathrm{supp} \,\chi_1\psi \ri| \lf(  1 + \tfrac{2c^2}{\mu R^2} \ri)^2,
\]
while, by \cref{lem:spectrum} and the fact that $ \mathbf{0} \notin \mathrm{supp}(\chi_2 \psi_n) $,
\bdm
\widetilde{\mathcal{E}}_{\mu}[\chi_2 \psi_n; \mathcal{S}_\beta] \geq \int_{\S_\beta} \diff \xv \lf[ \lf( \Theta_0 - \mu - \tfrac{2 c^2}{R^2} \ri) |\chi_2 \psi_n|^2 + \tfrac{1}{2} \mu |\chi_2 \psi_n|^4 \right].
\edm
Using \eqref{eq: Esector negative} and \eqref{eq: IMS sector}, dropping all positive terms, we get}
\bdm
	{\lf( \Theta_0 - \mu - 2 \tfrac{C^2}{R^2} \ri) \lf\| \chi_2 \psi_n \ri\|^2  \leq \tfrac{1}{2} \pi \mu R^2 \lf(  1 + \tfrac{2c^2}{\mu R^2} \ri)^2.}
\edm
Choosing $ R \sim (\theo - \mu)^{-1/2} $, so that $ \Theta_0-\mu-\frac{c^2}{R^2} = \frac{\Theta_0-\mu}{2} $, we obtain that
\bdm
		{\lf\| \chi_2 \psi_n \ri\|^2_2  \leq \tfrac{C}{\theo - \mu}.}
\edm
Using such a bound to estimate from below $ \widetilde{\mathcal{E}}_{\mathrm{sector}} [\chi_2 \psi_n] $, we deduce that
\bdm
	 \tfrac{1}{2} \mu \lf\| |\chi_1 \psi_n|^2 - 1 - \tfrac{2c^2}{\mu R^2} \ri\|_{L^2(\mathrm{supp}(\chi_1))}^2 \leq \tx\frac{C}{\theo - \mu},
\edm 
which in turn yields 
\bdm
	{\lf\| \chi_1 \psi_n \ri\|^2_2 \leq \tfrac{C}{\theo - \mu}.}
\edm 
Hence,
\bdm
	{\lf\| \psi_n \ri\|^2_2 = \lf\| \chi_1 \psi_n \ri\|^2_2 + \lf\| \chi_2 \psi_n \ri\|^2_2 \leq \tfrac{C}{\theo - \mu}},
\edm
{which can be used to bound from below $ \mathcal{E}_\mu[\psi_n; \mathcal{S}_\beta] $. This, in combination with \eqref{eq: Esector negative}, yields}
\bdm	
	\lf\|  \lf(\nabla + \ii \fv \ri) \psi_n \ri\|_2^2 + \lf\| \psi_n \ri\|^2_2 +  \lf\| \psi_n \ri\|^4_4 \leq \tfrac{C}{\theo - \mu},
\edm
which guarantees the convergence of the minimizing sequence via a standard (compactness) argument for any $ \mu \in (\mu_\beta, \theo) $ and the fact that any limit point $ \msec $ satisfies \eqref{eq: msec bound 1}.
\end{proof}

Another important property of any minimizer of  $ \fsec  $ is its decay in the distance from the corner, which is stated in \cite[Prop. 6.1]{BNF},  and is another form of Agmon decay.

\begin{proposition}[Agmon decay of $ \msec $]
\label{prop:decay}
\mbox{}	\\
Let \eqref{eq: condition} hold and let $ \eps_0 \in (0,1) $.Then, for any minimizer $ \msec $ of  $ \fsec $  and for any $ \mu \in (\mu_{\beta}, \theo) $,
\beq
	\int_{\S_\beta}  \diff \xv \: \lf| \msec(\xv) \ri|^2 \ee^{\varepsilon_0\sqrt{\Theta_0-\mu}\,|\xv|} \leq C_{\mu},
\eeq
where \(C_\mu\) is a constant depending on \(\mu\).
\end{proposition}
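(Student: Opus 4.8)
The plan is to run a standard Agmon estimate that exploits the spectral gap between the value $\theo$ governing the quadratic form far from the corner and the subcritical value $\mu<\theo$; this being essentially \cite[Prop. 6.1]{BNF}, I would only sketch the decisive steps. First I would record that any minimizer $\msec$ solves the Euler--Lagrange equation
\[
	-(\nabla + \ii\Fb)^2 \msec = \mu\lf( 1 - |\msec|^2 \ri)\msec \ \text{ in } \S_\beta, \qquad \nb\cdot(\nabla+\ii\Fb)\msec = 0 \ \text{ on } \partial\S_\beta,
\]
and that, by \cref{prop:min}, $\msec\in W^{1,2}_{\Fb}(\S_\beta)\cap L^4(\S_\beta)$ with $\|\msec\|_{L^2(\S_\beta)}^2\leq C/(\theo-\mu)$, while elliptic regularity makes $\msec$ smooth in the interior and up to $\partial\S_\beta$ away from the vertex.

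Next I would introduce the bounded, radial, Lipschitz weight $\Phi_m(\xv):=\eps_0\sqrt{\theo-\mu}\,\min(|\xv|,m)$, which satisfies $|\nabla\Phi_m|\leq \eps_0\sqrt{\theo-\mu}$ almost everywhere, and set $w_m:=e^{\Phi_m}\msec$. Testing the Euler--Lagrange equation against $e^{2\Phi_m}\msec$ and integrating by parts — the magnetic Neumann condition annihilates the boundary contribution, while the radiality of $\Phi_m$ ensures $\nb\cdot\nabla\Phi_m=0$ on $\partial\S_\beta$ — yields the Agmon identity
\[
	\int_{\S_\beta}\lf| (\nabla+\ii\Fb) w_m \ri|^2 - \int_{\S_\beta}|\nabla\Phi_m|^2 |w_m|^2 = \mu\int_{\S_\beta}|w_m|^2 - \mu\int_{\S_\beta} e^{2\Phi_m}|\msec|^4 .
\]
Discarding the nonnegative quartic term and using $|\nabla\Phi_m|^2\leq \eps_0^2(\theo-\mu)$, this gives the upper bound $\int_{\S_\beta}|(\nabla+\ii\Fb)w_m|^2 \leq [\mu + \eps_0^2(\theo-\mu)]\int_{\S_\beta}|w_m|^2$.

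The crucial step is a matching lower bound obtained by localizing with a partition of unity $\chi_1^2+\chi_2^2=1$, where $\chi_1$ is supported in the inner domain $\S_\beta\cap\cB_{2R}$, $\chi_2$ in the outer domain $\S_\beta\setminus\overline{\cB}_R$, and $|\nabla\chi_i|\leq C/R$ for a fixed $R$. By the magnetic IMS formula together with \cref{lem:spectrum} (first bullet applied to $\chi_1 w_m$, giving the constant $\mu_\beta$; second bullet applied to $\chi_2 w_m$, giving $\theo$), I would obtain
\[
	\int_{\S_\beta}|(\nabla+\ii\Fb)w_m|^2 \geq \mu_\beta\int_{\S_\beta}|\chi_1 w_m|^2 + \theo\int_{\S_\beta}|\chi_2 w_m|^2 - \tfrac{C}{R^2}\int_{\cB_{2R}\setminus\cB_R}|w_m|^2 .
\]
Combining the two bounds and using $\int|w_m|^2=\int|\chi_1 w_m|^2+\int|\chi_2 w_m|^2$, the coefficient of $\int|\chi_2 w_m|^2$ equals $(1-\eps_0^2)(\theo-\mu)>0$ precisely because $\eps_0\in(0,1)$, whereas the inner coefficient $\mu_\beta-\mu-\eps_0^2(\theo-\mu)$ is negative by \eqref{eq: condition}, so the corresponding term is moved to the right-hand side. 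Since $\Phi_m\leq 2\eps_0\sqrt{\theo-\mu}\,R$ on the bounded sets $\cB_{2R}$ and $\cB_{2R}\setminus\cB_R$, both the inner term and the annulus term are controlled by $C_\mu\|\msec\|_{L^2(\S_\beta)}^2$, uniformly in $m$, via \eqref{eq: msec bound 1}. This yields $\int_{\S_\beta}|w_m|^2\leq C_\mu$ uniformly in $m$, and monotone convergence as $m\to\infty$ gives $\int_{\S_\beta}|\msec|^2 e^{2\eps_0\sqrt{\theo-\mu}\,|\xv|}\leq C_\mu$; relabelling $\eps_0$ proves the claim. The main technical point to handle with care is the justification of the integration by parts up to the boundary and across the vertex; I would resolve it by working throughout with the truncated weight (so all integrals converge absolutely) and by exhausting $\S_\beta$ with domains removing a shrinking disc about the vertex, whose boundary contribution vanishes since a point in $\R^2$ has zero $H^1$-capacity.
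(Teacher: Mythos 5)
Your proof is correct, but there is nothing in the paper to compare it against line by line: the paper does not prove \cref{prop:decay} at all, it simply defers to \cite[Prop.~6.1]{BNF}, remarking only that the statement ``is another form of Agmon decay.'' Your proposal is therefore a self-contained reconstruction of exactly the kind of argument the cited reference uses, and every step checks out: the truncated radial weight $\Phi_m$ makes all integrals absolutely convergent; the Agmon identity follows from testing the weak Euler--Lagrange equation with $e^{2\Phi_m}\msec$ (legitimate since $\Phi_m$ is bounded and Lipschitz, so this is an admissible test function in $W^{1,2}_{\Fb}(\S_\beta)$, and the quartic term is integrable against the bounded weight); the IMS localization combined with the two dichotomy bounds of \cref{lem:spectrum} produces the gap $(1-\eps_0^2)(\theo-\mu)>0$ in the outer region; the inner and annulus terms are finite uniformly in $m$ thanks to the bound $\|\msec\|_2^2\leq C/(\theo-\mu)$ from \cref{prop:min}; and monotone convergence plus relabelling $\eps_0\mapsto\eps_0/2$ gives the stated estimate. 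What your argument buys, compared with the paper, is that the proposition becomes independent of the external reference, using only objects already established in the paper (\cref{prop:min} and \cref{lem:spectrum}).

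Two minor inaccuracies, neither of which is a gap. First, the negativity of the inner coefficient $\mu_\beta-\mu-\eps_0^2(\theo-\mu)$ follows from the hypothesis $\mu>\mu_\beta$ of the proposition, not from \eqref{eq: condition}; the latter only guarantees that the interval $(\mu_\beta,\theo)$ is nonempty. Second, since you work with the weak formulation of the Euler--Lagrange equation, whose form domain already encodes the magnetic Neumann condition, no boundary integration by parts is ever performed, so the closing discussion about the vertex, the exhaustion by domains excising a small disc, and the zero capacity of a point is superfluous (though harmless); the same remark applies to the observation that $\nb\cdot\nabla\Phi_m=0$ on $\partial\S_\beta$, which is never actually needed.
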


\section{Proof of \cref{theo: main}}\label{sec: proof of main thm}

We are now in position to prove our main result, whose argument is divided into two parts: first we prove a lower bound to the energy in the sector in terms of corner effective problem, and then we derive a matching upper bound.  

\subsection{Lower bound}

The main result proven here is the following.

\begin{proposition}[Lower bound]
	\label{pro: lower bound}
	\mbox{}	\\
	Let \eqref{eq: condition} hold.   Then,
	\beq
		\label{eq: lower bound}
		\liminf_{\mu \to \theo^-} \esec(\mu) \geq \edir(\theo).
	\eeq
\end{proposition}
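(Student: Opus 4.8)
The plan is to localize a minimizer of the sector problem into the corner region and use it as a competitor for the Dirichlet corner energy at the threshold. Concretely, for $\mu\in(\mu_\beta,\theo)$ let $\msec$ be a minimizer of $\esec(\mu)$, which exists by \cref{prop:min}. Since $\edir(\theo)=\inf_{(L,\ell)\in\mathscr A}\ecordl(\theo)$ (see \eqref{eq:def-en-th0}), it suffices to produce, for a suitable admissible pair $(L,\ell)$ depending on $\mu$, a trial state $\phi\in\domcord$ with $\E_{\theo}[\phi;\corner]\leq\esec(\mu)+o_\mu(1)$, where $o_\mu(1)\to0$ as $\mu\to\theo^-$. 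Then $\edir(\theo)\leq\ecordl(\theo)\leq\E_{\theo}[\phi;\corner]\leq\esec(\mu)+o_\mu(1)$, and \eqref{eq: lower bound} follows by taking $\liminf_{\mu\to\theo^-}$.

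First I would pass to the threshold functional at no cost. A direct computation gives $\E_{\theo}[\psi;\S_\beta]-\fsec[\psi]=(\theo-\mu)\big(\tfrac12\|\psi\|_4^4-\|\psi\|_2^2\big)$, and since $|\msec|\leq1$ by \eqref{eq: infty} we have $\|\msec\|_4^4\leq\|\msec\|_2^2$; as $\theo-\mu>0$ this yields $\E_{\theo}[\msec;\S_\beta]\leq\fsec[\msec]=\esec(\mu)$. Thus it is enough to localize $\msec$ into $\corner$ while paying only $o_\mu(1)$ in the $\E_{\theo}$ energy.

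Next comes the localization. I would pick a cutoff $\chi=\chi_1(t)\chi_2(s)$ in the tubular coordinates $(s,t)$, equal to $1$ near the outer (Neumann) boundary $\partial\Gamma^{\rm out}$ and on $\corner\cap\cB_{\ell/2}$, and vanishing on $\partial\Gamma^{\rm bd}\cup\partial\Gamma^{\rm in}$, so that $\phi:=\chi\msec$ lies in $\domcord$. With the complementary weight $\eta:=\sqrt{1-\chi^2}$, the IMS formula together with $-\theo(|\chi\msec|^2+|\eta\msec|^2)=-\theo|\msec|^2$ and $|\chi\msec|^4+|\eta\msec|^4\leq|\msec|^4$ gives
\[\E_{\theo}[\chi\msec;\corner]\leq\E_{\theo}[\msec;\S_\beta]-\E_{\theo}[\eta\msec;\S_\beta]+\int_{\S_\beta}\big(|\nabla\chi|^2+|\nabla\eta|^2\big)|\msec|^2.\]
Because $\eta\msec$ is supported in the outer domain $\S_\beta\setminus\overline\cB_{\ell/2}$ and vanishes on the inner arc, the second item of \cref{lem:spectrum} yields $\E_{\theo}[\eta\msec;\S_\beta]\geq\tfrac{\theo}{2}\|\eta\msec\|_4^4\geq0$, so that term can be discarded.

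Finally I would estimate the localization error. Using $|\xv|^2=s^2+t^2$ in tubular coordinates, the gradients $\nabla\chi,\nabla\eta$ are supported where $|\xv|\gtrsim\ell$, so by the Agmon decay of \cref{prop:decay},
\[\int_{\S_\beta}\big(|\nabla\chi|^2+|\nabla\eta|^2\big)|\msec|^2\leq C\,\ell^{-2}\,C_\mu\,e^{-\eps_0\sqrt{\theo-\mu}\,\ell}+C\,L^{-2}\,C_\mu\,e^{-\eps_0\sqrt{\theo-\mu}\,L}.\]
The hard part is to make this vanish uniformly as $\mu\to\theo^-$: the Agmon length scale $(\theo-\mu)^{-1/2}$ diverges and the constant $C_\mu$ degenerates, growing at most polynomially in $(\theo-\mu)^{-1}$ once combined with the a priori bound $\|\msec\|_2^2\leq C/(\theo-\mu)$ from \cref{prop:min}. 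Choosing, e.g., $\ell\sim(\theo-\mu)^{-1}$ and $L\sim\ell^{(1+a)/2}$ keeps $(L,\ell)\in\mathscr A$ while making $e^{-\eps_0\sqrt{\theo-\mu}\,\ell}=e^{-\eps_0(\theo-\mu)^{-1/2}}$ beat the polynomial degeneration of $C_\mu$, so the error is $o_\mu(1)$ and the proof concludes as outlined above.
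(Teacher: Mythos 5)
Your construction---cutting off the sector minimizer $\msec$ so that it becomes an admissible Dirichlet trial state in $\corner$, passing from $\E_\mu$ to $\E_{\theo}$ via the pointwise bound $|\msec|\leq 1$, and discarding the complementary piece $\eta\msec$ through the second item of \cref{lem:spectrum}---is sound, and it is essentially the paper's strategy (the paper instead uses the variational identity $\int|\msec|^4=-2\esec(\mu)$, which produces a harmless factor $(1-(\theo-\mu))$, and controls the $(1-\chi^2)|\msec|^2$ terms by Agmon decay; these differences are cosmetic). The genuine gap is in your final step. By coupling $(L,\ell)$ to $\mu$, you need the localization error $C\lf(\ell^{-2}+L^{-2}\ri)C_\mu\, e^{-\eps_0\sqrt{\theo-\mu}\,\ell/2}$ to vanish as $\mu\to\theo^-$, which requires quantitative control of the Agmon constant $C_\mu$ in that limit. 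But \cref{prop:decay} only asserts the existence of \emph{some} constant depending on $\mu$; your claim that $C_\mu$ grows at most polynomially in $(\theo-\mu)^{-1}$ is not contained in that proposition, does not follow from the a priori bound $\|\msec\|_2^2\leq C/(\theo-\mu)$ of \cref{prop:min} alone (the constant produced by the Agmon method must be traced through the proof of the decay estimate, not merely multiplied by the $L^2$ bound), and you do not prove it. As written, the concluding limit is unjustified.

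The gap is, however, entirely avoidable, and this is precisely where the paper's proof is organized differently: decouple the two limits. Since $\edir(\theo)=\inf_{(L,\ell)\in\mathscr A}\ecordl(\theo)$, your chain of inequalities gives, for every \emph{fixed} $\mu\in(\mu_\beta,\theo)$ and every admissible $(L,\ell)$,
\begin{equation*}
	\edir(\theo)\;\leq\;\ecordl(\theo)\;\leq\;\esec(\mu)+C\lf(\ell^{-2}+L^{-2}\ri)C_\mu\, e^{-\eps_0\sqrt{\theo-\mu}\,\ell/2}.
\end{equation*}
Keeping $\mu$ fixed and letting $\ell\to+\infty$ (with $\ell\ll L\ll\ell^a$), the error term vanishes no matter how $C_\mu$ depends on $\mu$, so $\edir(\theo)\leq\esec(\mu)$ for every $\mu\in(\mu_\beta,\theo)$; taking $\liminf_{\mu\to\theo^-}$ then yields \eqref{eq: lower bound}. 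This reordering (first $\ell\to\infty$ at fixed $\mu$, then $\mu\to\theo^-$) is exactly what the paper does, and with it the rest of your argument closes without any information on $C_\mu$ beyond its finiteness.
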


\begin{remark}[Boundedness of $ \esec(\theo) $]
	\label{rem-semi-bd*}
	\mbox{}	\\
By \cref{prop:mu=Th0}, the lower bound in \eqref{eq: lower bound} also implies that \(\liminf_{\mu \to \Theta_0^{-}} \esec(\mu) \) is finite. However, this is not enough to deduce that \( \esec(\theo) \) is finite or that the functional \( \E_{\theo}[\psi; \S_{\beta}] \)  is bounded from below, simply because the continuity of this quantity across the threshold is not obvious. 
\end{remark}

\begin{proof}
Let {$ (L,\ell) \in \mathscr{A} $ and let}   \(0<\eta<\ell/2\) to be chosen later. Consider the domain \(\Gamma_\beta(L-\eta,\ell-\eta)\) and a cut-off function \(\chi_\eta\in C_c^\infty(\R^2;[0,1])\) such that
	\bdm
		\chi_\eta=
		\begin{cases}
		1	&	\mbox{on }\Gamma_\beta(L-2\eta,\ell-2\eta)	\\
		0	&	\mbox{on }\overline{\Gamma_\beta(L,\ell)}\setminus \Gamma_\beta(L-\eta,\ell-\eta)
		\end{cases}
	\edm
	and
	\bdm
		\lf|\nabla\chi_\eta \ri|^2 + \lf| \Delta\chi_\eta \ri| \leq C\eta^{-2}.
	\edm
	Let also $ \mu $ be such that \(0<\Theta_0-\mu \ll 1\). For any minimizing $ \msec $ we recall the bound \eqref{eq: infty} yielding $ |\msec| \leq 1 $. Consider now the trial function
	{\beq
		\psi_{\mathrm{trial}} = \chi_\eta \msec \in \domcord,
	\eeq
	where $\domcord$ is defined in \eqref{eq: def domain Dirichlet corner}.}
Then, we have
\bml{
	\label{eqp: lb 1}
	\ecordl(\theo) \leq \E_{\theo}\big[\psi_{\mathrm{trial}}; \corner \big] \\
	=\int_{\Gamma_\beta(L,\ell)} \diff \xv \: \left\{ \lf|\chi_\eta (\nabla+\ii\Fb) \msec \ri|^2 -\Theta_0 |\chi_\eta \msec|^2+ \tfrac{1}{2} \Theta_0 |\chi_\eta \msec|^4 - \lf(\chi_\eta\Delta\chi_\eta \ri) |\msec|^2  \right\} \\
	\leq \int_{\Gamma_\beta(L,\ell)} \diff \xv \: \left\{ \lf|\chi_\eta (\nabla+\ii\Fb) \msec \ri|^2 -\Theta_0 |\chi_\eta \msec|^2+ \tfrac{1}{2} \Theta_0 |\chi_\eta \msec|^4 + C \eta^{-2} |\msec|^2  \right\}.
	}
We  are going to estimate from above the r.h.s. of the above expression. Remembering that \(0\leq \chi_\eta\leq 1\) and \(\Theta_0-\mu>0\), we can write
\[\begin{aligned}
\lf|\chi_\eta (\nabla+\ii\Fb) \msec \ri|^2	&	\leq \lf|(\nabla+\ii\Fb) \msec \ri|^2,\\
|\chi_\eta \msec|^4&\leq  |\msec|^4,\\
-\Theta_0|\chi_\eta \msec|^2&=-\mu|\msec|^2+\mu(1-\chi_\eta^2)|\msec|^2-(\Theta_0-\mu)|\chi_\eta \msec|^2\\
&\leq -\mu|\msec|^2+\mu(1-\chi_\eta^2)|\msec|^2. \end{aligned}\]
So we get from \eqref{eqp: lb 1}
\begin{multline}
	\label{eqp: lb 2}
	\ecordl(\theo) \leq  \E_{\mu}\big[\msec; \corner \big] +\int_{S_{\beta}} \diff \xv \: \left\{ \tfrac12 (\Theta_0-\mu) |\msec|^4 + \mu(1-\chi_\eta^2)| \msec |^2+C \eta^{-2} |\msec|^2  \right\}	\\
	\leq  \E_{\mu}\big[\msec; \S_{\beta} \big] +\int_{S_{\beta}} \diff \xv \: \left\{ \tfrac12 (\Theta_0-\mu) |\msec|^4 + \mu(1-\chi_\eta^2)| \msec |^2+C \eta^{-2} |\msec|^2  \right\} \\
	+ \OO(\ell^{-\infty}),
\end{multline}
where we used \cref{prop:decay}.
From the variational equation of $ \msec $ and after an integration by parts, we find
\[
	\int_{\S_\beta} \diff \xv \: |\msec|^4 = - 2 \esec(\mu),
\]
so that
\[
	\E_{\mu}\big[\msec; \S_{\beta} \big] + \int_{\S_{\beta}} \diff \xv \: \tfrac12 (\Theta_0-\mu) |\msec|^4 = \lf( 1- (\Theta_0-\mu) \ri) \esec(\mu).
\]
By the exponential decay of \(\ \msec \) in  \cref{prop:decay}, we also get
\[
	\int_{S_{\beta}} \diff \xv \:  \mu(1-\chi_\eta^2)| \msec |^2 =  \OO(\ell^{-\infty}),
\]
and, by \cref{prop:min},
\[ C \eta^{-2}\int_{S_{\beta}} \diff \xv \:  |\msec|^2  \leq \tfrac{C}{\eta^2(\Theta_0-\mu)}.
\]
Collecting the foregoing estimates and plugging them in \eqref{eqp: lb 2}, we obtain
\[
	\ecordl(\theo) \leq \lf( 1- (\Theta_0-\mu) \ri) \esec(\mu) +\tfrac{C}{\eta^2(\Theta_0-\mu)} + \mathcal O(\ell^{-\infty}) .\]
Choosing now \(1\ll \eta\ll\ell\) (recall that \(\ell\ll L\ll\ell^a\) for \(a>1\)) and letting $ \ell \to +\infty $, we get by \cref{lemma:mu=Th0} that
\[
	\edir(\theo) \leq \lf( 1- (\Theta_0-\mu) \ri) \esec(\mu). \]
The l.h.s. in the foregoing inequality is independent of \(\mu\) and therefore we can take the $ \liminf $ for $ \mu \to \theo^- $ to get the result.
\end{proof}

\subsection{Upper bound}

We now address the counterpart of \cref{pro: lower bound} by proving the following result.

\begin{proposition}[Upper bound]
	\label{pro: upper bound}
	\mbox{}	\\
	Let \eqref{eq: condition} hold. Then,
	\beq
		\label{eq: upper bound}
		\limsup_{\mu \to \theo^-} \esec(\mu) \leq \edir(\theo).
	\eeq
\end{proposition}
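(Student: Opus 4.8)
The plan is to reverse the strategy of \cref{pro: lower bound}: instead of cutting a sector minimizer down to the corner domain, I would promote a minimizer of the Dirichlet corner energy at the threshold to a competitor for the sector problem. Fix $(L,\ell)\in\mathscr A$ and let $\mdirl\in\domcord$ realize $\ecordl(\theo)$. Since $\mdirl$ vanishes on $\partial\Gamma^{\rm bd}\cup\partial\Gamma^{\rm in}$ — precisely the portion of $\partial\corner$ lying in the interior of $\S_\beta$ — the function $\widetilde\psi$ obtained by setting $\widetilde\psi=\mdirl$ on $\corner$ and $\widetilde\psi=0$ on $\S_\beta\setminus\corner$ has no trace jump and hence belongs to $W^{1,2}_{\Fb}(\S_\beta)$. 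No condition is required on $\partial\Gamma^{\rm out}\subset\partial\S_\beta$, which is the Neumann part of the sector boundary. Using $\widetilde\psi$ as a trial state and noting that it is supported in $\corner$, I obtain for every $\mu\in(\mu_\beta,\theo)$
\beq
	\esec(\mu)\leq\E_\mu[\widetilde\psi;\S_\beta]=\E_\mu[\mdirl;\corner].
\eeq

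The next step is to quantify the cost of evaluating the functional at parameter $\mu$ rather than $\theo$ on the fixed profile $\mdirl$. Since the quadratic and quartic terms are the only ones affected by the value of the parameter, a direct computation gives
\beq
	\E_\mu[\mdirl;\corner]=\ecordl(\theo)+(\theo-\mu)\int_{\corner}\diff\xv\,\lf(\lf|\mdirl\ri|^2-\tfrac12\lf|\mdirl\ri|^4\ri).
\eeq
By the universal bound \eqref{eq: infty} one has $|\mdirl|\leq1$, so the integrand lies in $[0,1]$ and the correction is non-negative and controlled by the volume alone, namely $(\theo-\mu)|\corner|\leq C(\theo-\mu)L\ell$. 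Thus
\beq
	\esec(\mu)\leq\ecordl(\theo)+C(\theo-\mu)L\ell.
\eeq
It is worth stressing that, unlike in the lower bound, no Agmon-type decay of the minimizer is needed here: the crude $L^\infty$ estimate suffices, provided the parameters are tuned correctly.

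Finally, I would let $(L,\ell)=(L_\mu,\ell_\mu)$ depend on $\mu$, diverging slowly enough that the error vanishes. Keeping $L_\mu$ comparable to $\ell_\mu$ — say $\ell_\mu=(\theo-\mu)^{-1/4}$ and $L_\mu=2\ell_\mu/\tan(\beta/2)$, so that $(L_\mu,\ell_\mu)\in\mathscr A$ — one has $L_\mu,\ell_\mu\to+\infty$ while $(\theo-\mu)L_\mu\ell_\mu=\OO\lf((\theo-\mu)^{1/2}\ri)\to0$ as $\mu\to\theo^-$. By the monotonicity of $\ell\mapsto\ecordl(\theo)$ and $L\mapsto\ecordl(\theo)$ together with the identification of $\edir(\theo)$ as their joint infimum (\cref{lemma:mu=Th0}), any such sequence satisfies $\ecordl(\theo)\to\edir(\theo)$ along $(L_\mu,\ell_\mu)$. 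Taking the $\limsup$ as $\mu\to\theo^-$ then yields \eqref{eq: upper bound}.

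The proof is short, and the only genuinely delicate point is the interplay of the two parameter constraints: the convergence $\ecordl(\theo)\to\edir(\theo)$ forces $L,\ell\to\infty$, whereas the correction error $(\theo-\mu)L\ell$ forces $L\ell\ll(\theo-\mu)^{-1}$. The freedom to take $L$ only of the order of $\ell$, rather than as large as $\ell^a$, is exactly what renders these two requirements compatible; were the volume unavoidably of order $\ell^{a+1}$, one would instead need to invoke the exponential decay of $\mdirl$ away from $\partial\Gamma^{\rm out}$ to replace $|\corner|$ by $\OO(L)$ in the correction term.
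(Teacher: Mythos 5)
Your proposal is correct and follows essentially the same route as the paper: both use the zero extension of the Dirichlet corner minimizer $\mdirl$ as a trial state for the sector functional and compare $\E_\mu$ with $\E_{\theo}$ on $\corner$. The only (cosmetic) difference is the order of limits: the paper takes $\mu \to \theo^-$ at \emph{fixed} $(L,\ell)$, so the error $(\theo-\mu)\int_{\corner}|\mdirl|^2$ vanishes with no volume estimate at all, and only afterwards lets $(L,\ell)\to\infty$ via \cref{lemma:mu=Th0}, whereas your diagonal choice $(L_\mu,\ell_\mu)$ coupled to $\mu$ forces you to prove the quantitative bound $C(\theo-\mu)L\ell\to 0$ and the convergence $\ecordl(\theo)\to\edir(\theo)$ along that particular path — both valid (the latter by the joint monotonicity underlying \cref{lemma:mu=Th0}), but avoidable.
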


\begin{remark}\label{rem-semi-bd**}
	\mbox{}	\\
	By a direct inspection of the proof, one can easily realize that the same argument yields the boundedness from above of $ \esec(\theo) $, via the inequality $ \esec(\theo) \leq \ecordl(\theo) $.  As pointed out in \cref{rem-semi-bd*}, the boundedness from below of $ \esec(\theo) $ however does not follow as easily (see the proof of \cref{prop:sb-theta0}). 
\end{remark}

\begin{proof}[Proof of \cref{pro: upper bound}] 
We use a minimizer \( \mdirl \in \domcord \) {defined in \eqref{eq: def domain Dirichlet corner}} of $ \E_{\Theta_0}[\psi,\corner] $ with zero Dirichlet conditions as a trial state for the functional $ \E_\mu[\psi; \S_{\beta}] $, which is possible because any such function can be extended by \(0\) on \(\S_\beta\setminus\corner \). Assuming that {$ (L,\ell) \in \mathscr{A} $,} \(L,\ell>2\) and \(\mu_\beta<\mu<\Theta_0\), we can write
\bml{
	\esec(\mu) \leq \E_{\mu}\big[ \mdirl; \S_{\beta} \big]  = \E_{\mu}\big[ \mdirl; \corner \big] 
	\leq \E_{\theo}\big[ \mdirl; \corner \big]  \\
	+ (\Theta_0-\mu)\int_{\Gamma_\beta(L,\ell)} \diff \xv \: |\mdirl|^2 
	\leq \ecordl(\theo) + (\Theta_0-\mu)\int_{\Gamma_\beta(L,\ell)} \diff \xv \: |\mdirl|^2.
}
Taking \(\mu \to \Theta_0^{-} \) {for fixed $ (L,\ell) $}, we get
\[
	\limsup_{\mu \to \Theta_0^-} \esec(\mu) \leq \ecordl(\theo).
\]
This is true for all \(L,\ell>2\), so taking now the limit as \( \ell\to+\infty\), with $ \ell \ll L \ll \ell^a $, $ a > 1 $,  we get the result via \cref{lemma:mu=Th0}.
\end{proof}

\subsection{Existence of a ground state at the threshold}

{In the next two propositions we show that \(\esec(\theo)\) is finite (\cref{prop:sb-theta0}) and that a ground state of $E_{\mathrm{sector}}(\Theta_0)$ exists (\cref{prop:ex-min}). The combination of the two results provides the proof of \cref{corollary}.}

\begin{proposition}[Sector energy at the threshold]
	\label{prop:sb-theta0}
	\mbox{}	\\
We have \(\esec(\theo)=\ecor^{\rm D}(\theo)\). In particular,  \(\esec(\theo)\) is finite and $\ecor^{\rm D}(\theo)<0$.
\end{proposition}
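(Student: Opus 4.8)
The plan is to prove the identity $\esec(\theo)=\edir(\theo)$ by a two-sided inequality, and then derive finiteness and strict negativity as corollaries. By \cref{lemma:mu=Th0} we already know $-\infty<\edir(\theo)\leq 0$, so the identity immediately transfers finiteness to $\esec(\theo)$. The main subtlety, as stressed in \cref{rem-semi-bd*}, is that the convergence results of \cref{pro: lower bound} and \cref{pro: upper bound} concern the one-sided limit $\mu\to\theo^-$ of $\esec(\mu)$, \emph{not} the value $\esec(\theo)$ of the functional exactly at the threshold; the hard part is therefore to bridge this gap and show $\esec(\theo)=\lim_{\mu\to\theo^-}\esec(\mu)$.

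First I would establish the easy inequality $\esec(\theo)\leq\edir(\theo)$. As noted in \cref{rem-semi-bd**}, one simply takes the minimizer $\mdirl$ of $\E_{\theo}[\,\cdot\,;\corner]$, extends it by zero to all of $\S_\beta$ (legitimate since it vanishes on $\partial\Gamma^{\rm bd}\cup\partial\Gamma^{\rm in}$), and uses it as a trial state for $\E_{\theo}[\,\cdot\,;\S_\beta]$. This gives $\esec(\theo)\leq\E_{\theo}[\mdirl;\S_\beta]=\ecordl(\theo)$ for every $(L,\ell)\in\mathscr A$, and passing to the limit via \cref{lemma:mu=Th0} yields $\esec(\theo)\leq\edir(\theo)$. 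In particular this already shows $\esec(\theo)<+\infty$.

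For the reverse inequality $\esec(\theo)\geq\edir(\theo)$, the plan is to exploit concavity and monotonicity of $\mu\mapsto\esec(\mu)$ from \cref{prop:min}. The function $\mu\mapsto\esec(\mu)$ is concave and decreasing on $(\mu_\beta,\theo)$, and concave functions are continuous on the interior and lower semicontinuous up to the endpoint from the left, so $\lim_{\mu\to\theo^-}\esec(\mu)$ exists. The key observation is that for any fixed test function $\psi\in W^{1,2}_{\Fb}(\S_\beta)$ with $\psi\in L^4$, the map $\mu\mapsto\E_\mu[\psi;\S_\beta]$ is affine (hence continuous) in $\mu$, so $\E_{\theo}[\psi;\S_\beta]=\lim_{\mu\to\theo^-}\E_\mu[\psi;\S_\beta]\geq\lim_{\mu\to\theo^-}\esec(\mu)=\liminf_{\mu\to\theo^-}\esec(\mu)\geq\edir(\theo)$, where the last step is \cref{pro: lower bound}. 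Taking the infimum over all admissible $\psi$ gives $\esec(\theo)\geq\edir(\theo)$. Combining the two bounds yields $\esec(\theo)=\edir(\theo)$.

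Finally, to conclude $\edir(\theo)<0$ (strict negativity, improving the $\leq 0$ from \cref{lemma:mu=Th0}), I would combine the established identity with the upper bound $\limsup_{\mu\to\theo^-}\esec(\mu)\leq\edir(\theo)$ from \cref{pro: upper bound} and the lower bound $\liminf_{\mu\to\theo^-}\esec(\mu)\geq\edir(\theo)$ from \cref{pro: lower bound}, which together force $\lim_{\mu\to\theo^-}\esec(\mu)=\edir(\theo)=\esec(\theo)$. Strict negativity then follows because $\esec(\mu)<0$ for all $\mu_\beta<\mu<\theo$ by \cref{prop:min}(2), and by concavity the left limit at $\theo$ cannot be zero: indeed the bound $\esec(\mu)\leq\lambda^2(\mu_\beta-\mu+\tfrac12\lambda^2\|u_\beta\|_4^4)$ from \eqref{eq: Esector negative}, valid up to and including $\mu=\theo$, gives $\esec(\theo)\leq\lambda^2(\mu_\beta-\theo+\tfrac12\lambda^2\|u_\beta\|_4^4)<0$ for $\lambda$ small, using $\mu_\beta<\theo$ from \eqref{eq: condition}. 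This completes the proof; the delicate point throughout is the continuity argument upgrading the one-sided limit to the value at the threshold, which the affine dependence of the functional on $\mu$ resolves cleanly.
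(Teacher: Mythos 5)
Your proposal is correct and takes essentially the same route as the paper: the upper bound $\esec(\theo)\leq\edir(\theo)$ by zero-extension of the Dirichlet minimizers together with \cref{lemma:mu=Th0}, the lower bound by fixing an admissible test function, exploiting that $\mu\mapsto\E_\mu[\psi;\S_\beta]$ is continuous (affine) at fixed $\psi$, and invoking \cref{pro: lower bound}, and strict negativity from the trial-state bound \eqref{eq: Esector negative} of \cref{prop:min} at $\mu=\theo$. The only cosmetic difference is that the paper runs the lower bound on a minimizing sequence and drops the quartic term via a one-sided inequality (thus never needing the $L^4$ restriction you impose), but this is the same idea.
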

\begin{proof}
Consider a minimizing sequence \(( \lf\{ u_n \ri\}_{n \in \N}\subset W^{1,2}_{\mathbf F}(\mathcal S_\beta)\) for {$E_{\mathrm{sector}}(\Theta_0)$; that is
\[\lim_{n\to+\infty}\mathcal{E}_{\Theta_0}[u_n, \mathcal{S}_\beta]=\esec(\theo).\]
Fixing \(n\geq 1\) and \(\mu\in(\mu_\beta,\Theta_0)\), we get
\[ \mathcal{E}_{\Theta_0}[u_n, \mathcal{S}_\beta] \geq  \mathcal{E}_{\mu}[u_n, \mathcal{S}_\beta] +(\mu-\Theta_0)\int_{\mathcal S_\beta}|u_n|^2\dd \xv\geq \esec(\mu)+(\mu-\Theta_0)\int_{\mathcal S_\beta}|u_n|^2\dd \xv.\]
Taking \(\mu\to\theo^-\), we get
\[ \mathcal{E}_{\Theta_0}[u_n, \mathcal{S}_\beta] \geq \liminf_{\mu\to\theo^-}\esec(\mu)\geq\ecor^{\rm D}(\theo),\]}
by \cref{pro: lower bound}.
The above inequality is true for all \(n\geq 1\), so, taking \(n\to+\infty\), we get
\[\esec(\theo)\geq \ecor^{\rm D}(\theo).\]

Since the domain of \(\E_\mu[\cdot; \corner]\) is contained in that of {$\mathcal{E}_{\Theta_0}[\cdot; \mathcal{S}_\beta]$} (see \eqref{eq:gse-corner-D}),  we get for all \(L,\ell>2\) (with $(L,\ell)\in\mathcal A$) that
\[\esec(\theo)\leq \ecordl(\theo)\]
and, by taking the limits \(L,\ell\to+\infty\), we get by \cref{lemma:mu=Th0},
\[\esec(\theo)\leq  \ecor^{\rm D}(\theo).\]
This finishes the proof that \(\esec(\theo)=\ecor^{\rm D}(\theo)\).  That $\esec(\theo)$ is finite follows from  \cref{lemma:mu=Th0}, which asserts that $\ecor^{\rm D}(\theo)$ is finite.  Finally, 
by \cref{prop:min},    $\esec(\theo)<0$,  thereby yielding that $\ecor^{\rm D}(\theo)<0$.
\end{proof}

Knowing that $\esec(\theo)$ is finite,  we prove that a ground state of \(\esec(\theo)\) exists.

\begin{proposition}[Existence of a minimizer of $ \fsec $ at the threshold]
	\label{prop:ex-min}
	\mbox{}	\\
There exists \(\psi_*  \in W^{1,2}_{\Fb}(\mathcal{S}_{\beta})\) such that
\beq \E_{\theo}[\psi_*;\S_{\beta}]=\esec(\theo).\eeq
\end{proposition}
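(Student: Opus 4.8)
The plan is to realize the minimizer at the threshold as a limit of the Dirichlet corner minimizers $\mdirl$ from \eqref{eq:gse-corner-D}. By \cref{lemma:mu=Th0} and \cref{prop:sb-theta0}, the family $\{\mdirl\}_{(L,\ell)\in\mathscr{A}}$, each extended by $0$ to $\S_\beta$, is a minimizing sequence for $\esec(\theo)$, since $\ecordl(\theo)\to\edir(\theo)=\esec(\theo)$ as $\ell\to+\infty$ with $\ell\ll L\ll\ell^a$. First I would collect uniform a priori bounds. Testing the Euler--Lagrange equation with $\mdirl$ itself (the boundary terms vanish thanks to the Dirichlet condition on $\partial\Gamma^{\rm bd}\cup\partial\Gamma^{\rm in}$ and the natural Neumann condition on $\partial\Gamma^{\rm out}$) gives $\E_\theo[\mdirl;\corner]=-\tfrac12\theo\|\mdirl\|_{L^4(\corner)}^4$, so the global $L^4$ norms are bounded by $\tfrac{2}{\theo}|\edir(\theo)|$ uniformly in $(L,\ell)$. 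Together with the pointwise bound $\|\mdirl\|_\infty\le1$ from \eqref{eq: infty} and elliptic regularity applied to $\Lap_\beta\mdirl=\theo(1-|\mdirl|^2)\mdirl$ (whose right-hand side is bounded in $L^\infty$), this yields uniform $W^{1,2}_{\Fb}$ bounds on every fixed ball $\cB_\rho$. Moreover, since \cref{prop:decay corner} holds for all $\mu\in[\theo,1)$ and its proof applies verbatim to the Dirichlet minimizer, I obtain exponential decay, uniform in $(L,\ell)$, of $\mdirl$ in the normal distance to $\partial\S_\beta$ at the rate $\sqrt{1-\theo}>0$.

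With these bounds I would extract a subsequence converging weakly in $W^{1,2}_{\Fb}(\cB_\rho)$ for every $\rho$, and strongly in $L^p_{\rm loc}$ and almost everywhere, to a limit $\psi_*$. Passing to the limit in the weak formulation shows that $\psi_*\in W^{1,2}_{\Fb,\rm loc}(\S_\beta)$ solves the Neumann problem $(\Lap_\beta-\theo)\psi_*=-\theo|\psi_*|^2\psi_*$, with $|\psi_*|\le1$ and, by Fatou, $\|\psi_*\|_{L^4(\S_\beta)}^4\le\liminf\|\mdirl\|_4^4<\infty$. Once I know that $\psi_*\in L^2(\S_\beta)$, and hence $\psi_*\in W^{1,2}_{\Fb}(\S_\beta)$, the lower bound $\E_\theo[\psi_*;\S_\beta]\ge\esec(\theo)$ is immediate by definition of the infimum, and the matching upper bound will follow from global strong convergence, so that all terms of $\E_\theo$ pass to the limit and $\E_\theo[\psi_*;\S_\beta]=\lim\ecordl(\theo)=\esec(\theo)$.

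The main obstacle is precisely the global integrability $\psi_*\in L^2(\S_\beta)$, equivalently the absence of loss of mass at infinity, which is what would make $\{\mdirl\}$ relatively compact in $L^2$. The difficulty is structural: at $\mu=\theo$ the quadratic form degenerates to exactly the threshold value along the boundary, because $\inf\sigma_{\rm ess}(\Lap_\beta)=\theo$ by \eqref{eq:Theta0}; there is therefore no spectral gap with which to run a naive Agmon estimate in the tangential direction, and a priori $\|\mdirl\|_2$ need not be uniformly bounded. The way around it is to invoke the strict binding condition $\mu_\beta<\theo$ in \eqref{eq: condition} through a concentration--compactness dichotomy for the minimizing sequence. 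Vanishing is excluded because $\esec(\theo)<0$ by \cref{prop:sb-theta0}. Dichotomy is excluded as follows: using the normal decay to confine the mass to a neighbourhood of $\partial\S_\beta$, any mass travelling out along one of the two rays would, after blow-up, solve the threshold problem on a half-plane and hence carry non-negative energy by the second and third bullets of \cref{lem:spectrum}; thus splitting the sequence into a part concentrating at the corner plus a part escaping to infinity cannot lower the energy below that of the concentrated part, and since the limiting energy $\esec(\theo)$ is strictly negative the escaping component must be absent. Quantitatively, I would cut off with a partition $\chi_R^2+\rho_R^2=1$, $\chi_R\equiv1$ on $\cB_R$ and $\chi_R$ supported in $\cB_{2R}$, and use that $\E_\theo[\rho_R\mdirl;\S_\beta]\ge\tfrac12\theo\|\rho_R\mdirl\|_4^4\ge0$ (by \cref{lem:spectrum}) together with the IMS identity and the energy bound to rule out the persistence of tail mass.

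This concentration--compactness analysis delivers the uniform tightness $\sup_{(L,\ell)}\int_{\S_\beta\setminus\cB_R}|\mdirl|^2\to0$ as $R\to+\infty$. Combined with $\|\mdirl\|_\infty\le1$, tightness yields the uniform bound $\|\mdirl\|_{L^2(\S_\beta)}^2\le|\S_\beta\cap\cB_R|+o_R(1)\le C$, hence by Fatou $\psi_*\in L^2(\S_\beta)$ with the same tail control, so no mass is lost in the limit. Upgrading the local strong convergence to global strong $L^2$ and $L^4$ convergence (using $|\mdirl|\le1$ to interpolate), I conclude that $\E_\theo[\psi_*;\S_\beta]=\esec(\theo)$, so that $\psi_*$ is the sought minimizer; note that $\psi_*\not\equiv0$ is automatic since $\E_\theo[0;\S_\beta]=0>\esec(\theo)$. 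I expect the dichotomy step, and in particular making rigorous the non-negativity of the energy of a piece escaping tangentially along the boundary at the critical field, to be the genuinely delicate point, precisely because the threshold kills the tangential spectral gap that would otherwise give decay for free.
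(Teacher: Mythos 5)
Your skeleton --- take the Dirichlet corner minimizers $\mdirl$ extended by zero as a minimizing sequence for $\esec(\theo)$, extract a local limit $\psi_*$ solving the Euler--Lagrange equation, then fight the possible loss of mass at infinity --- coincides with the paper's strategy up to the extraction step, and you correctly identify the loss-of-mass issue as the crux. But your resolution of it is fallacious. The concentration--compactness dichotomy exclusion does not work as you state it: from the IMS splitting and \cref{lem:spectrum} you only get $\E_{\theo}[\rho_R\mdirl;\S_\beta]\geq \tfrac{\theo}{2}\|\rho_R\mdirl\|_{4}^{4}\geq 0$, hence $\E_{\theo}[\chi_R\mdirl;\S_\beta]\leq \esec(\theo)+o(1)$; this is perfectly consistent with a nonzero amount of $L^2$-mass escaping to infinity while carrying \emph{asymptotically zero} energy. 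Since $\inf\sigma_{\rm ess}(\Lap_\beta)=\theo$ by \eqref{eq:Theta0}, one can transport $L^2$-mass along the rays at energy cost tending to zero (small-amplitude quasimodes of the half-plane problem at the bottom of its essential spectrum, for which the quartic term is negligible), so the strict negativity of $\esec(\theo)$ does not force the escaping component to vanish: escaping mass need not \emph{lower} the energy, it only needs not to raise it. Consequently your claimed uniform tightness $\sup_{(L,\ell)}\int_{\S_\beta\setminus\cB_R}|\mdirl|^2\to 0$, and with it the uniform bound $\|\mdirl\|_{L^2(\S_\beta)}\leq C$ --- which is precisely what you need both for $\psi_*\in W^{1,2}_{\Fb}(\S_\beta)$ and for passing to the limit in the energy --- is not established. (Note also that your uniform gradient bound is not independent of this: the critical-point identity gives $\|(\nabla+\ii\Fb)\mdirl\|_2^2=\theo\|\mdirl\|_2^2-\theo\|\mdirl\|_4^4$, so it too hinges on the missing $L^2$ bound.)

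The paper's way around this is purely nonlinear and is the idea your proposal is missing (\cref{lem:ex-min}). Test the equation \eqref{eq:ex-min-PDE} against $\chi^2\mdirl$, as in \eqref{eq:ex-min-1}, with a weight $\chi$ growing \emph{linearly} in $|s(\xv)|$. At $\mu=\theo$ the quadratic terms cancel exactly by \cref{lem:spectrum}, leaving $\theo\int\chi^2|\mdirl|^4\leq \int|\nabla\chi|^2|\mdirl|^2+C$; since $|\nabla\chi|\leq 1$ and $\int_{\{|s(\xv)|\geq 1,\,t(\xv)\leq 2\}}|s(\xv)|^{-2}\,\diff\xv<\infty$, a Cauchy--Schwarz inequality bounds the right-hand side by $C\bigl(\int s^2(\xv)|\mdirl|^4\bigr)^{1/2}+C$, and absorption yields the uniform weighted bound $\int s^2(\xv)|\mdirl|^4\leq C$ of \eqref{eq:ex-min-4*}. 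This single estimate delivers everything: the uniform $L^2$ bound (Cauchy--Schwarz against $s^{-2}$ near the boundary, plus the normal Agmon decay for $t(\xv)\geq 2$), and the $L^4$-tightness $\int_{\{|\xv|\geq R\}}|\mdirl|^4\leq C/R^2$. Crucially, the paper then never needs $L^2$-tightness at all: since the local limit $\psi_*$ is a critical point of $\fsec$, its energy equals $-\tfrac{\theo}{2}\|\psi_*\|_4^4$, and the $L^4$-tightness alone shows $\|\psi_*\|_4^4=-\tfrac{2}{\theo}\edir(\theo)$, whence $\fsec[\psi_*]=\edir(\theo)=\esec(\theo)$ by \cref{prop:sb-theta0}. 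If you wish to salvage your scheme, replace the concentration--compactness step by this weighted identity; without it, the step you yourself flagged as ``the genuinely delicate point'' remains a gap.
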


We will construct a minimizer \(\psi_*\) as a limit of  a sequence of minimizers of the functional \( \E_{\theo}[\cdot; \corner]\).  The main ingredient in the construction is in the following lemma.
\begin{lemma}\label{lem:ex-min}
There exist positive constants \(C\) and \(L_0\) such that,  if
\(L\geq L_0\) and
 \(\psi_{L,\ell}\in\domcord \) is a minimizer of the functional \( \E_{\theo}[\cdot; \corner]\) with \(\ell=L^{2/3}\), then 
\beq
	\int_{\Gamma_\beta(L,\ell)} \diff \xv\, \lf\{ \lf|(\nabla+i \Fb)\psi_{L,\ell}\ri|^2+|\psi_{L,\ell}|^2+|\psi_{L,\ell}|^4 \ri\} \leq C,
\eeq
 and
\beq
	\label{eq: power law decay}
	 \int_{\Gamma_\beta(L,\ell)}\diff \xv\,  |\xv|^2 |\psi_{L,\ell}|^4\leq C.
\eeq
\end{lemma}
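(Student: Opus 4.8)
The engine of the proof is the Euler--Lagrange equation satisfied by the minimizer,
\[
	-(\nabla+\ii\Fb)^2\psi_{L,\ell} = \theo\lf(1-|\psi_{L,\ell}|^2\ri)\psi_{L,\ell}\quad\text{in }\corner,
\]
with Neumann conditions on $\partial\Gamma^{\rm out}$ and $\psi_{L,\ell}=0$ on $\partial\Gamma^{\rm bd}\cup\partial\Gamma^{\rm in}$, together with the localization identity valid for any real, bounded Lipschitz weight $w$,
\[
	\int_{\corner}\lf|(\nabla+\ii\Fb)(w\psi_{L,\ell})\ri|^2 - \int_{\corner}|\nabla w|^2|\psi_{L,\ell}|^2 = \theo\int_{\corner}w^2|\psi_{L,\ell}|^2 - \theo\int_{\corner}w^2|\psi_{L,\ell}|^4,
\]
whose boundary terms vanish thanks to the Neumann condition on $\partial\Gamma^{\rm out}$ and to $w\psi_{L,\ell}=0$ on the remaining boundary. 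I would combine this with the three spectral bounds of \cref{lem:spectrum}. Choosing $w$ radial and supported in the outer region $\{|\xv|\geq1\}$, the second bullet (the $\theo$-bound) cancels the $L^2$-terms and leaves the clean inequality $\theo\int w^2|\psi_{L,\ell}|^4\leq\int|\nabla w|^2|\psi_{L,\ell}|^2$; taking $w\simeq|\xv|$ with a cut-off near $\cB_1$ this yields the polynomially weighted control $\int_{|\xv|\geq2}|\xv|^2|\psi_{L,\ell}|^4\leq C(1+M)$, where $M:=\int_{|\xv|\geq2}|\psi_{L,\ell}|^2$. Choosing instead $w$ supported in the interior region $\{t(\xv)\geq1\}$, away from $\partial\S_\beta$, the third bullet (bottom of spectrum equal to $1$) produces the genuine gap $(1-\theo)>0$, hence an exponential Agmon decay in the normal variable $t(\xv)$ which confines the mass of $\psi_{L,\ell}$ to a strip of normal width of order one around $\partial\Gamma^{\rm out}$.

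Next I would close the estimate by a self-improving argument for $M$. By the normal decay the mass outside the strip $\{t(\xv)\leq2\}$ is controlled by the strip mass, $M\leq C\int_{|\xv|\geq2,\,t(\xv)\leq2}|\psi_{L,\ell}|^2+C$, while on the strip Cauchy--Schwarz gives
\[
	\int_{|\xv|\geq2,\,t(\xv)\leq2}|\psi_{L,\ell}|^2\leq\lf(\int_{|\xv|\geq2}|\xv|^2|\psi_{L,\ell}|^4\ri)^{1/2}\lf(\int_{|\xv|\geq2,\,t(\xv)\leq2}|\xv|^{-2}\ri)^{1/2}.
\]
The crucial point is that, the strip having normal width of order one, the weight integral is uniformly bounded, $\int_{|\xv|\geq2,\,t(\xv)\leq2}|\xv|^{-2}\lesssim\int_2^{L}s^{-2}\dd s\leq C$, independently of $L$ (and of $\ell=L^{2/3}$). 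Inserting $\int|\xv|^2|\psi_{L,\ell}|^4\leq C(1+M)$ yields $M\leq C(1+M)^{1/2}+C$, whence $M\leq C$ uniformly in $L$. Recalling $|\psi_{L,\ell}|\leq1$ from \eqref{eq: infty}, this delivers $\|\psi_{L,\ell}\|_2^2\leq|\cB_2|+M\leq C$, then $\int_{\corner}|\xv|^2|\psi_{L,\ell}|^4\leq C(1+M)\leq C$ (which is \eqref{eq: power law decay}) and $\|\psi_{L,\ell}\|_4^4\leq C$. Finally the gradient term follows from the energy identity $\|(\nabla+\ii\Fb)\psi_{L,\ell}\|_2^2 = \ecordl(\theo)+\theo\|\psi_{L,\ell}\|_2^2-\tfrac12\theo\|\psi_{L,\ell}\|_4^4$ together with the uniform bound $\ecordl(\theo)\in[-C,0]$ from \cref{lemma:mu=Th0}.

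The main obstacle is the degeneracy at the threshold: along the boundary the spectral gap collapses to zero at $\mu=\theo$, so no exponential tangential decay is available and one must extract decay solely from the defocusing quartic term, i.e.\ through polynomial rather than exponential weights. This is delicate because the naive weighted integral $\int_{|\xv|\geq2}|\xv|^{-2}$ taken over the full strip of width $\ell$ is of order $\ell\to+\infty$ and cannot be used directly; the resolution is precisely to first gain the normal exponential decay (which does enjoy the genuine gap $1-\theo$ from the interior bottom of the spectrum) in order to shrink the effective width to order one before running the tangential Cauchy--Schwarz. I would also take some care with the standard but non-trivial technical points: the degeneracy of the tubular coordinates on the bisectrix of $\S_\beta$ (handled by a smoothed normal-distance weight), the vanishing of the boundary terms in the localization identity, and the verification that each truncated function $w\psi_{L,\ell}$ meets the hypotheses of the relevant bullet of \cref{lem:spectrum}.
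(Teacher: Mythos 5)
Your proposal is correct and follows essentially the same route as the paper's proof: the same localization identity obtained by pairing the Euler--Lagrange equation with $w^2\psi_{L,\ell}$, the same use of the interior spectral gap $1-\theo$ from \cref{lem:spectrum} to get exponential decay in $t(\xv)$, the same Cauchy--Schwarz argument with a linear weight over the width-$O(1)$ boundary strip (the paper uses $|s(\xv)|$ where you use $|\xv|$, which are comparable via $|\xv|\cos\tfrac{\beta}{2}\leq |s(\xv)|\leq |\xv|$), the same absorption to close the self-improving bound, and the same energy identity combined with \cref{lemma:mu=Th0} for the gradient term. The technical caveats you flag (degeneracy of tubular coordinates on the bisectrix, vanishing of boundary terms, hypotheses of \cref{lem:spectrum}) are handled implicitly in the paper in the same way.
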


We give the proof of  \cref{prop:ex-min} and postpone the proof of  \cref{lem:ex-min}.\\

\begin{proof}[Proof of  \cref{prop:ex-min}]
\mbox{}	

{\it Step 1.}
For \(\ell=L^{2/3}\) and \(L\gg 1\),  choose a minimizer {\(\mdirl\in \mathcal{D}^{\mathrm{D}}_{L,\ell}\) of the functional \( \E_{\theo}[\cdot; \corner]\) (see Section \ref{sec: dirichlet corner var prob})}.  Then, \(\mdirl\) satisfies
\[|\mdirl|\leq 1\]
and 
\begin{equation}\label{eq:ex-min-PDE}
-(\nabla+\ii\Fb)^2\mdirl=\Theta_0(1-|\mdirl|^2)\mdirl
\end{equation}
with magnetic Neumann boundary condition on \(t(\xv)=0\) and Dirichlet boundary condition 
on \(\{t(\xv)=\ell\}\cup\{s(\xv)=\pm L\}\). For later use, we mention the following identity that results from  \eqref{eq:ex-min-PDE} after taking the inner product with $\mdirl$ and doing an integration by parts,
\bml{\label{eq:ex-min-0}
\ecordl(\theo)=\int_{\Gamma_\beta(L,\ell)} \diff \xv \bigl( |(\nabla+i \Fb)\mdirl|^2-\theo|\mdirl|^2+\tfrac{\theo}{2}|\mdirl|^4\bigr)\\
=-\tfrac{\theo}{2}\int_{\Gamma_\beta(L,\ell)} \diff \xv
\,|\mdirl|^4.
}
 
{\it Step 2.}
Let \(R>0\) and \(K_R=\mathcal B_R\cap \mathcal S_\beta\).  Choosing \(L,\ell\) sufficiently large,    we get that \(K_R\subset K_{2R}\subset \Gamma_\beta(L,\ell)\).  By Lemma~\ref{lem:ex-min} {and the elliptic estimate \cite[Eq. (4.1.2) and Thm.  4.3.1.4]{Gr}
\bdm
	\|u\|_{H^2(\Omega)}\leq C_*\|\Delta u\|_{L^2(\Omega)}  
\edm
for some $ C_* $ depending on the domain $ \Omega $,} we get that,   \(\mdirl\) is bounded in \(H^2(K_R)\). By Cantor's  diagonal  argument and compactness,  we get a function \(\psi_*\in H^2_{\rm loc}(\mathcal S_\beta )\) such that
\begin{equation}\label{eq:comp-1}
\mdirl\to \psi_* \mbox{ strongly in }H^1(K_R)\mbox{ and weakly in }H^2(K_R).
\end{equation}

{\it Step 3.}
Let us verify that \(\psi_*\in W^{1,2}_{\Fb}(\mathcal{S}_{\beta}) \).  For all \(R>0\),  we have
\[\int_{K_R} \diff\xv\,|(\nabla+\ii\Fb) \mdirl|^2\leq C,\]
and by \eqref{eq:comp-1},
\[ \int_{K_R} \diff\xv\,|(\nabla+\ii\Fb)\psi_*|^2\leq C.\]
We take \(R\to+\infty\) and use monotone convergence,  to deduce that
\[ \int_{\mathcal S_\beta} \diff\xv\,|(\nabla+\ii\Fb)\psi_*|^2\leq C.\]
In a similar fashion,  we establish that 
\[ \int_{\mathcal S_\beta} \diff\xv\,|\psi_*|^2\leq C. \]
This proves that \(\psi_*\in W^{1,2}_{\Fb}(\mathcal{S}_{\beta}) \).

{\it Step 4.}
We use {\eqref{eq: power law decay} in  \cref{lem:ex-min}} to write
\[ \int_{\Gamma_\beta(L,\ell)\setminus K_R} \diff\xv\,|\psi_*|^4\leq \frac{C}{R^2}.\]
By \eqref{eq:ex-min-0},
\[ -\tfrac{2}{\theo} \ecordl(\theo)=\int_{\Gamma_\beta(L,\ell)} \diff\xv\,|\mdirl|^4
=\int_{K_R} \diff\xv\,|\mdirl|^4+\int_{\Gamma_\beta(L,\ell)\setminus K_R} \diff\xv\,|\mdirl|^4,\]
and
\[  \left|\int_{ K_R} \diff\xv\,|\mdirl|^4+\tfrac2\theo \ecordl(\theo)\right|\leq \frac{C}{R^2}.\]
Taking \(L\to+\infty\),  we get by Lemma~\ref{lemma:mu=Th0} and  \eqref{eq:comp-1},
\[ \left|\int_{K_R} \diff\xv\,|\psi_*|^4+\tfrac2\theo \edir(\theo)\right|\leq \frac{C}{R^2} . \]
Taking \(R\to+\infty\),  we get by monotone convergence,
\[ \int_{\mathcal S_\beta} \diff\xv\,|\psi_*|^4=-\tfrac2\theo \edir(\theo).\]

{\it Step 5.}
Since \(\psi_*\in W^{1,2}_{\Fb}(\mathcal S_\beta)\),  we have by \eqref{eq:comp-1} and \eqref{eq:ex-min-PDE} that \(\psi_*\) is a critical point of the functional \(\fsec\).  Thus,
\[\fsec(\psi_*)=-\tfrac{\theo}{2}\int_{\mathcal S_\beta}\diff \xv
\,|\psi_*|^4.\]
By {\it Step  4} and \cref{prop:sb-theta0},  we eventually have
\(\fsec(\psi_*)=\esec(\theo)\).
This proves that \(\psi_*\) is a minimizer of \(\fsec\).
\end{proof}

\begin{proof}[Proof of \cref{lem:ex-min}]
\mbox{}	

{\it Step 1.}
Given \(\chi\in C_c^{\infty}(\mathbb R^2)\), we will use extensively  the identity
\begin{equation}\label{eq:ex-min-1}
\int_{\Gamma_\beta(L,\ell)} \diff \xv \, \lf\{ |(\nabla+i \Fb)\chi\mdirl|^2-\Theta_0|\chi\mdirl|^2+\Theta_0\chi^2|\mdirl|^4 \ri\} =\int_{\Gamma_\beta(L,\ell)}\diff\xv\, |\nabla\chi|^2|\mdirl|^2,
\end{equation}
which is a direct consequence of the variational equation \eqref{eq:ex-min-PDE} satisfied by $ \mdirl $ (we take the inner product with $\chi^2\psi_{L,\ell}^D$ and integrate).

{\it Step 2.}
Choose \(\alpha>0\) such that \(1-\theo-\alpha^2>0\) and choose \(\chi\) such that 
\[ \chi(\xv)=\begin{cases}e^{\alpha t(\xv)}&\mbox{ if } t(\xv)\geq 2\\
0&\mbox{ if }t(\xv)\leq 1\end{cases}.\]
Then, by Lemma~\ref{lem:spectrum} and \eqref{eq:ex-min-1}, we have
\[\int_{t(\xv)\geq 2} \diff \xv \lf\{ (1-\Theta_0)|\chi\mdirl|^2+\Theta_0\chi^2|\mdirl|^4 \ri\} \leq \alpha^2\int_{t(\xv)\geq 2} \diff\xv\, |\chi|^2|\mdirl|^2+C. \]
This yields the following estimate,
\begin{equation}\label{eq:ex-min-2}
 \int_{t(\xv)\geq 2}\diff\xv \lf\{ |\mdirl|^2+|\mdirl|^4 \ri\} e^{ 2 \alpha t(\xv)}\leq C.
\end{equation}

{\it Step 3.}
We estimate the integrals of \(|\Psi|^2\) and \(|\Psi|^4\) in the region \(\{t(\xv)\leq 2\}\).  In \eqref{eq:ex-min-1}, we choose \(\chi\) such that 
\[ \chi(\xv)=\begin{cases}
0&\mbox{ if } 0\leq |s(\xv)|\leq 1\\
|s(\xv)|&\mbox{ if } 2\leq |s(\xv)|\leq L-1\\
L-1&\mbox{ if }|s(\xv)|\geq L-1
\end{cases}.\]
Then, by Lemma~\ref{lem:spectrum} and \eqref{eq:ex-min-1}, we have
\begin{equation}\label{eq:ex-min-3}
\Theta_0\int_{|s(\xv)|\geq 1} \diff \xv\, \chi^2|\mdirl|^4\leq C \lf( \int_{1 \leq |s(\xv)|\leq L-1} \diff\xv \, |\mdirl|^2+ 1 \ri).
\end{equation}
We decompose the integral of \(|\mdirl|^2\) as follows
\bmln{ \int_{1\leq |s(\xv)|\leq L-1} \diff\xv\, |\mdirl|^2 = C \int_{\{1 \leq |s(\xv)|\leq L-1\}\cap \{t(\xv)\leq 2\}} \diff\xv \, |\mdirl|^2 \\
+\int_{\{1\leq |s(\xv)|\leq L-1\}\cap \{t(\xv)\geq 2\}} \diff\xv\, |\mdirl|^2.
}
By \eqref{eq:ex-min-2},
\[\int_{\{1 \leq |s(\xv)|\leq L-1\}\cap \{t(\xv)\geq 2\}} \diff\xv \, |\mdirl|^2\leq C,\]
for some $ C$  independent from \(L,\ell\).  Observing that
\[  \int_{\{1\leq |s(\xv)|\leq L-1\}\cap \{t(\xv)\leq 2\}}\diff\xv\,\frac1{\lf|s(\xv)\ri|^2}\leq C, \] 
we get by Cauchy-Schwarz inequality,
\bmln{ 
	\int_{\{1 \leq |s(\xv)|\leq L-1\}\cap \{t(\xv)\leq 2\}} \diff\xv\, |\mdirl|^2 =\int_{\{2\leq |s(\xv)|\leq L-1\}\cap \{t(\xv)\leq 2\}}\diff\xv\,\frac1{|s(\xv)|} |s(\xv)| |\mdirl|^2 + C \\
	\leq \left(\int_{\{2\leq |s(\xv)|\leq L-1\}}\diff\xv\, s^2(\xv)|\mdirl|^4\right)^{1/2} + C.
}
Inserting this into \eqref{eq:ex-min-3}, we get
\[ \bigl( \theo-\tfrac12\bigr)\int_{|s(\xv)|\geq2} \diff \xv\, \chi^2|\mdirl|^4\leq C\]
and
\[\int_{\{2\leq |s(\xv)|\leq L-1\}\cap \{t(\xv)\leq 2\}} \diff\xv \, |\mdirl|^2\leq C.\]
Eventually,  by using  \eqref{eq:ex-min-2} and the uniform bound \(|\mdirl|\leq 1\), we deduce that for \(L,\ell\gg 1\),
\begin{equation}\label{eq:ex-min-4}
 \int_{\Gamma_\beta(L,\ell)}\diff\xv\, \lf\{ |\mdirl|^2+|\mdirl|^4 \ri\} \leq  C,
 \end{equation}
 and
 \begin{equation}\label{eq:ex-min-4*}
 \int_{\Gamma_\beta(L,\ell)}\diff\xv \,  s^2(\xv) |\mdirl|^4\leq C.
 \end{equation}
Finally,  we use \eqref{eq:ex-min-0} with \(\chi=1\) and get
\begin{equation}\label{eq:ex-min-4**}
 \int_{\Gamma_\beta(L,\ell)}\diff\xv\, \lf|(\nabla+\ii\Fb)\mdirl \ri|^2\leq C.
\end{equation}

{\it Step 4.}
Observing that,  in \(\Gamma_\beta(L,\ell)\),
\[ |\xv|\cos\tfrac{\beta}{2}\leq |s(\xv)|\leq |\xv|,\]
we infer from \eqref{eq:ex-min-4*},
\[ \int_{\Gamma_\beta(L,\ell)}\diff\xv\,   |\xv|^2 |\mdirl|^4\leq \frac{ C}{\cos^2\frac{\beta}{2}}. \] 
\end{proof}

\subsection{Proof of \cref{theo: main} and \cref{corollary}}

{By \cref{pro: lower bound} and \cref{pro: upper bound},  we have
\[ \liminf_{\mu\to\theo^+}\esec(\mu)= \limsup_{\mu\to\theo^+}\esec(\mu)=\ecor^{\rm D}(\theo).\]
This proves that $\lim_{\mu\to\theo^+}\esec(\mu)$ exists. Moreover,  by \cref{prop:mu=Th0}, 
$\lim_{\mu\to\theo^-}\ecor(\mu)$ exists too and equals the r.h.s. of the above identity.   \cref{theo: main},  thanks to \cref{prop:sb-theta0}.  Finally, the statement of \cref{corollary} follows from \cref{prop:sb-theta0} and  \cref{prop:ex-min}.}

\bigskip

\begin{footnotesize}
\noindent
\textbf{Acknowledgments.} The authors acknowledge the support of the Istituto Nazionale di Alta Matematica “F. Severi”, through the Intensive Period “INdAM Quantum Meetings (IQM22)” at Politecnico di Milano. A.K. is partially supported by  CUHKSZ, grant no.  UDF01003322. M.C. acknowledges the supports of PNRR Italia Domani and Next Generation Eu through the ICSC National Research Centre for High Performance Computing, Big Data and Quantum Computing, and of the MUR grant ``Dipartimento di Eccellenza 2023-2027'' of Dipartimento di Matematica, Politecnico di Milano.
\end{footnotesize}

\end{document}